\journal{Journal of \LaTeX\ Templates}
\begin{document}
\renewcommand{\thefootnote}{\normalsize \arabic{footnote}} 	
\newtheorem{theo}{Theorem}
\newtheorem{conj}{Conjecture}
\newtheorem{definition}{Definition}
\newtheorem{prop}{Proposition}
\newtheorem{lem}{Lemma}
\newtheorem{cor}{Corollary}
\newtheorem{remark}{Remark}
\newtheorem{example}{Example}

\newcommand{\pw}{\mathsf{PW}_{\Omega}\rb{\R}}
\newcommand{\pwd}{\mathsf{PW}_{\bs{\Omega}}\rb{\RD}}

\newcommand{\kz}{k\in\mathbb{Z}}
\newcommand{\akz}{\forall k \in \mathbb{Z}}
\newcommand{\tu}{\mathcal{T}_u}
\newcommand{\db}{\bar{\delta}}
\newcommand\bb[1]{{\color{blue}#1}}
\newcommand{\Z}{\mathbb{Z}}
\newcommand{\N}{\mathbb{N}}
\newcommand{\R}{\mathbb{R}}
\newcommand{\LR}{L^2(\mathbb{R})}
\newcommand{\LO}{L^2([-\Omega,\Omega])}
\newcommand{\re}{\mathbb{R}}
\newcommand{\co}{\mathbb{C}}
\newcommand{\cH}{\mathcal{H}}
\newcommand{\Tu}{\mathcal{T}_u}
\newcommand{\F}{\mathcal{F}}
\newcommand{\bs}{\boldsymbol} 

\newcommand{\eq}{\triangleq}
\newcommand{\s}{ISI}
\newcommand{\xaux}{\bbs{x}^*\hspace{-0.2em}}
\newcommand{\chiaux}{\bbs{\chi}^*\hspace{-0.2em}}
\newcommand{\uchiaux}{\bbs{\uchi}^*\hspace{-0.2em}}

\newcommand{\MO}[0]{\mathscr{M}_\lambda}
\newcommand{\fig}[1]{Fig.~\ref{#1}}

\newcommand{\Sz}{\mathbb{S}_\varepsilon}
\newcommand{\tilSz}{\widetilde{\mathbb{S}}_\varepsilon}
\newcommand{\kernel}{\mathcal{B}}

\newcommand{\setsep}{\vert}

\newcommand{\cI}{\mathcal{I}}
\newcommand{\cR}{\mathcal{R}}
\newcommand{\cP}{\mathcal{P}}
\newcommand{\cB}{\mathcal{B}}

\newcommand{\RD}{\mathbb{R}^D}
\newcommand{\RDred}{\mathbb{R}^{D-1}}

\newcommand{\ZD}{\mathbb{Z}^D}
\newcommand{\ZDred}{\mathbb{Z}^{D-1}}

\newcommand{\D}{one-dimensional }
\newcommand{\Dcaps}{One-Dimensional }

\newcommand{\MD}{multi-dimensional\xspace}
\newcommand{\MDcaps}{Multi-Dimensional\xspace}

\newcommand{\MOh}[0]{\mathscr{M}_{\boldsymbol{\mathsf{H}}}}
\newcommand{\nbr}{\mathrm{Neighbors}\hspace{-0.2em}\rb{\bar{\mathbf{b}}}}

\newcommand{\df}{\mathcal{D} f}
\newcommand{\dfb}{\mathcal{D}_{\bbf{b}} f}

\newcommand{\fe}[1]{\left[\kern-0.30em\left[#1  \right]\kern-0.30em\right]}

\newcommand{\vb}[1]{\left\lvert #1 \right\rvert}
\newcommand{\rb}[1]{\left( #1 \right)}
\newcommand{\sqb}[1]{\left[ #1 \right]}
\newcommand{\cb}[1]{\left\lbrace #1 \right\rbrace}
\newcommand{\floor}[1]{\left\lfloor #1 \right\rfloor}
\newcommand{\ceil}[1]{\left\lceil #1 \right\rceil}
\newcommand{\inner}[1]{\left\langle #1 \right\rangle}
\newcommand{\bbs}[1]{\bar{\bs{ #1 }}}
\newcommand{\bbf}[1]{\bar{\mathbf{ #1 }}}

\newcommand{\xinR}{\bbs{x}\in\cR_{\bbs{b}}}
\newcommand{\uchi}{\raisebox{1.5pt}{$\chi$}}
\renewcommand\tilde{\widetilde}

\newcommand{\ab}[1]{{\color{red} #1}}
\newcommand{\dor}[1]{{\color{ForestGreen} #1}}

\DeclarePairedDelimiter{\norm}{\Vert}{\Vert}
\DeclarePairedDelimiter{\abs}{\left|}{\right|}
\DeclarePairedDelimiter{\Prod}{\langle}{\rangle}

\newcommand{\PW}[1]{\mathsf{PW}_{#1}}
\newcommand{\iPW}[2]{#1 \in \mathsf{PW}_{#2}}

\newcommand{\pdf}{p.d.f.\ }

\def\ind{\mathbbmtt{1}}

\def\figmode    {1}         
\def\PH         {0}         
\def\stabilization	{0}
\def\True		{1}

\def\BdlSlicesInAppendix {1}
\def\FoldSepInAppendix {1}
\def\WellDefInAppendix {1}
\def\ModDynRangeInAppendix {1}
\def\BoundDelfBInAppendix {1}
\def\FoldDetInAppendix {1}
\def\MbDetInAppendix {1}
\def\InputRecCleanInAppendix {1}
\def\TR                 {1}
\def\FL                 {0}

\renewcommand\hat\widehat
\renewcommand\geq\geqslant
\renewcommand\leq\leqslant

\begin{frontmatter}

\title{Multi-Dimensional Unlimited Sampling\\ and Robust Reconstruction}
\tnotetext[mytitlenote]{D.~Florescu and A.~Bhandari are with the Department of Electrical and Electronic Engineering, Imperial College London, SW72AZ, UK.
This work was supported by the UK Research and Innovation council's Future Leaders Fellowship program ``Sensing Beyond Barriers'' (MRC Fellowship award no.~MR/S034897/1). E-mails: \{D.Florescu, A.Bhandari\}@imperial.ac.uk or ayush@alum.MIT.edu. Project page for (future) release of hardware design, code and data: \href{https://bit.ly/USF-Link}{\texttt{https://bit.ly/USF-Link}}.}

\author{Dorian Florescu and Ayush Bhandari}

\begin{abstract}
In this paper we introduce a new sampling and reconstruction approach for multi-dimensional analog signals. Building on top of the Unlimited Sensing Framework (USF), we present a new
folded sampling operator called the multi-dimensional modulo-hysteresis that is also backwards compatible with the
existing one-dimensional modulo operator. Unlike previous approaches, the proposed model
is specifically tailored to \MD signals. In particular, the model uses certain redundancy in dimensions $2$ and above, which is exploited for input recovery with robustness. We prove that the new operator is well-defined and its outputs have a bounded dynamic range. For the noiseless case, we derive a theoretically guaranteed input reconstruction approach. When the input is corrupted by Gaussian noise, we exploit redundancy in higher dimensions to provide a bound on the error probability and show this drops to $0$ for high enough sampling rates leading to new theoretical guarantees for the noisy case. Our numerical examples corroborate the theoretical results and show that the proposed approach can handle a significantly larger amount of noise compared to USF. 
\end{abstract}

\begin{keyword}
Analog-to-digital conversion\sep approximation\sep bandlimited functions\sep modulo sampling\sep Shannon sampling.
\end{keyword}

\end{frontmatter}


\newpage

\tableofcontents

\newpage

\section{Introduction}

Shannon's sampling theory is the workhorse of almost all modern-world digital systems. Its practical implementation is carried out via electronic hardware, namely, the analog-to-digital converter (ADC). However, there is a gap between theory and practice which leads to a few fundamental deviations from the ideal sampling model, including, among others, quantization (see the extensive survey by Gray \& Neuhoff \cite{Gray:1998:J}) non-pointwise sampling \cite{Sun:2002:Ja} and ADC saturation. The latter deviation arises from the fact that the ADC is a physical device and hence, one can only record a fixed range of amplitudes (typically, a prescribed voltage range). This input amplitude range defines the \emph{dynamic range} (or DR) of the ADC, say $\lambda>0$. Any signal exceeding (in absolute value) $\lambda$ would result in permanent loss of information due to saturation or clipping. Mathematically, this is synonymous to  \emph{hard thresholding} \cite{Donoho:1994:J,Blumensath:2009:J}, but the difference is that it occurs in hardware and is highly undesirable. Clipped sample values lead to high frequency components, which in turn leads to aliasing. Typical solutions to the saturation problem rely on:
\begin{enumerate}[ label = (\alph*)]
\item hardware approaches such us \emph{companding} \cite{Smith:1957:J} or adaptively matching the dynamic range to the input signal range via \emph{automatic gain control}. There are also techniques that re-think ADC design (see, for example, \cite{Smaragdis:2009:J}). 
\item algorithmic approaches that aim to solve the inverse problem of \emph{de-clipping} \cite{Abel:1991,Zhang:2016} or \emph{inpainting} \cite{Adler:2012:J}. 
\end{enumerate}
Clipping or saturation is also highly relevant in the context of digital imaging, so much so that almost all modern smartphones are equipped with the \emph{High Dynamic Range} or ``HDR'' mode,  based on multiple captures that are combined numerically \cite{Debevec:1997:C}.

The progress in the last many decades has led to deepened understanding of the nuances involved with the quantization and limited DR aspects. Clearly, ADCs need to be matched to the DR of the input signal to avoid saturation or clipping. Beyond this calibration step---typically addressed by the engineers---there is an additional challenge: higher dynamic range requires a higher number of bits to achieve a given resolution; this in turn leads to a higher power consumption in the ADC, thus highlighting the integral role of DR in digital acquisition.

\begin{figure}[!t]
    \centering
    \includegraphics[width=1\textwidth]{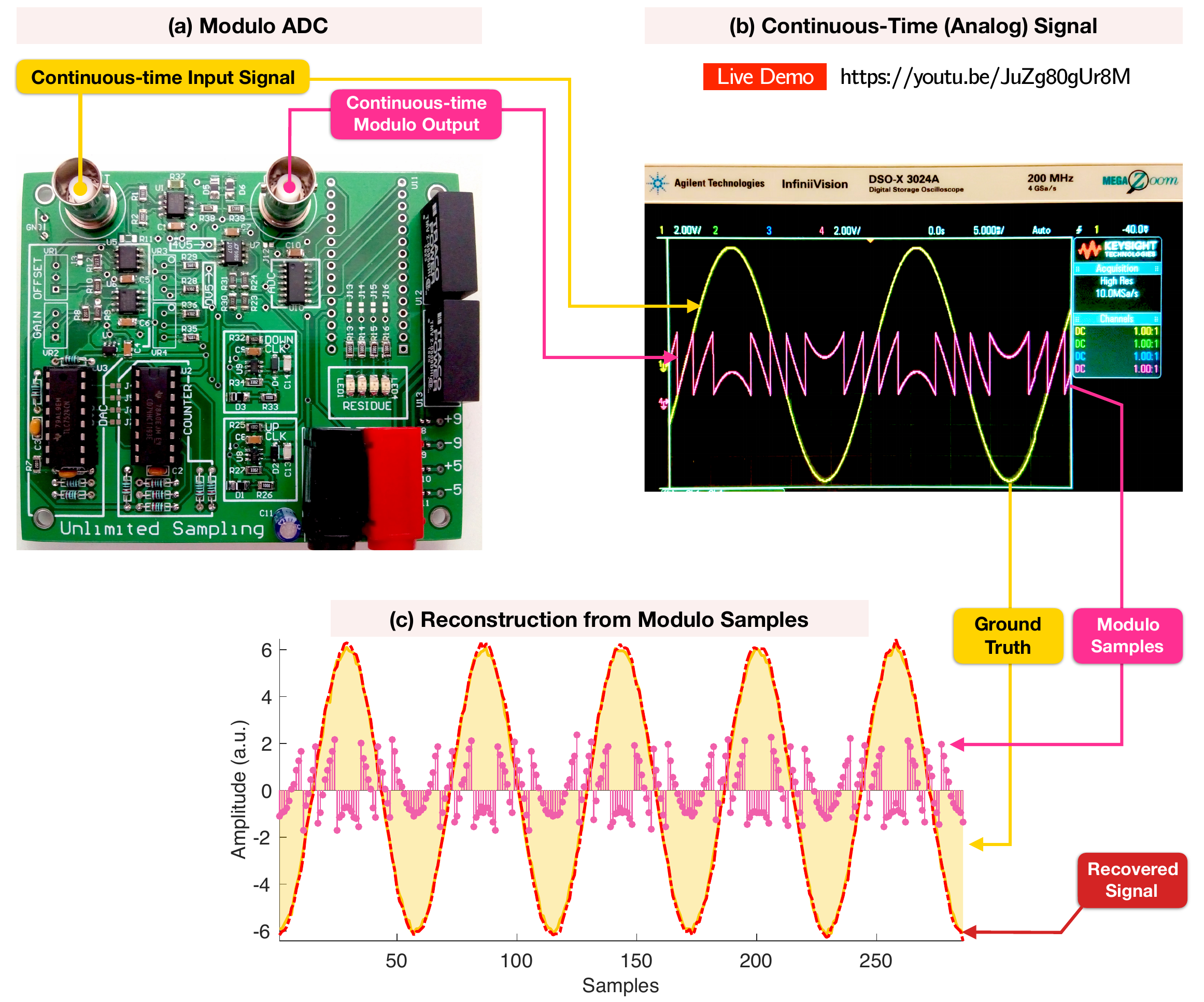}
    \caption{Signal acquisition and recovery pipeline for Unlimited Sensing Framework \cite{Bhandari:2017:C,Bhandari:2020:Ja}. (a) Modulo sampling hardware \cite{Bhandari:2021:J}. (b) Continuous-time signal waveforms on an oscilloscope. (c) Signal reconstruction using recovery algorithms \cite{Bhandari:2017:C,Bhandari:2020:Ja,Bhandari:2021:J,Florescu:2022:J}.}
\label{fig:USFPipeline}
\end{figure}

\subsection{Unlimited Sensing Framework (USF)}
Recently, the Unlimited Sensing Framework (USF) \cite{Bhandari:2017:C,Bhandari:2018:Ca,Bhandari:2018:C,Bhandari:2019:C,Bhandari:2020:Pata,Bhandari:2020:Ja,Bhandari:2021:J} has been proposed in the literature that serves as an alternative digital acquisition protocol for avoiding the DR limitation in conventional ADCs. The USF is based on a joint design of hardware and mathematical algorithms. 

\begin{itemize}
  \item In hardware, the modulo non-linearity ensures that HDR inputs are folded back in to the ADC's DR; this is because the modulo threshold is chosen such that the modulo ADC's range is bounded by $\lambda$. Consequently, the modulo ADC results in \emph{folded samples}. 
  \item To recover the HDR input from folded, modulo samples, mathematically guaranteed recovery algorithms are deployed. 
\end{itemize}

Similar to the Shannon--Nyquist sampling criterion where a higher input bandwidth can be traded off for higher sampling rates, it was shown that HDR signals can also be tackled by sampling more densely. This is made precise by the following theorem. 
\def\e{\mathrm{e}}
\begin{theo}[Unlimited Sampling Theorem \cite{Bhandari:2017:C}]
\label{th:USF}
Let $f(t)$ be a continuous-time function with maximum frequency $\Omega$ (rads/s). Then, a sufficient condition for recovery of $f( t )$ from its modulo samples (up to an additive constant) taken every $T$ seconds apart is $T\leqslant 1/ \left( 2 \Omega \e\right)$ where $\e$ is Euler's number.
\end{theo}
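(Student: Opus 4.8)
The plan is to reduce the continuous-time problem to a discrete one on the samples and then exploit the arithmetic structure of the modulo map. Writing $\MO$ for the centered modulo operator with range $[-\lambda,\lambda)$, I would first record the pointwise identity $f(t) = \MO\rb{f(t)} + \varepsilon_f(t)$, where the residual $\varepsilon_f = f - \MO\rb{f}$ takes values only in $2\lambda\Z$. Sampling every $T$ seconds and setting $\gamma[k] = f(kT)$, $y[k] = \MO\rb{f(kT)}$, this becomes $\gamma[k] = y[k] + \varepsilon_f[k]$ with $\varepsilon_f[k]\in 2\lambda\Z$ and $\norm{\varepsilon_f}_\infty \le \beta_f + \lambda$, where $\beta_f = \norm{f}_\infty$. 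Since $1/\rb{2\Omega\e} < \pi/\Omega$, the chosen rate is above Nyquist, so recovering the sample sequence $\gamma$ is equivalent (up to the stated additive constant) to recovering $f$ itself by Shannon interpolation; the whole problem thus reduces to reconstructing the $2\lambda\Z$-valued sequence $\varepsilon_f$ from the known samples $y$.

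The key observation is that finite differences annihilate the dynamic range of a bandlimited function while preserving the lattice structure of the residual. Let $\Delta$ denote the forward-difference operator $\rb{\Delta c}[k] = c[k+1]-c[k]$. Because $\iPW{f}{\Omega}$, Bernstein's inequality gives $\norm{f^{(N)}}_\infty \le \Omega^N \beta_f$, and relating the $N$-th sample difference to the $N$-th derivative yields the decay estimate $\norm{\Delta^N \gamma}_\infty \le \rb{T\Omega\e}^N \beta_f$. Under the hypothesis $T\le 1/\rb{2\Omega\e}$ one has $T\Omega\e \le \tfrac12$, so the right-hand side is driven below $\lambda$ by choosing $N$ on the order of $\log_2\rb{\beta_f/\lambda}$. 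For such an $N$ the $N$-th difference lies in $[-\lambda,\lambda)$ and is therefore a fixed point of $\MO$; since $\Delta^N$ maps $2\lambda\Z$-valued sequences into $2\lambda\Z$, the difference $\Delta^N\varepsilon_f$ is a multiple of $2\lambda$ and is invisible to $\MO$. Hence $\Delta^N\gamma = \MO\rb{\Delta^N y}$, which recovers $\Delta^N\gamma$ from data, and subtracting gives the known sequence $\Delta^N\varepsilon_f = \MO\rb{\Delta^N y} - \Delta^N y$.

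It then remains to invert $\Delta^N$ on the residual. Anti-differencing determines a sequence only up to an additive constant at each of the $N$ stages, i.e.\ up to a discrete polynomial of degree $N-1$; the point is that $\varepsilon_f$ is bounded and $2\lambda\Z$-valued, and a bounded discrete polynomial must be constant, so all but the zeroth-order coefficient are forced to vanish. Concretely I would recover $\Delta^{j-1}\varepsilon_f$ from $\Delta^{j}\varepsilon_f$ stage by stage, fixing the unknown integration constant by rounding the running partial sums to the nearest multiple of $2\lambda$ consistent with the a priori bound on $\norm{\varepsilon_f}_\infty$. This leaves exactly one global constant undetermined, matching the ``up to an additive constant'' in the statement. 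The main obstacle is precisely this last step: one must show the rounding selects the correct integer multiple of $2\lambda$ at every stage, which forces a quantitative budget linking $N$, $\lambda$ and $\beta_f$; reconciling that budget with the difference-decay estimate is what pins the admissible rate down to $T\le 1/\rb{2\Omega\e}$, with the factor $2$ supplying the geometric margin $T\Omega\e\le\tfrac12$ needed to keep the propagated constants controlled.
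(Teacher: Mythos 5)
First, a point of reference: this paper never proves Theorem~\ref{th:USF}; it imports it as background from \cite{Bhandari:2017:C,Bhandari:2020:Ja}, so your proposal has to be compared with the proof given in those references. Your architecture is the same as that proof's: the decomposition $\gamma\sqb{k}=y\sqb{k}+\varepsilon_\gamma\sqb{k}$ with $\varepsilon_\gamma$ valued in $2\lambda\Z$, the contraction estimate $\norm{\Delta^N\gamma}_\infty\leq\rb{T\Omega\e}^N\norm{f}_\infty$ via Bernstein's inequality, the observation that for $T\Omega\e\leq1/2$ and $N$ of order $\log_2\rb{\norm{f}_\infty/\lambda}$ one has $\Delta^N\gamma=\MO\rb{\Delta^N y}$ (because $\MO$ is $2\lambda$-periodic and acts as the identity on $[-\lambda,\lambda)$), hence $\Delta^N\varepsilon_\gamma$ is known from the data, followed by inversion of $\Delta^N$ and Shannon interpolation, which is legitimate since $1/\rb{2\Omega\e}$ is well above the Nyquist rate. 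Up to that point your proposal is correct and is exactly the published route.

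The genuine gap is in the step you yourself single out: fixing the integration constants when inverting $\Delta^N$. The mechanism you propose --- ``rounding the running partial sums to the nearest multiple of $2\lambda$ consistent with the a priori bound on $\norm{\varepsilon_f}_\infty$'' --- cannot work as stated. The partial sums of a $2\lambda\Z$-valued sequence are themselves already in $2\lambda\Z$, so rounding them is vacuous; moreover, every candidate constant in $2\lambda\Z$ preserves the lattice structure, so that structure cannot discriminate between candidates. The sup-norm bound does not rescue this: knowing $\Delta^{j-1}\varepsilon_\gamma$ up to an additive constant together with a bound $B$ on its sup norm localizes the constant only to an interval of length $2B$ minus the oscillation of the known part, which in general contains many multiples of $2\lambda$ (and at intermediate stages the relevant bound is on $\Delta^{j-1}\varepsilon_\gamma$, not on $\varepsilon_\gamma$). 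The cited proof closes this hole with a genuinely quantitative device that your sketch lacks: a wrong constant $2\lambda m$ committed at one stage produces, after one further anti-difference, an error $2\lambda m k$ that grows linearly in $k$, so over a window of length on the order of $\beta/\lambda$ samples (where $\beta\geq\norm{f}_\infty$ is an a priori known bound, taken as a multiple of $2\lambda$) it dominates the bounded true sequence and is identified by an explicit rounding-of-averages formula; this is where the knowledge of $\beta$ and the choice of $N$ enter the budget. Alternatively, since the theorem only asserts that recovery is possible, you could bypass constant determination entirely with a uniqueness argument built from pieces you already have: if two admissible inputs give the same folded samples, their sample difference is $2\lambda\Z$-valued, its $N$-th finite difference is simultaneously in $2\lambda\Z$ and of sup norm below $2\lambda$, hence identically zero, so the difference is a bounded discrete polynomial, hence a constant; uniqueness of bounded bandlimited interpolation above the Nyquist rate then finishes the proof. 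Either repair is required; as written, the per-stage rounding does not select the correct multiple of $2\lambda$.
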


Thus, the USF addresses a major bottleneck in physical sensors by allowing the recovery of inputs beyond the sensor dynamic range. A first validation of the USF with experiments based on a modulo ADC were presented in \cite{Bhandari:2021:J}. In particular, it was shown that signals as large as $\approx 25\lambda$ can be recovered in a laboratory setup. The full sampling and reconstruction pipeline for the USF is shown in \fig{fig:USFPipeline}.

The initial works based on USF tackled signals supported on the real line spanned in a bandlimited \cite{Bhandari:2017:C,Bhandari:2020:Ja} or spline spaces \cite{Bhandari:2020:Ca}. There are also methods {to recover signals with compressive priors} \cite{Shah:2019:C,Musa:2018:C} and using wavelet filters \cite{Rudresh:2018:C}. Further extensions of the  USF include compactly supported inputs \cite{Bhandari:2021:J}, sparse signals \cite{Bhandari:2022:J} and also new acquisition models \cite{Florescu:2022:J}. 

{A new acquisition model called modulo-hysteresis was introduced in \cite{Florescu:2021:C} and further discussed in \cite{Florescu:2022:J}, which considers  hardware non-idealities and enables new recovery guarantees. The modulo-hysteresis was also implemented in a hardware prototype \cite{Florescu:2022:J}.}
This line of work also paved the path to novel and exciting low-power acquisition neuromorphic applications \cite{Florescu:2021:C,Florescu:2022:Ca}. 

The methods discussed so far assume that the input is \D\hspace{-0.2em}. However, in many applications, such as photography \cite{Bhandari:2020:Ca}, X-ray imaging or Computed Tomography \cite{Beckmann:2022:J}, the input signal is \MD.

\paragraph{Motivation for a \MD model} 
There have been attempts to address \MD inputs with modulo architectures by rasterizing and processing the signal line-by-line in imaging \cite{FernandezMenduina:2021:J,Bhandari:2020:Ca,Beckmann:2022:J} or for lattice sampling \cite{Bouis:2020:C}. However, the existing modulo sampling approaches for \MD data are based on a \D modulo operator that is applied sequentially on the slices of a \MD input. This considers each slice a distinct signal, and does not exploit that they are all part of a \MD input. In other words, this modulo operation represents a separable transformation that does not exploit the \MD nature of the input. Furthermore, it is known that non-separable transformations represent much more powerful tools in analysing \MD data \cite{Zhao:2018:J,Cohen:1993:J}.

We consider only the problem of input recovery for noisy inputs, distinct from that of input denoising, which was addressed before for modulo sampling \cite{Tyagi:2022:J,Fanuel:2021}. Methods such as USF recover the noise corrupted input samples while keeping the noise sequence intact. However, USF (Theorem \ref{th:USF}) is fundamentally restricted to work with noise amplitudes smaller than the modulo threshold. When this requirement is not satisfied, the input recovery is heavily distorted. This limitation is carried over to the existing attempts to apply USF to \MD data.

\paragraph{Contributions} 
Here we present a modulo model that exploits the \MD structure of the data in the encoding process. Specifically, via \MD sampling in $D$ dimensions, we are able to dedicate a $D-1$-dimensional subspace to deal with noise reduction, leaving dimension $d=1$ for estimating the modulo folds. Specifically, our contributions are below:
\begin{enumerate}[leftmargin = *,label = $\textrm{C}_{\arabic*})$]
\item We introduce a generalized $D$-dimensional modulo operator for sampling on a lattice.
\item We prove that the operator is well-defined and the folding discontinuities are located along directions given by the lattice vectors.
\item We provide recovery guarantees under noiseless assumption.
\item Under Gaussian noise assumption, we provide an upper bound on the recovery error probability that drops to $0$ for high enough sampling rates.
\item Using a numerical study we show that the proposed model offers significantly better noise robustness than USF.
\end{enumerate}

\paragraph{Notation}

For $x\in\R$, $[\![ x ]\!]=x- \lfloor x \rfloor$ denotes the fractional part of $x$ and $\floor{x}$ is the floor function. For a set $\mathbb{S}$, $\ind_{\mathbb{S}}$ is the indicator function and $\mathrm{cl}\rb{\mathbb{S}}$ is the set closure. The set of real and integer numbers are $\R$ and $\Z$, respectively. Let $\R^*=\R\setminus\cb{0}$ and $\Z^*=\Z\setminus\cb{0}$, and let the sets restricted to positive numbers $x\geq0$ be $\R_+$ and $\Z_+$. We denote by $\emptyset$ the empty set.

We use bold lowercase for vectors such as $\mathbf{x}=\sqb{x_1,\dots,x_D}^\top$, assumed to be column vectors unless otherwise specified. Matrices are denoted by bold uppercase, e.g., $\mathbf{M}\in\R^{D\times D}$. The element on line $k_1$ and column $k_2$ in matrix $\mathbf{M}$ is denoted by $\sqb{\mathbf{M}}_{k_1,k_2}$. Unless specified otherwise, we use notation $\mathbf{x}$ to denote a vector $\mathbf{x}\in\ZD$ and  $\bbf{x}$ to denote a vector $\bbf{x}\in\ZDred$. When used in the same context, $\bbf{x}$ denotes the last $D-1$ coordinates of $\mathbf{x}$ such that $\mathbf{x}=\sqb{x_1,\bbf{x}}$. Similarly, we denote by $\bbf{V}\in \R^{D\times \rb{D-1}}$ a matrix containing the last $D-1$ columns of matrix $\mathbf{V}\in\R^{D\times D}$. For two vectors $\mathbf{v}_1,\mathbf{v}_2\in\RD$ we denote their inner product by $\inner{\mathbf{v}_1,\mathbf{v}_2} =\sum_{d=1}^D v_{1,d}\cdot\mathtt{conj}({v}_{2,d})=\mathbf{v}_1^\top \cdot \mathtt{conj}(\mathbf{v}_1)$, where $\mathtt{conj}$ is the complex-conjugate. Norm $\norm{\mathbf{v}}_2$ denotes the Euclidean norm for a vector $\mathbf{v}\in\RD$ and norm $\norm{\mathbf{v}}_\infty$ is defined as $\norm{\mathbf{v}}_\infty=\max_{d=1,\dots,D}\vb{\sqb{\mathbf{v}}_d}$. We denote by $\det\rb{\mathbf{V}}$ the determinant of matrix $\mathbf{V}$. We denote by $\mathbf{T}=\mathrm{diag}\cb{T_1,\dots,T_D}$ a matrix with $T_1,\dots,T_D$ on the main diagonal and $0$ otherwise.

For a function $f:\RD\rightarrow\R$, $\norm{f}_2$ and $\norm{f}_\infty$ represent the $L^2\rb{\R^D}$ and $L^\infty\rb{\R^D}$ norms, respectively. We denote by $\mathcal{F} f$ the Fourier transform applied to $f$, defined as
\begin{equation}
    \mathcal{F} f\rb{\boldsymbol{\omega}} = \int_{\mathbb{R}^D} f\rb{\mathbf{x}} e^{-\jmath \inner{\bs{\omega}, \mathbf{x}}} d\mathbf{x},
\end{equation}    
where $\mathbf{x}=\sqb{x_1,\cdots,x_D}^\top$ and $\boldsymbol{\omega}=\sqb{\omega_1,\cdots,\omega_D}^\top$. The inverse Fourier transform $\mathcal{F}^{-1} F$ is defined as
\begin{equation}
    \mathcal{F}^{-1} F\rb{\mathbf{x}} = \frac{1}{\rb{2\pi}^D}\int_{\mathbb{R}^D} F\rb{\boldsymbol{\omega}} e^{\jmath \inner{\bs{\omega}, \mathbf{x}}} d\boldsymbol{\omega}.
\end{equation}    
The support of sequence $\psi$ is denoted by $\mathrm{supp}\rb{\psi}$ and the support of a function $f\rb{\mathbf{x}}$ is $\mathrm{supp}\rb{f}$. For two \MD sequences $\gamma_1, \gamma_2:\ZD\rightarrow \R$, $\inner{\gamma_1,\gamma_2}$ denotes the \MD inner product defined as $\inner{\gamma_1,\gamma_2}=\sum_{\mathbf{k}\in\ZD} \gamma_1\sqb{\mathbf{k}} \mathtt{conj}(\gamma_2\sqb{\mathbf{k}})$. The coefficients for the forward finite difference of order $N$ are denoted by
$\Delta^N\sqb{k}$. Specifically, it is defined as $\Delta^N\sqb{k}=\Delta_-^N\sqb{-k}$, where $\Delta^N_-$ is defined recursively as $\Delta^{N+1}_-\sqb{k}=\Delta^N_- \ast \Delta^1_-\sqb{k}$, where $\Delta^1_-\sqb{-1}=1, \Delta^1_-\sqb{0}=-1$, $\Delta^1_-\sqb{k}=0, k\in\Z\setminus\cb{-1,1}$.

We denote by $\mathbf{v}_1,\dots,\mathbf{v}_D$ the set of vectors defining a lattice $\bs{\Lambda}=\cb{\mathbf{V T k},\mathbf{k}\in\ZD}$ where $\mathbf{V}=\sqb{\mathbf{v}_1,\dots,\mathbf{v}_D}, \mathbf{v}_d\in\R^D, \forall d \in\cb{1,\dots,D}$ and $\mathbf{T}=\mathrm{diag}\cb{T_1,\dots,T_D}$, where $T_d$ denotes the sampling period across dimension $d\in\cb{1,\dots,D}$. Without loss of generality, we assume that $\mathbf{v}_d$ are \emph{versors}, i.e., $\norm{\mathbf{v}_d}_{2}=1,\forall d \in\cb{1,\dots,D}$. The vectors are assumed linearly independent, and thus $\mathbf{V}$ is invertible. Therefore, a function $f:\RD\rightarrow \R$ can be equivalently evaluated using Cartesian coordinates as $f\rb{\mathbf{x}_c}$ or  lattice coordinates as $f\rb{\mathbf{Vx}_v}$ such that $\mathbf{x}_v=\mathbf{V}^{-1} \mathbf{x}_c$. Unless specified otherwise, $\gamma\sqb{\mathbf{k}}, \mathbf{k}\in\ZD$, denote the samples of the input function $f$ on lattice $\bs{\Lambda}$ such that $\gamma\sqb{\mathbf{k}}=f\rb{\mathbf{VTk}}$. 
The dual lattice $\hat{\bs{\Lambda}}$ is defined as $\hat{\bs{\Lambda}}=\cb{\hat{\mathbf{V}} \mathbf{T}^{-1} \mathbf{k}\setsep \mathbf{k}\in\Z^D}$, where  $\hat{\mathbf{V}}=\mathbf{V}^{-\top}=\rb{\mathbf{V}^{-1}}^\top$ and $\mathbf{T}^{-1}=\mathrm{diag}\cb{1/T_1,\dots,1/T_D}$. 

The $D$-dimensional Paley-Wiener space $\pwd$ of bandwidth $\bs{\Omega}$ relative to lattice $\bs{\Lambda}$ consists of functions $f\in L^2\rb{\mathbb{R}^D}$ such that
\begin{equation}
\label{eq:pwd}
    \mathrm{supp} \rb{\mathcal{F} f} \subseteq \cb{{\hat{\mathbf{V}}}\bs{\omega}=\sum_{d=1}^D \omega_d\hat{v}_d\in\mathbb{R}^D \setsep \vb{\omega_d}<\Omega_d,\forall d \in \cb{1,\dots,D}}.
\end{equation}
For the one-dimensional case $D=1$, the lattice matrices $\hat{\mathbf{V}}$ reduce to the trivial case $\hat{\mathbf{V}}=\mathbf{V}=\sqb{1}$ and $\pwd$ reduces to the classical \D Paley-Wiener space $\pw$.

For a random variable $\eta$ we denote by \pdf its probability density function. We denote by $\mathcal{N}\rb{\mu,\sigma^2}$ the normal distribution with mean $\mu$ and standard deviation $\sigma$. A random variable $\eta$ drawn from the Gaussian normal distribution is denoted as $\eta\sim\mathcal{N}\rb{\mu,\sigma^2}$.

\newpage

\section{Modulo acquisition and recovery}

\subsection{Recovery from \D modulo data}
\label{subsect:USF}
The centered modulo with threshold $\lambda$ is a function $\MO:\mathbb{R}\rightarrow\mathbb{R}$ satisfying \cite{Bhandari:2020:Ja}
\begin{equation}
\label{eq:ideal_mod}
\MO \rb{x} = 2\lambda \left( {\fe{ {\frac{x}{{2\lambda }} + \frac{1}{2} } } - \frac{1}{2} } \right).
\end{equation}
When applied to a one-dimensional function $g$ the modulo non-linearity generates values $\MO \rb{g\rb{t}} \in \sqb{-\lambda,\lambda} $. 

In analogy to Shannon-Nyquist sampling theory, the first recovery result in the Unlimited Sensing Framework (USF) utilized bandlimited inputs, namely, $g \in \PW{\Omega}$. In the noiseless scenario, the \emph{unlimited sampling theorem} \cite{Bhandari:2017:C,Bhandari:2020:Ja} guarantees that the input of the ideal modulo encoder can be recovered from the output samples provided that the sampling period satisfies $T<\frac{1}{2\Omega e}$. Furthermore, reconstruction is also possible in the case of data corrupted by bounded noise if the following is true \cite{Bhandari:2020:Ja}
\begin{equation}
\label{eq:gamma_bound_hyst}
\rb{T\Omega e}^N g_\infty + 2^N \eta_\infty<\lambda.
\end{equation}

The recovery approach used in \cite{Bhandari:2017:C,Bhandari:2020:Ja} aims to reconstruct the residual function $\varepsilon_g \rb{t}$ defined as $\varepsilon_g \rb{t}\triangleq g\rb{t}-\MO \rb{g\rb{t}}\in 2\lambda\cdot \mathbb{Z}$. In other words, for the ideal modulo encoder the values of $\varepsilon_g \rb{t}$ lie on an equally spaced grid with step $2\lambda$. However, this is not true for non-ideal modulo encoders exhibiting phenomena such as hysteresis, leading to reconstruction distortions. 

A generalized model of the modulo operator, called \emph{modulo-hysteresis}, was introduced for the one-dimensional scenario \cite{Florescu:2022:C,Florescu:2022:Cb,Florescu:2022:J}. 
Here, we generalize this model to \MD sampling. As in the \D case, we will show that modulo-hysteresis enables the separation of the folding times, which will be used in the recovery in Section \ref{sect:MD}, \ref{sect:det_disc}, and \ref{sect:inp_rec}. We begin with the definition of the one-dimensional modulo-hysteresis.
\begin{definition}[One-dimensional modulo-hysteresis]
	The operator $\MOh$ with threshold $\lambda$ and hysteresis $h\in\left[0,2\lambda/3\right)$, where $\mathbf{\mathsf{H}}=\sqb{\lambda\ h}$, generates a function $z \rb{t}=\MOh g\rb{t}$ for input $g \in \pw$, such that, for $t\geq 0$
	\begin{equation}
		\label{eq:zt}
	    z\rb{t}=g\rb{t}-\varepsilon_g\rb{t},
	\end{equation}		
	where 
	\begin{itemize}
	    \item $\varepsilon_g\rb{t}=2\lambda_h\sum\nolimits_{r=1}^{R}s_r \ind_{[\tau_r,\infty)}\rb{t}+h M,\quad$ $\lambda_h\triangleq \lambda-h/2$,
	    \item $M=\floor{\tfrac{g\rb{0}+\lambda}{h}}-1$,
	    \item $\tau_r$ and $s_r$ are the folding time and sign respectively, satisfying $\tau_0=s_0=0$ and
	     
	\end{itemize}
	\begin{align}
	    \tau_1&=\inf\cb{t > \tau_0 \setsep \MO\rb{g\rb{t}+\lambda}= 0},\nonumber\\
	    s_r&=\mathrm{sign} \rb{g\rb{\tau_r}-g\rb{\tau_{r-1}}},
		\label{eq:folds_signs}\\
		\tau_{r+1}&=\inf\cb{t> \tau_r \setsep \MO\rb{g\rb{t}-g\rb{\tau_r}+h s_r}=0},\quad r\geq1.\nonumber
	\end{align}	
	Furthermore, for $t<0$ we have $z\rb{t}=\MOh\sqb{g\rb{-\cdot}}\rb{-t}$. Let $\tau^-_r,s^-_r,r\geq1$ be the sequence of folding times and signs computed via \eqref{eq:folds_signs} for $\MOh\sqb{g\rb{-\cdot}}$. Then we define $\tau_r\triangleq -\tau^-_{-r},s_r\triangleq s^-_{-r}, r\in\Z,r<0$.
	\label{def:cont_modulo}
\end{definition}

A key property of the 1D modulo-hysteresis is the folding time separation \cite{Florescu:2022:J,Florescu:2022:C}
\begin{equation}
\tau_{r+1}-\tau_r \geq \frac{h}{\Omega \norm{g}_\infty}.
\label{eq:separation}
\end{equation}
We note that the ideal modulo, which satisfies $\MO \rb{g\rb{t}}=\MOh g\rb{t}$ for $h=0$ does not guarantee any separation via \eqref{eq:separation}. The reconstruction problem proposed aims to recover $g\rb{kT}$ from $y\sqb{k}=z\rb{kT}$. Furthermore, it was shown that this approach enables handling a number of modulo non-idealities \cite{Florescu:2022:J,Florescu:2021:Ca,Florescu:2022:C,Florescu:2021:C}.

\section{Multi-dimensional modulo sampling}
\label{sect:MD}
\subsection{Multi-dimensional lattice sampling preliminaries}

Let $f:\mathbb{R}^D\rightarrow \mathbb{R}$ be a $D$-dimensional scalar function. The data is then sampled on a lattice $\bs{\Lambda}=\{\mathbf{V} \mathbf{T} \mathbf{k}\setsep \mathbf{k}\in\Z^D\}$.  We denote the resulting samples by $\gamma\sqb{\mathbf{k}}=f\rb{\mathbf{VTk}}$.
We assume that $f$ has a Fourier transform satisfying
\[
\mathrm{supp}\rb{\mathcal{F}f}\subseteq \mathbb{D}.
\]
Upon sampling, the spectrum of $f$ is copied periodically, to produce the \MD discrete-time Fourier transform $F_{\bs{\Lambda}}\rb{\bs{\omega}}=\sum_{\mathbf{k}\in\ZD}f\rb{\mathbf{VTk}}e^{-\jmath \inner{\bs{\omega},\mathbf{k}}}$, whose support satisfies
\begin{equation}
    \mathrm{supp}\rb{F_{\bs{\Lambda}}}=\mathbf{T}\mathbf{V}^\top\cdot \underset{\mathbf{n}\in\hat{\bs{\Lambda}}}{\bigcup} \rb{\mathbb{D}+2\pi\mathbf{n}}.
\end{equation}
It was shown that $f$ can be recovered from its lattice samples $\gamma\sqb{\mathbf{k}}$ if \cite{Viscito:1991:J,Lu:2009:J}
\begin{equation}
\label{eq:lattice_rec}
    \mathbf{T}\mathbf{V}^\top\rb{\mathbb{D}+2\pi\mathbf{n}_1}\cap\mathbf{T}\mathbf{V}^\top\rb{\mathbb{D}+2\pi\mathbf{n}_2}=\emptyset,\quad\forall \mathbf{n_1},\mathbf{n_2}\in\hat{\bs{\Lambda}}, \mathbf{n_1}\neq\mathbf{n_2}.
\end{equation}

To ensure that recovery is possible, we assume that the spectrum of $f$ has a compact support satisfying $\mathbb{D}\subseteq \hat{\mathbf{V}}\prod_{d=1}^D \rb{-\Omega_d,\Omega_d}$. Formally, our assumption is $f\in \pwd$.
Then $f$ can be reconstructed from samples $\gamma\sqb{\mathbf{k}}$ if \eqref{eq:lattice_rec} is satisfied, which is sufficiently guaranteed if we replace $\mathbb{D}$ by $\hat{\mathbf{V}}\cdot\prod_{d=1}^D \rb{-\Omega_d,\Omega_d}$, for which the terms on the left-hand-side of \eqref{eq:lattice_rec} are computed as
\begin{equation}
    \mathbf{T}\mathbf{V}^\top\rb{\mathbb{D}+2\pi\mathbf{n}}=\mathbf{T}\mathbf{V}^\top\sqb{\mathbf{V}^{-\top} \prod_{d=1}^D \rb{-\Omega_d,\Omega_d}+2\pi\mathbf{n}},\quad\mathbf{n}\in\hat{\bs{\Lambda}}.
\end{equation}
We use that $\mathbf{n}=\mathbf{V}^{-\top}\mathbf{T}^{-1}\mathbf{k},\quad \mathbf{k}\in\ZD$, yielding
\begin{align}
    \mathbf{T}\mathbf{V}^\top\rb{\mathbb{D}+2\pi\mathbf{n}}= \mathbf{T} &\prod_{d=1}^D \rb{-\Omega_d,\Omega_d}+2\pi\mathbf{k},\quad\mathbf{k}\in\ZD\\
    = &\prod_{d=1}^D \rb{-T_d\Omega_d,T_d\Omega_d}+2\pi\mathbf{k}.
\end{align}
Therefore, \eqref{eq:lattice_rec} is true if
\begin{equation}
    T_d<\frac{\pi}{\Omega_d}, \quad \forall d\in\cb{1,\dots,D}.
\end{equation}

Just as in the \D case, considering the problem of sensor saturation motivates using the concept of modulo folding also in the \MD case. Next we go through some of the attempts to apply modulo for \MD data.

\subsection{Previous approaches for recovery from \MD data}

As in the \D case, the problem with computing directly $\gamma\sqb{\mathbf{k}}$ is that the sample values may be very large which would saturate an analog-to-digital (ADC) acquisition device, which has a restricted dynamic range \cite{Bhandari:2020:Ja}. 
The modulo operator was applied previously for \MD inputs by processing a 1D slice at a time \cite{Bhandari:2020:Ca,Bouis:2020:C}. However, these methods are not truly \MD because they don't exploit the \MD structure of the data. Furthermore, when dealing with noise in 1D, the modulo samples $y\sqb{\mathbf{k}}=\gamma\sqb{\mathbf{k}}+\eta\sqb{\mathbf{k}}-\varepsilon_\gamma\sqb{\mathbf{k}}$ require the separation of both residual $\varepsilon_\gamma$ and $\eta$ within the same dimension. This turns out to be contradictory, as detecting $\varepsilon_\gamma$ requires a high-pass filter (such as $\Delta^N$ in the case of USF), while denoising is typically done with low-pass filters \cite{Florescu:2022:Cb}. 
Furthermore, a denoising approach on modulo data was tested for \MD signals \cite{Tyagi:2022:J}. However, we center our analysis on purely modulo inversion techniques, where the noise sequence remains unaltered. 

We define the following functions, representing slices of function $f(\mathbf{V}\mathbf{x})$. Let $f_{\mathbf{V}}\rb{\mathbf{x}} \triangleq f(\mathbf{V}\mathbf{x}) $. Let $f_{\bar{\mathbf{x}}}:\mathbb{R}\rightarrow \mathbb{R}$ denote the slice along dimension $x_1$ defined as $f_{\bar{\mathbf{x}}}\rb{x}\triangleq f\rb{x \mathbf{v}_1 + \sum_{d=2}^D x_d \mathbf{v}_d}, \forall \bar{\mathbf{x}}\in\mathbb{R}^{D-1},\bar{\mathbf{x}}=\sqb{x_2,\dots,x_D}$. In the next proposition we also use a generic slices defined as the \D function $g_d\rb{x_d}\triangleq f\rb{\sum_{n=1}^D x_n \mathbf{v}_n}$ by fixing dimensions $\cb{x_1,\dots,x_{d-1},x_{d+1},\dots,x_D}, d\neq 1$. The following proposition was proven in \cite{Bouis:2020:C}. 

\begin{prop}[Bandlimited slices]
The function $f_{\bbf{x}}$ satisfies $f_{\bbf{x}}\in \mathsf{PW}_{\Omega_1}\rb{\R}$. Furthermore, $g_d\in \mathsf{PW}_{\Omega_d}\rb{\R}$, $\forall d \in\cb{1,\dots,D}, d\neq 1$.
\label{prop:pw_for_slices}
\end{prop}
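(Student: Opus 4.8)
The plan is to reduce the whole statement to a single linear change of variables followed by a coordinate-slice argument. First I would pass to lattice coordinates through the function $f_{\mathbf{V}}\rb{\mathbf{x}}=f\rb{\mathbf{V}\mathbf{x}}$. Since $\mathbf{V}$ is invertible, $f_{\mathbf{V}}\in\LR$-type membership is preserved (indeed $f_{\mathbf{V}}\in L^2\rb{\RD}$ because the change of variables only rescales the norm by $\vb{\det\mathbf{V}}^{-1/2}$), and substituting $\mathbf{u}=\mathbf{V}\mathbf{x}$ in the defining integral of $\F f_{\mathbf{V}}$ gives $\F f_{\mathbf{V}}\rb{\bs{\xi}}=\vb{\det\mathbf{V}}^{-1}\F f\rb{\hat{\mathbf{V}}\bs{\xi}}$, using the identity $\inner{\bs{\xi},\mathbf{V}^{-1}\mathbf{u}}=\inner{\hat{\mathbf{V}}\bs{\xi},\mathbf{u}}$ together with $\hat{\mathbf{V}}=\mathbf{V}^{-\top}$. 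By the definition of $\pwd$ in \eqref{eq:pwd}, $\mathrm{supp}\rb{\F f}$ consists of points $\hat{\mathbf{V}}\bs{\omega}$ with $\vb{\omega_d}<\Omega_d$; since $\hat{\mathbf{V}}$ is invertible, $\hat{\mathbf{V}}\bs{\xi}$ lies in that set iff $\vb{\xi_d}<\Omega_d$ for every $d$. Hence $\mathrm{supp}\rb{\F f_{\mathbf{V}}}\subseteq\prod_{d=1}^{D}\rb{-\Omega_d,\Omega_d}$, i.e. $f_{\mathbf{V}}$ is separately bandlimited with bandwidth $\Omega_d$ in each Cartesian variable $x_d$.

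The second step is to recognize each slice as a coordinate slice of $f_{\mathbf{V}}$. By definition $f_{\bbf{x}}\rb{x}=f_{\mathbf{V}}\rb{x,\bbf{x}}$ is the restriction of $f_{\mathbf{V}}$ to the first axis with $\bbf{x}=\sqb{x_2,\dots,x_D}$ held fixed, and likewise each $g_d$ is the restriction to the $d$-th axis. It therefore suffices to prove the generic fact that a coordinate slice of a function bandlimited to $\prod_{d}\rb{-\Omega_d,\Omega_d}$ is one-dimensionally bandlimited with the matching bandwidth; I would prove it for the first axis, the remaining axes being identical after relabelling. Introducing the partial Fourier transform in the first variable, $\widehat{f}^{(1)}\rb{\xi_1,\bbf{x}}=\int_{\R}f_{\mathbf{V}}\rb{x_1,\bbf{x}}e^{-\jmath\xi_1 x_1}\,dx_1$, which is precisely the one-dimensional Fourier transform of the slice $f_{\bbf{x}}$, Fubini lets me express $\F f_{\mathbf{V}}\rb{\xi_1,\bar{\bs{\xi}}}$ as the $\rb{D-1}$-dimensional Fourier transform of $\bbf{x}\mapsto\widehat{f}^{(1)}\rb{\xi_1,\bbf{x}}$. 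Fixing any $\xi_1$ with $\vb{\xi_1}\geq\Omega_1$, the first step forces this $\rb{D-1}$-dimensional transform to vanish for all $\bar{\bs{\xi}}$, whence $\widehat{f}^{(1)}\rb{\xi_1,\bbf{x}}=0$ for every $\bbf{x}$. Thus $\mathrm{supp}\rb{\F f_{\bbf{x}}}\subseteq\sqb{-\Omega_1,\Omega_1}$, and the same argument along axis $d$ yields $\mathrm{supp}\rb{\F g_d}\subseteq\sqb{-\Omega_d,\Omega_d}$.

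The only genuine technicality, which I expect to be the main obstacle, is purely measure-theoretic: confirming that each slice actually lies in the Paley--Wiener space rather than merely having band-limited transform, and that the slices are well-defined pointwise for \emph{every} $\bbf{x}$ rather than almost everywhere. Both facts follow from $\F f\in L^1\rb{\RD}$, which holds because $\F f$ is $L^2$ with compact support; consequently $\F f_{\mathbf{V}}\in L^1\rb{\RD}$ as well, so $\widehat{f}^{(1)}\rb{\xi_1,\bbf{x}}$ is a continuous and bounded function of $\xi_1$ for each fixed $\bbf{x}$, and being supported in the compact interval $\sqb{-\Omega_1,\Omega_1}$ it is square-integrable; by Plancherel the slice $f_{\bbf{x}}$ then belongs to $\LR$. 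The same $L^1$ bound makes $f$ (hence $f_{\mathbf{V}}$) continuous, so the slices are unambiguous at every point. Assembling the three steps gives $f_{\bbf{x}}\in\mathsf{PW}_{\Omega_1}\rb{\R}$ and $g_d\in\mathsf{PW}_{\Omega_d}\rb{\R}$ for $d\neq1$; the spectral support computation itself is an immediate consequence of the change of variables, and the care lies entirely in justifying the partial transform and the pointwise-for-all-$\bbf{x}$ conclusion.
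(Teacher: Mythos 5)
Your proposal follows essentially the same route as the paper's proof: a linear change of variables showing $\mathcal{F}f_{\mathbf{V}}$ is supported in $\prod_{d=1}^{D}\left(-\Omega_d,\Omega_d\right)$, followed by a partial-Fourier-transform argument showing that each coordinate slice of $f_{\mathbf{V}}$ is one-dimensionally bandlimited with the matching bandwidth. Your added measure-theoretic care (paragraph three) goes beyond what the paper records and is sound, with one caveat: the Fubini step should be run on the spectral side (partial \emph{inverse} transform of $\mathcal{F}f_{\mathbf{V}}$, which is legitimate since $\mathcal{F}f_{\mathbf{V}}$ is $L^2$ with compact support and hence $L^1$), not on the forward side, because $f_{\mathbf{V}}$ itself need not belong to $L^1\left(\mathbb{R}^D\right)$.
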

\if\BdlSlicesInAppendix\FL
    \begin{proof}
    We consider the slice $f_{\mathbf{V}}= f\rb{\mathbf{Vx}}$ along dimension $x_d$, with $d\in\cb{1,\dots,D}$ fixed. To this end, we apply the $(D-1)$ -- dimensional inverse Fourier transform to $f_{\mathbf{V}}$ corresponding to all variables apart from $x_d$, such that
\begin{align}
\label{eq:FfV}
    \frac{1}{\rb{2\pi}^{D-1}}&\int_{\R^{D-1}} \mathcal{F} f_{\mathbf{V}}\rb{\mathbf{\omega}} e^{\jmath\inner{\mathbf{\omega},\mathbf{x}}} d\omega_1 \dots d\omega_{d-1} d\omega_{d+1} \dots d\omega_D\\
    =&\int_{\R} f_{\mathbf{V}}\rb{\mathbf{x}} e^{-\jmath \omega_d x_d} dx_d\\
    =&\int_{\R} f\rb{\sum\nolimits_{d=1}^D x_d \mathbf{v}_d}e^{-\jmath \omega_d x_d}d x_d=\mathcal{F} g_d\rb{\omega_d}.
\end{align}
Therefore, the spectrum of $g_d$ depends on the spectrum of $f_{\mathbf{V}}\rb{\mathbf{x}}$, which will be evaluated in the following. To this end, via the change of variable $\mathbf{x}=\mathbf{Vx}^*$, we get that
\begin{align}
\label{eq:fourier_fv}
\begin{split}
    \mathcal{F}f\rb{\mathbf{\omega}}=\int_{\R^D} f\rb{\mathbf{x}}e^{-\jmath\inner{\mathbf{\omega},\mathbf{x}}}d\mathbf{x}
    &=\int_{\R^D} f\rb{\mathbf{Vx}^*}e^{-\jmath\inner{\mathbf{\omega},\mathbf{Vx}^*}}\vb{\det\rb{\mathbf{V}}}d\mathbf{x}^*\\
    &=\int_{\R^D} f\rb{\mathbf{Vx}^*}e^{-\jmath\inner{\mathbf{\mathbf{V}^\top\omega},\mathbf{x}^*}}\vb{\det\rb{\mathbf{V}}}d\mathbf{x}^*\\
    &=\vb{\det\rb{\mathbf{V}}}\cdot\mathcal{F} f_{\mathbf{V}}\rb{\mathbf{V}^\top\mathbf{\omega}}.
\end{split}
\end{align}
Using $f\in\pwd$ we get that $\mathrm{supp}\rb{\mathcal{F}f}\subseteq \mathbf{V}^{-\top}\cdot\prod_{d=1}^D \rb{-\Omega_d,\Omega_d}$. Then, via \eqref{eq:fourier_fv}, we have that $\mathrm{supp} \rb{\mathcal{F}f_{\mathbf{V}}}\subseteq \prod_{d=1}^D \rb{-\Omega_d,\Omega_d}$, and therefore $\mathcal{F} f_{\mathbf{V}}\rb{\mathbf{\omega}}=0,\forall \omega_d\in\R, \vb{\omega_d}>\Omega_d$ in \eqref{eq:FfV}. It follows that $\mathcal{F} g_d\rb{\omega_d}=0,\forall \omega_d\in\R, \vb{\omega_d}>\Omega_d$, $d\neq 1$. Furthermore, choosing $d=1$ gives us $\mathcal{F} f_{\bbf{x}}\rb{\omega_1}=0,\forall \omega_1\in\R, \vb{\omega_1}>\Omega_1$, which finalizes the proof.
    \end{proof}    
\else
    \begin{proof}
    The proof is in Section \ref{sect:proofs_prop}.
    \end{proof}
\fi

The proposition above proves that any \emph{slice} of the \MD function $f$ along lattice dimension $d$ has the spectrum compactly supported within $\rb{-\Omega_d,\Omega_d}$. Furthermore, this implies that a Bern\v{s}te\u{\i}n bound can be applied for each variable such that
\begin{align}
\begin{split}
    \vb{\frac{\partial}{\partial x_d}f_{\mathbf{V}}\rb{\mathbf{x}}}&\leq \Omega_d \max_{x_d}  \vb{f_{\mathbf{V}}\rb{\mathbf{x}}}\leq\Omega_d \norm{f_{\mathbf{V}}}_\infty\\
    &=\Omega_d \norm{f}_\infty,\quad \forall d\in\cb{1,\dots,D}, \forall \bbf{x}\in\RDred.
\end{split}
\label{eq:Bernstein_multiD}
\end{align}
The works in \cite{Bhandari:2020:Ca} and \cite{Bouis:2020:C} apply \D ideal modulo to $f_{\bar{\mathbf{x}}}$:
\begin{equation*}
\MO f_{\bar{\mathbf{x}}}\rb{x}=f_{\bar{\mathbf{x}}}\rb{x}-2\lambda_h \sum_{r\in\mathbb{Z}} s_{\bar{\mathbf{x}},r}\ind_{\left[\tau_{\bar{\mathbf{x}},r},\infty\right)}\rb{x}, x\in\mathbb{R}.
\end{equation*}
Therefore we can define the "folded" \MD function $z\rb{\mathbf{x}}$ as
\begin{equation}
    z\rb{\mathbf{Vx}}=\MO \rb{f_{\bbf{x}}\rb{x_1}}, 
\end{equation}
where $\mathbf{x}=\sqb{x_1,\dots,x_D}, \bbf{x}=\sqb{x_2,\dots,x_D}, x_d\in\R,\forall d\in\cb{1,\dots,D}$. Subsequently, the output samples are $z\rb{\mathbf{VTk}}=\MO f_{\bbf{T}\bbf{k}}\rb{k_1 T_1}$, where $\bbf{T}=\mathrm{diag}\cb{T_2,\dots,T_D}$.

Then, recovering $f_{\bbf{T}\bbf{k}}\rb{k_1 T_1}$ for all $\bbf{T}\bbf{k}$ from $z\rb{\mathbf{VTk}}$ represents a line-by-line approach used in \cite{Bhandari:2020:Ca,Beckmann:2022:J,Bouis:2020:C}, which is guaranteed to work if \eqref{eq:gamma_bound_hyst} holds true. However, as explained previously, this approach does not exploit the \MD structure of the input data. This is further motivated by the accepted knowledge in image processing that non-separability in multiple dimensions has a lot more to offer than separability \cite{Zhao:2018:J,Cohen:1993:J}.
This motivates introducing a \MD modulo-hysteresis model in the next section. We will show that the new model allows a significantly large amount of noise, which is not possible with USF that processes the data line-by-line.

\subsection{Towards multi-dimensional modulo-hysteresis acquisition}

Inspired from the 1D modulo-hysteresis operator that showed improvements for noise robustness \cite{Florescu:2022:Cb}, we define in the following a new operator called \MD modulo-hysteresis that addresses the issues discussed in the previous subsection. The idea is to split the domain $\RDred$ in disjoint sets confined in polytopes $\cP_{\bbf{b}}, \bbf{b}\in\ZDred$ defined as
\begin{equation}
    \cP_{\bbf{b}}=\cb{ \bbf{V} B \rb{\bbf{b}+\bbs{\alpha}} \setsep \forall \bbs{\alpha}\in\left[0,1\right)^{D-1}},
    \label{eq:multiD_intervals}    
\end{equation}
where $\bbf{b}=\sqb{b_2,\dots,b_D}$ and $B\in\R_+^*$ is the polytope edge length along directions parallel with versors $\mathbf{v}_2,\dots,\mathbf{v}_D$. The set $\cP_{\bbf{b}}$ is created via the last $D-1$ vectors of the lattice basis $\mathbf{v}_2,\dots,\mathbf{v}_D$ where the basis coordinates lie in a set of rectangular polytopes $\cR_{\bbf{b}}$ such that $\cP_{\bbf{b}}=\cb{\bbf{V}\bbf{x}\setsep \bbf{x}\in\cR_{\bbf{b}}}$,
where
\begin{equation}
    \cR_{\bbf{b}}=\cb{B\rb{\bbf{b}+\bbs{\alpha}}\setsep \bbs{\alpha}\in\left[0,1\right)^{D-1}}=\prod_{d=2}^D \left[ b_d B, \rb{b_d+1} B \right).
    \label{eq:multiD_rectangular_intervals}
\end{equation}
Note that $\RDred=\bigcup_{\bbf{b}\in\ZDred} \cP_{\bbf{b}}$. Sets $\cP_{\bbf{b}}$ thus can be used to split the domain of function $f$ in disjoint \emph{bands} of width $B$ given by $\cB_{\bbf{b}}=\rb{\R\mathbf{v}_1}\times\cP_{\bbf{b}}$ identified using the indices in $\bbf{b}$, such that $\RD=\bigcup_{\bbf{b}\in\ZDred} \cB_{\bbf{b}}$.
By exploiting the smoothness of $f$ we can derive that, for a fixed $x_1\in\R$, $f$ has bounded variation within each band, and thus the folding can occur simultaneously on all coordinates $x_2,\dots,x_D$, which gives the folded signal a particular structure to be exploited in recovery. The definition of the new operator is given as follows.

\begin{definition}[Multi-dimensional modulo-hysteresis]
The operator $\MOh^D$ with threshold $\lambda$ and hysteresis $h\in\left[0,2\lambda/3\right)$, where $\bs{\mathsf{H}}=\sqb{\lambda\ h}$, generates a function $z$  for input $f \in \pwd$ such that, for $\sqb{\mathbf{x}}_1 =x_1\geq 0$
\begin{equation}
	z\rb{\mathbf{Vx}}=f\rb{\mathbf{Vx}}-\varepsilon_f\rb{\mathbf{Vx}},
	\label{eq:zt_MD}
\end{equation}		
where $\varepsilon_f$ denotes the modulo-hysteresis residual defined as
\begin{equation}
    \varepsilon_f\rb{\mathbf{Vx}}=h\sqb{M_{\bbf{b}}+\sum_{r=0}^{R_{\bbf{b}}^+}s_{\bbf{b},r}\ind_{\left[\tau_{\bbf{b},r},\infty\right)}\rb{x_1}}, \forall\bbf{x}=\sqb{x_2,\dots,x_D}\in \cR_{\bbf{b}},
    \label{eq:eps}
\end{equation}
where $\cR_{\bbf{b}}$ satisfies \eqref{eq:multiD_rectangular_intervals}, $\mathbf{x}=\sqb{x_1,\dots,x_D}$, $\bbf{x}=\sqb{x_2,\dots,x_D}$, and $M_{\bbf{b}}\in\Z$ satisfies \begin{equation}
    M_{\bbf{b}}=\floor{\frac{\inf_{\bbf{x}\in \cR_{\bbf{b}}} f_{\bbf{x}}\rb{0}+\lambda}{h}}-1,
\end{equation}	 
and $\tau_{\bbf{b},r}\in \cb{0,\dots,R_{\bbf{b}}^+}, s_{\bbf{b},r}\in \cb{-1,0,1}$ are the folding times and signs in band $\mathcal{B}_{\bbf{b}}$, respectively, defined as
\begin{gather}
\tau_{\bbf{b},r+1}=\inf\cb{x_1>\tau_{\bbf{b},r}\setsep \sup_{\bbf{x}\in \cR_{\bbf{b}}}\vb{f_{\bbf{x}}\rb{x_1}-\varepsilon_{\bbf{b},r}\rb{x_1}}=\lambda},
\label{eq:tau_recurrent}\\
s_{\bbf{b},r+1}= \mathrm{sign} \sqb{f_{B\bbf{b}}\rb{\tau_{\bbf{b},r+1}}-\varepsilon_{\bbf{b},r}\rb{\tau_{\bbf{b},r+1}}},
\label{eq:sign_recurrent}\\
\varepsilon_{\bbf{b},r+1}\rb{x_1}=\varepsilon_{\bbf{b},r}\rb{x_1}+h s_{\bbf{b},r+1}\ind_{\left[\tau_{\bbf{b},r+1},\infty\right)}\rb{x_1},
\label{eq:eps_recurrent}
\end{gather}	
where $r\in\cb{0,\dots,R_{\bbf{b}}^+-1}$, $\varepsilon_{\bbf{b},r}$ is a recursive sequence of functions for computing the residual $\varepsilon_f$ such that $\varepsilon_{\bbf{b},0}\rb{x_1}=hM_{\bbf{b}}$,  $\varepsilon_{\bbf{b},R_{\bbf{b}}^+}\rb{x_1}=\varepsilon_f\rb{\mathbf{Vx}},\bbf{x}\in \cR_{\bbf{b}}$ and $\tau_{\bbf{b},0}=s_{\bbf{b},0}=0$. Furthermore $R_{\bbf{b}}^+\in\Z_+$ satisfies
\begin{equation}
    \cb{x_1>\tau_{\bbf{b},R_{\bbf{b}}^+}\setsep \sup_{\bbf{x}\in\cR_{\bbf{b}}} \vb{f_{\bbf{x}}\rb{x_1}-\varepsilon_{\bbf{b},R_{\bbf{b}}^+}\rb{x_1}}=\lambda}=\emptyset.
    \label{eq:Rmax}
\end{equation}
Furthermore, for $x_1<0$ we have \[\MOh^D f\rb{\mathbf{Vx}}=\MOh^D\sqb{f^-}\rb{-x_1\mathbf{v}_1+x_2\mathbf{v}_2+\dots+x_D\mathbf{v}_D},\] 
where $f^-\rb{\mathbf{x}}=f_{\bbf{x}}\rb{-x_1}$. Let $\tau_{\bbf{b},r}^-,s_{\bbf{b},r}^-,r\geq1$ be the folding times and signs computed via \eqref{eq:tau_recurrent}, \eqref{eq:sign_recurrent} for $\MOh^D\sqb{f^-}$. Then we define $\tau_r\triangleq -\tau^-_{-r},s_r\triangleq s^-_{-r}, r\in\Z,r<0$.
\label{def:cont_moduloMD}
\end{definition}

We note that the operator in Definition \ref{def:cont_moduloMD} is backwards compatible with the \D operator in Definition \ref{def:cont_modulo}. Specifically, if one chooses $\cR_{\bbf{b}}$ to be a single point $\cR_{\bbf{b}}=B \bbf{b}\in \RDred$ instead of a hypercube, then $\tau_{\bbf{b},r}$ and $s_{\bbf{b},r}$ in \eqref{eq:tau_recurrent} are the same as $\tau_r$ in Definition \ref{def:cont_modulo}. Furthermore, it was shown that, for $h=0$, the \D modulo-hysteresis operator in Definition \ref{def:cont_modulo} is identical to an ideal modulo operator \eqref{eq:ideal_mod} \cite{Florescu:2022:J}.

\begin{figure}
    \centering
    \begin{tabular}{c}
    \includegraphics[width=1\textwidth]{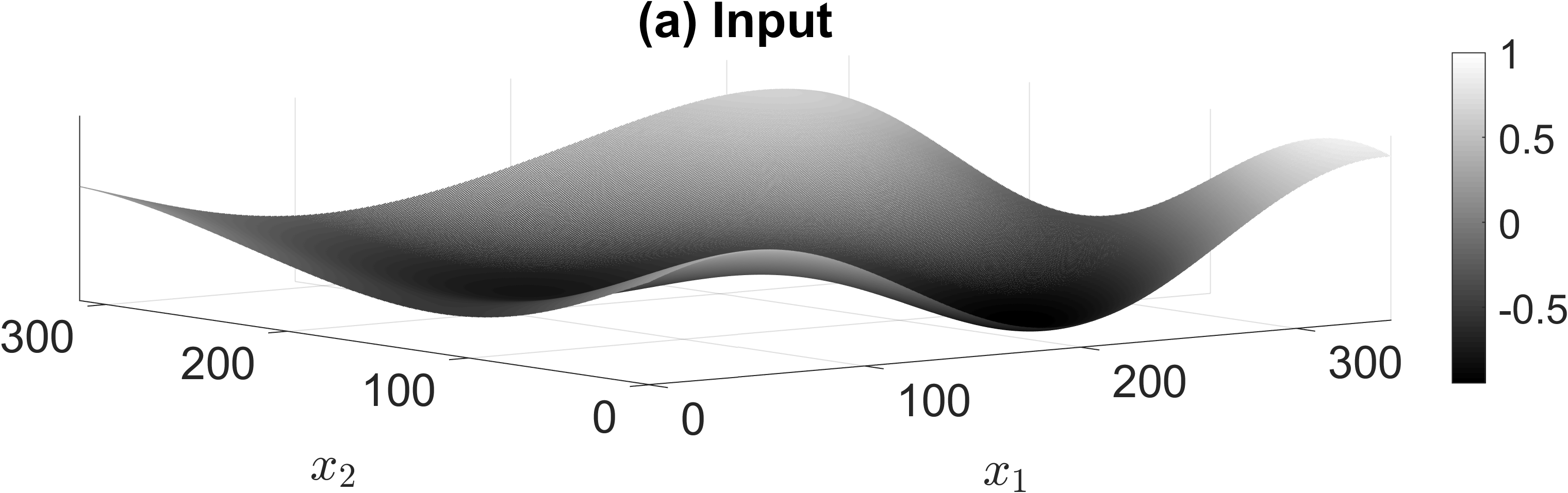}\\
    \begin{tabular}{cc}
        \includegraphics[width=0.44\textwidth]{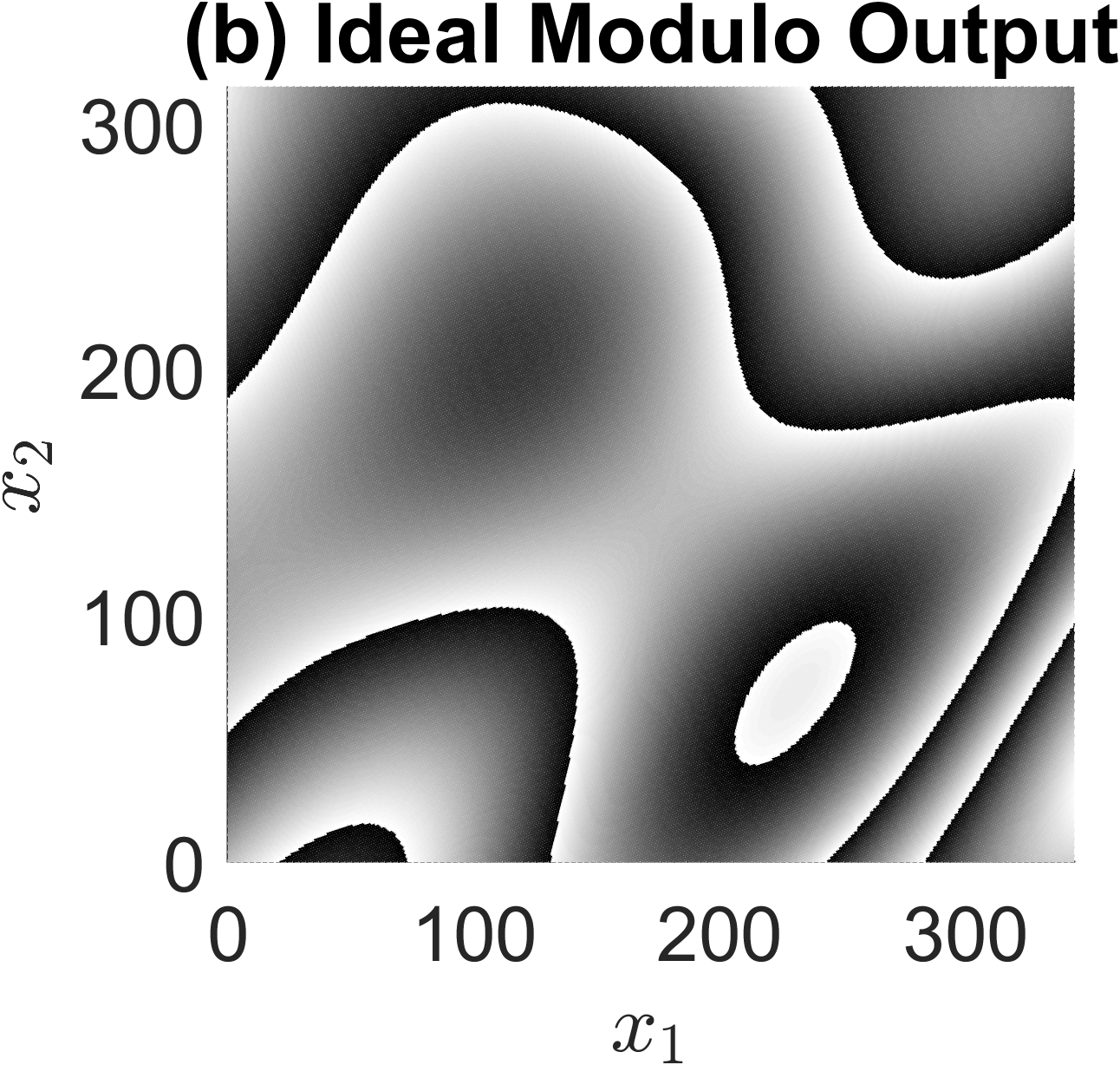} & \includegraphics[width=0.5\textwidth]{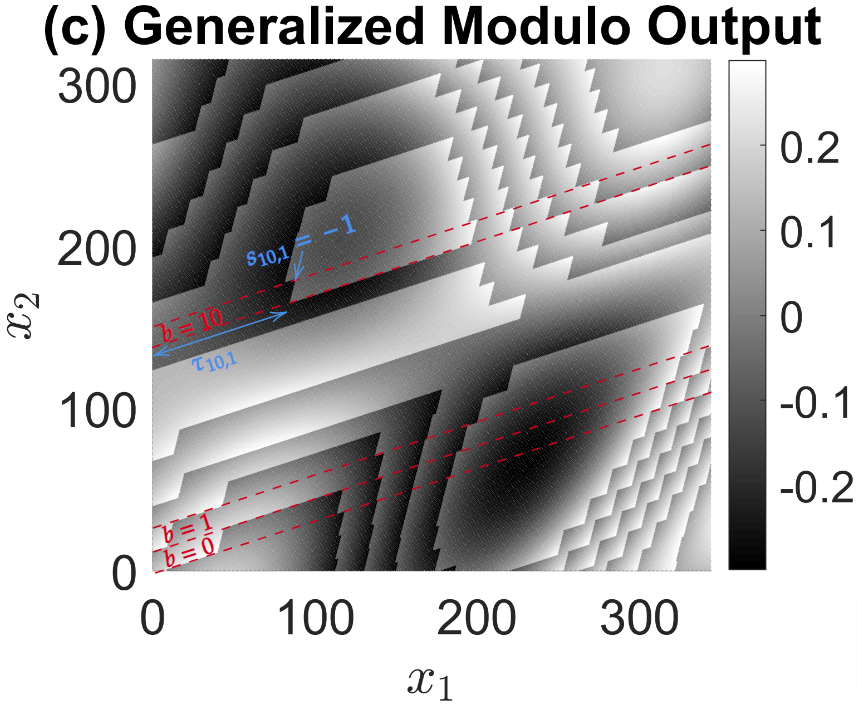} \\
        \includegraphics[width=0.44\textwidth]{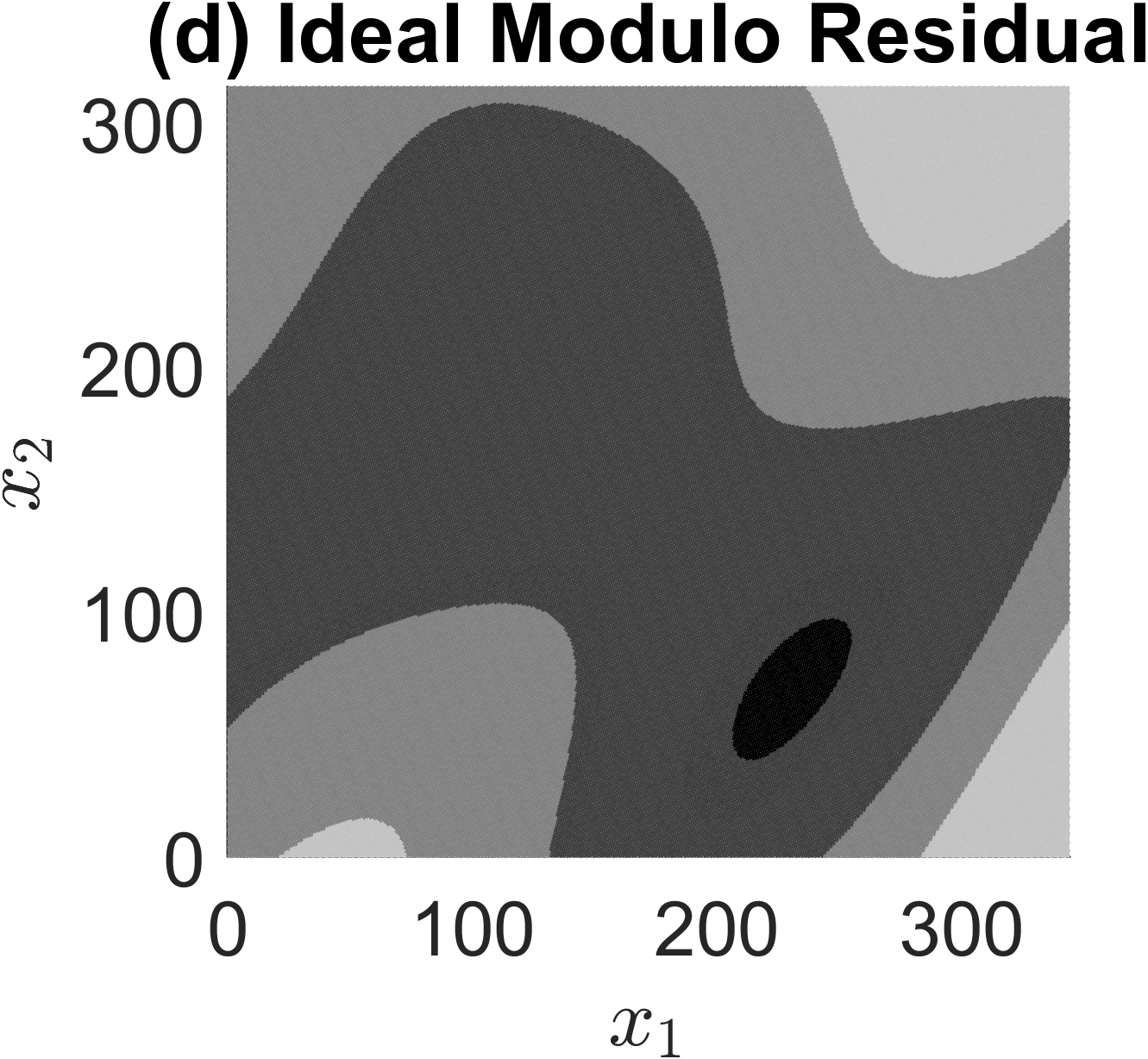} &
        \includegraphics[width=0.48\textwidth]{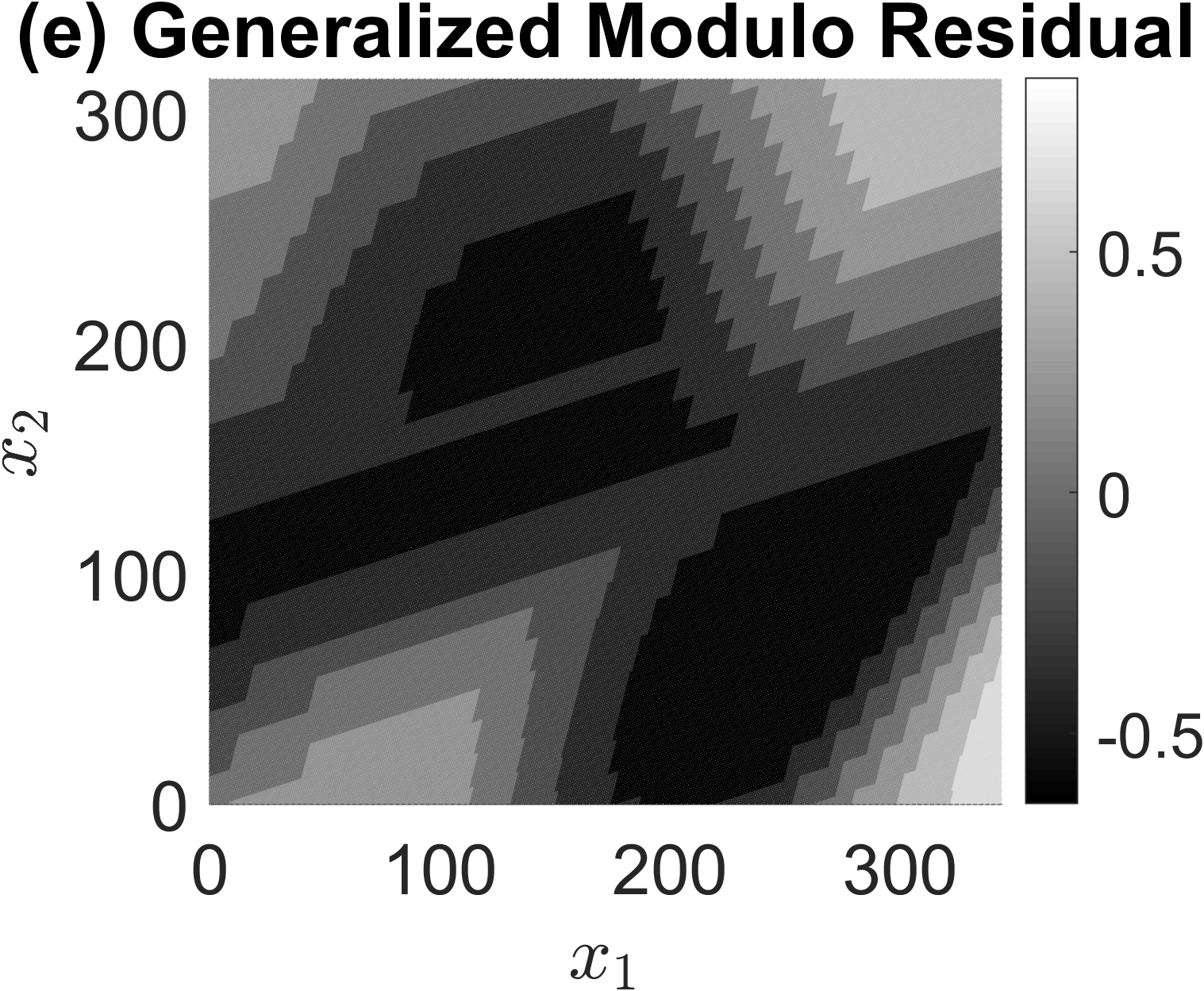}
    \end{tabular}
    \end{tabular}
    \caption{A random two-dimensional bandlimited input $f\rb{x_1,x_2}$ was generated (a). The ideal modulo output $\MO f$ is in (b) and the generalized modulo output $\MOh^D f$ in (c). The corresponding residual functions are depicted in (d) for ideal modulo and in (e) for generalized modulo. For $\MOh^D$, the lattice $\bs{\Lambda}$ consists of vectors $\mathbf{v}_1=\sqb{0.97,0.25}^\top, \mathbf{v}_2=\sqb{0.32,0.95}^\top$. The folding curves of the ideal modulo  -- the contours in (b) \& (d) -- are unknown \emph{a priori}. In the case of the modulo-hysteresis, folding occurs along straight lines with directions dictated by the lattice $\bs{\Lambda}$ which is known \emph{a priori}. This property will be exploited in recovery.}
    \label{fig:input}
\end{figure}

\subsection{Properties of the proposed operator}

In the following we give a number of properties of the \MD modulo-hysteresis operator for a bandlimited input.
\begin{prop}[Folding time separation]
Assume that $\tau_{\bbf{b},r}$ are well-defined in \eqref{eq:tau_recurrent} for $f\in\pwd$ and $r\in\cb{1,\dots,R}$ where $R\geq1$ and $\tau_{\bbf{b},0}=0$. Furthermore, assume that $\df <\min\cb{h/2,2\lambda-3h}$, where
\begin{equation}
\df\triangleq\mathop{\sup_{
    \bbf{b}\in\ZDred}}_{x_1\in\R} \sqb{\sup_{\bbf{x}\in \cR_{\bbf{b}}} f_{\bbf{x}}\rb{x_1}-\inf_{\bbf{x}\in \cR_{\bbf{b}}} f_{\bbf{x}}\rb{x_1}}.
\label{eq:DelfB}
\end{equation}
Then 
\begin{equation}
    \tau_{\bbf{b},r+1}-\tau_{\bbf{b},r}\geq \frac{h}{\Omega_1 \norm{f}_\infty},\quad \forall \bbf{b} \in \ZDred.
\end{equation}
\label{prop:fold_sep_multid}
\end{prop}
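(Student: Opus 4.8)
The plan is to follow the one‑dimensional folding‑separation argument behind \eqref{eq:separation}, with one essential new ingredient: by \eqref{eq:eps_recurrent} the residual $\varepsilon_{\bbf{b},r}(x_1)$ does not depend on $\bbf{x}$, so it is \emph{constant across a whole band} $\cR_{\bbf{b}}$. This lets me use the bounded within‑band variation $\df$ of \eqref{eq:DelfB} to control the excursions of $f$ over the band. Throughout I treat $x_1\geq 0$; the case $x_1<0$ follows verbatim from the reflection in Definition \ref{def:cont_moduloMD}. Writing $w_r(\bbf{x},x_1)=f_{\bbf{x}}(x_1)-\varepsilon_{\bbf{b},r}(x_1)$, the rule \eqref{eq:tau_recurrent} says that $\tau_{\bbf{b},r+1}$ is the first $x_1>\tau_{\bbf{b},r}$ at which $\sup_{\bbf{x}\in\cR_{\bbf{b}}}\vb{w_r(\bbf{x},x_1)}=\lambda$.

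The first ingredient is a uniform slope bound. By Proposition \ref{prop:pw_for_slices} each slice $f_{\bbf{x}}$ lies in $\mathsf{PW}_{\Omega_1}\rb{\R}$, so the Bernstein estimate \eqref{eq:Bernstein_multiD} gives $\vb{\partial_{x_1}f_{\bbf{x}}(x_1)}\leq \Omega_1\norm{f}_\infty$ uniformly in $\bbf{x}$; and since $\varepsilon_{\bbf{b},r}$ is constant in $x_1$ on $(\tau_{\bbf{b},r},\tau_{\bbf{b},r+1})$ (no fold occurs there), $w_r(\bbf{x},\cdot)$ inherits the same bound on that interval.

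The key step is to show that immediately after every fold $\sup_{\bbf{x}\in\cR_{\bbf{b}}}\vb{w_r(\bbf{x},\tau_{\bbf{b},r})}\leq \lambda-h$. For $r\geq1$ one has $\sup_{\bbf{x}}\vb{w_{r-1}(\bbf{x},\tau_{\bbf{b},r})}=\lambda$, while the oscillation $\sup_{\bbf{x}}f_{\bbf{x}}(\tau_{\bbf{b},r})-\inf_{\bbf{x}}f_{\bbf{x}}(\tau_{\bbf{b},r})$ of $w_{r-1}(\cdot,\tau_{\bbf{b},r})$ across $\cR_{\bbf{b}}$ is at most $\df<h/2<\lambda$. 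Hence all values of $w_{r-1}(\cdot,\tau_{\bbf{b},r})$ lie in one interval of length $\leq\df$ hugging either $+\lambda$ or $-\lambda$ — they cannot straddle $0$ — so the corner sign $s_{\bbf{b},r}$ of \eqref{eq:sign_recurrent} agrees with the sign of this one‑sided excursion. Subtracting $h\,s_{\bbf{b},r}$ via \eqref{eq:eps_recurrent} then maps these values into $[\lambda-\df-h,\ \lambda-h]$ (or its reflection), and $\df<2\lambda-3h\leq 2\lambda-2h$ forces this interval inside $[-(\lambda-h),\lambda-h]$, proving the claim. The base case $r=0$ is the same computation: with $\varepsilon_{\bbf{b},0}=hM_{\bbf{b}}$ and $M_{\bbf{b}}=\floor{(\inf_{\bbf{x}}f_{\bbf{x}}(0)+\lambda)/h}-1$, the values $w_0(\cdot,0)$ fall in $[-\lambda+h,\ -\lambda+2h+\df)$, which $\df<2\lambda-3h$ again places in $[-(\lambda-h),\lambda-h]$.

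To finish, fix $r$ and let $\bbf{x}^\ast\in\mathrm{cl}\rb{\cR_{\bbf{b}}}$ attain $\vb{w_r(\bbf{x}^\ast,\tau_{\bbf{b},r+1})}=\lambda$. Combining the slope bound with the previous estimate,
\begin{align*}
\lambda&=\vb{w_r(\bbf{x}^\ast,\tau_{\bbf{b},r+1})}\leq \vb{w_r(\bbf{x}^\ast,\tau_{\bbf{b},r})}+\Omega_1\norm{f}_\infty\rb{\tau_{\bbf{b},r+1}-\tau_{\bbf{b},r}}\\
&\leq (\lambda-h)+\Omega_1\norm{f}_\infty\rb{\tau_{\bbf{b},r+1}-\tau_{\bbf{b},r}},
\end{align*}
which rearranges to $\tau_{\bbf{b},r+1}-\tau_{\bbf{b},r}\geq h/\rb{\Omega_1\norm{f}_\infty}$. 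The main obstacle is the key step above: unlike the \D case, the fold sign is read off the single corner $B\bbf{b}$ while the level $\lambda$ may be reached at a different point of the band, so one must rule out the excursion straddling zero and confirm sign consistency — this is precisely where the hypothesis $\df<\min\cb{h/2,2\lambda-3h}$ is indispensable (the $h/2$ term governs the sign, the $2\lambda-3h$ term supplies the $\lambda-h$ headroom). The remaining steps are routine adaptations of the one‑dimensional estimate.
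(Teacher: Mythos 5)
Your proof is correct and follows essentially the same route as the paper's: a Bernstein-based Lipschitz slope bound along $x_1$, an induction showing that immediately after each fold the recentred signal satisfies $\sup_{\bbf{x}}|f_{\bbf{x}}-\varepsilon_{\bbf{b},r}|\leq\lambda-h$ (using $\df<h/2$ to rule out the band values straddling zero, so the corner sign $s_{\bbf{b},r}$ matches the excursion), and then the time-to-climb estimate giving $h/(\Omega_1\norm{f}_\infty)$. The only ingredient the paper makes explicit that you leave implicit is a dedicated lemma showing that the band supremum $x_1\mapsto\sup_{\bbf{x}\in\cR_{\bbf{b}}}|f_{\bbf{x}}(x_1)-\varepsilon_{\bbf{b},r}(x_1)|$ is itself Lipschitz, hence continuous, which is what licenses your claims that this supremum actually equals $\lambda$ at $\tau_{\bbf{b},r+1}$ and is attained on $\mathrm{cl}(\cR_{\bbf{b}})$; since this follows from your uniform slope bound, it is a presentation gap rather than a mathematical one.
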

\if\FoldSepInAppendix\FL
    We first show an intermediate result in the following lemma.
\begin{lem}
\label{lem:gr_Lipschitz_cont}

For a fixed vector $\bbf{b}\in \ZDred$ indicating the modulo band and $r\in\cb{1,\dots,R}$, let  $g_{\bbf{x},r}:\R\rightarrow \R$ and $g_r:\R\rightarrow\R$ be two functions defined as
\begin{align}
    g_{\bbf{x},r}\rb{x_1}&\triangleq f_{\bbf{x}}\rb{x_1}-\varepsilon_{\bbf{b},r-1}\rb{x_1},\quad \forall \mathbf{x}\in\R^D,
    \label{eq:gxr}\\
    g_r\rb{x_1}&\triangleq \sup_{\bbf{x}\in\cR_{\bbf{b}}}\vb{g_{\bbf{x},r}\rb{x_1}}, \quad \forall x_1\in\R.
    \label{eq:gr}
\end{align}
Then $g_{\bbf{x},r}$ and $\vb{g_{\bbf{x},r}}$ are Lipschitz-continuous as functions of $\mathbf{x}=\sqb{x_1,\bbf{x}}\in\ZD$ and $g_r$ is Lipschitz-continuous as function of $x_1\in \R$. Furthermore, the Lipschitz constant in all cases is $\norm{\boldsymbol{\Omega}}_2 \cdot \norm{f}_{\infty}$.
\end{lem}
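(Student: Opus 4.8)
The plan is to reduce all three assertions to a single gradient estimate for the slice map $f_{\mathbf{V}}(\mathbf{x})=f(\mathbf{V}\mathbf{x})$ and then to propagate it through the two elementary operations that build $\vb{g_{\bbf{x},r}}$ and $g_r$, namely composition with $t\mapsto\vb{t}$ and a pointwise supremum. The only delicate point is the term $\varepsilon_{\bbf{b},r-1}$: by the recursion \eqref{eq:eps_recurrent} it is \emph{piecewise} constant in $x_1$, with jumps at the earlier folding times $\tau_{\bbf{b},1},\dots,\tau_{\bbf{b},r-1}$, so globally $g_{\bbf{x},r}$ is not even continuous. I would therefore make explicit that the statement is meant (and is only needed) on the half-line $x_1>\tau_{\bbf{b},r-1}$ searched in \eqref{eq:tau_recurrent}, where all the indicators equal $1$ and $\varepsilon_{\bbf{b},r-1}\equiv h\,M_{\bbf{b}}+h\sum_{j=1}^{r-1}s_{\bbf{b},j}$ is a genuine constant; subtracting a constant leaves every Lipschitz constant unchanged.

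First I would bound the gradient of $f_{\mathbf{V}}$. Since $f_{\mathbf{V}}$ is smooth (its slices are bandlimited by Proposition \ref{prop:pw_for_slices}), the Bernstein estimate \eqref{eq:Bernstein_multiD} gives $\vb{\partial f_{\mathbf{V}}/\partial x_d}\leq\Omega_d\norm{f}_\infty$ for every $d$, whence
\begin{equation*}
\norm{\nabla_{\mathbf{x}} f_{\mathbf{V}}(\mathbf{x})}_2 \leq \norm{f}_\infty\Big(\textstyle\sum_{d=1}^D\Omega_d^2\Big)^{1/2}=\norm{\bs{\Omega}}_2\norm{f}_\infty,\qquad\forall\,\mathbf{x}\in\R^D .
\end{equation*}
The mean-value inequality along the segment joining two points then yields $\vb{f_{\mathbf{V}}(\mathbf{x})-f_{\mathbf{V}}(\mathbf{y})}\leq\norm{\bs{\Omega}}_2\norm{f}_\infty\,\norm{\mathbf{x}-\mathbf{y}}_2$. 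Because $g_{\bbf{x},r}(x_1)=f_{\mathbf{V}}(\mathbf{x})-\varepsilon_{\bbf{b},r-1}(x_1)$ with $\varepsilon_{\bbf{b},r-1}$ constant on the range considered, $g_{\bbf{x},r}$ inherits the Lipschitz constant $\norm{\bs{\Omega}}_2\norm{f}_\infty$ in $\mathbf{x}$. The bound for $\vb{g_{\bbf{x},r}}$ is then immediate from $1$-Lipschitzness of the absolute value: $\big\lvert\vb{g_{\bbf{x},r}(\mathbf{x})}-\vb{g_{\bbf{x},r}(\mathbf{y})}\big\rvert\leq\vb{g_{\bbf{x},r}(\mathbf{x})-g_{\bbf{x},r}(\mathbf{y})}$.

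For $g_r(x_1)=\sup_{\bbf{x}\in\cR_{\bbf{b}}}\vb{g_{\bbf{x},r}(x_1)}$ I would invoke the standard fact that a pointwise supremum of an $L$-Lipschitz family is $L$-Lipschitz. Fixing $\bbf{x}$ and moving only $x_1$ amounts to varying $\mathbf{x}$ along the first coordinate, for which the joint bound above specializes to $\vb{g_{\bbf{x},r}(x_1)-g_{\bbf{x},r}(x_1')}\leq\norm{\bs{\Omega}}_2\norm{f}_\infty\,\vb{x_1-x_1'}$. Hence $\vb{g_{\bbf{x},r}(x_1)}\leq\vb{g_{\bbf{x},r}(x_1')}+\norm{\bs{\Omega}}_2\norm{f}_\infty\,\vb{x_1-x_1'}\leq g_r(x_1')+\norm{\bs{\Omega}}_2\norm{f}_\infty\,\vb{x_1-x_1'}$; taking the supremum over $\bbf{x}\in\cR_{\bbf{b}}$ and then exchanging $x_1,x_1'$ gives $\vb{g_r(x_1)-g_r(x_1')}\leq\norm{\bs{\Omega}}_2\norm{f}_\infty\,\vb{x_1-x_1'}$, as claimed.

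The main obstacle is thus not analytic but bookkeeping: one has to pin down the domain on which $\varepsilon_{\bbf{b},r-1}$ is constant so that the Bernstein-based gradient bound genuinely transfers to $g_{\bbf{x},r}$ and $g_r$. Once that is settled, the remainder is a routine chain of the gradient estimate, the mean-value inequality, and the two stability facts (composition with $\vb{\cdot}$ and suprema of Lipschitz families), each of which preserves the constant $\norm{\bs{\Omega}}_2\norm{f}_\infty$.
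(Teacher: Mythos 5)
Your proof is correct and follows essentially the same route as the paper's: Bernstein bounds plus the mean value inequality give the Lipschitz constant $\norm{\bs{\Omega}}_2\norm{f}_\infty$ for $f_{\mathbf{V}}$, the reverse triangle inequality handles $\vb{g_{\bbf{x},r}}$, and the supremum step is the standard ``sup of an $L$-Lipschitz family is $L$-Lipschitz'' fact, which the paper proves by hand with $\epsilon$-approximate maximizers. Your explicit restriction to $x_1>\tau_{\bbf{b},r-1}$, where $\varepsilon_{\bbf{b},r-1}$ is constant, is also exactly the restriction the paper imposes mid-proof, so the two arguments match in substance.
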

\begin{proof}
We begin by deriving a bound for the Lischitz constant of function $f_{\mathbf{V}}=f\rb{\mathbf{V}\cdot}$. Given that $f$ is differentiable, according to the \emph{mean value theorem} 
\begin{equation}
    f_{\mathbf{V}}\rb{\mathbf{x}}-f_{\mathbf{V}}\rb{\boldsymbol{\uchi}}=\inner{\nabla f_{\mathbf{V}}\rb{\mathbf{c}},\mathbf{x}-\boldsymbol{ \uchi}},\quad \forall \mathbf{x},\boldsymbol{\uchi}\in \RD,
\end{equation}
where  $\mathbf{c}=\alpha \mathbf{x}+\rb{1-\alpha}\boldsymbol{\uchi}, \alpha\in\sqb{0,1}$ is an intermediate point on the segment joining $\mathbf{x}$ and $\boldsymbol{\uchi}$. Due to the \emph{Cauchy-Schwartz} inequality
\begin{equation}
    \vb{f_{\mathbf{V}}\rb{\mathbf{x}}-f_{\mathbf{V}}\rb{\boldsymbol{\uchi}}}\leq \norm{\nabla f_{\mathbf{V}}\rb{\mathbf{ c}}}_2\cdot\norm{\mathbf{x}-\boldsymbol{\uchi}}_2.
\end{equation}
Next, we use the Bern\v{s}te\u{\i}n bounds in \eqref{eq:Bernstein_multiD} to derive a bound for $\norm{\nabla f_{\mathbf{V}}}$ as follows
\begin{align}
\norm{\nabla f_{\mathbf{V}}}_2^2=\sum_{d=1}^D \vb{\frac{\partial}{\partial x_d}f_{\mathbf{V}}\rb{\mathbf{c}}}^2\leq \sum_{d=1}^D \Omega_d^2\norm{f}^2_\infty
\end{align}
which means $f_{\mathbf{V}}$ is a Lipschitz-continuous function satisfying
\begin{equation}
    \vb{f_{\mathbf{V}}\rb{\mathbf{x}}-f_{\mathbf{V}}\rb{\boldsymbol{\uchi}}}\leq \norm{\mathbf{\Omega}}_2\cdot \norm{f}_\infty\cdot\norm{\mathbf{x}-\boldsymbol{\uchi}}_2,\quad \forall \mathbf{x},\boldsymbol{\uchi}\in\RD.
    \label{eq:Lipschitz_bound}
\end{equation}

For all $\bbf{x},\bbs{\uchi}\in\RDred$ and $\forall x_1,\uchi_1>\tau_{\bbf{b},r-1}$ we have $\varepsilon_{\bbf{b},r-1}\rb{x_1}=\varepsilon_{\bbf{b},r-1}\rb{\uchi_1}$, which implies
\begin{align}
\begin{split}
    \vb{\vb{g_{\bbf{x},r}\rb{x_1}}-\lvert{g_{\bbs{\uchi},r}\rb{\uchi_1}}\rvert}
    &\leq\vb{g_{\bbf{x},r}\rb{x_1}-g_{\bbs{\uchi},r}\rb{\uchi_1}}
    =\vb{f_{\bbf{x}}\rb{x_1}-f_{\bbs{\uchi}}\rb{\uchi_1}}\\
    &=\vb{f_{\mathbf{V}}\rb{\mathbf{x}}-f_{\mathbf{V}}\rb{\boldsymbol{\uchi}}}
    \leq \norm{\mathbf{\Omega}}_2\cdot \norm{f}_\infty\cdot\norm{\mathbf{x}-\boldsymbol{\uchi}}_2.
\end{split}
\label{eq:Lipsch_absg}
\end{align}
The final step is to show that $g_r\rb{x_1}$ is also Lipschitz, i.e., that the supremum does not change the Lipschitz constant. To this end, we note the following two properties of the supremum. Specifically, for $\forall \epsilon>0$ and $\forall x_1,\uchi_1\in\R$, $\exists \bbf{x}_\epsilon,\bbs{\uchi}_\epsilon \in \cR_{\bbf{b}}$ such that
\begin{gather}
\begin{split}
    \vb{g_{\bbs{\uchi}_\epsilon}\hspace{-0.3em}\rb{x_1}}\leq g_r\rb{x_1}<\vb{g_{\bbf{x}_\epsilon}\hspace{-0.3em}\rb{x_1}}+\epsilon,\\
    \vb{g_{\bbf{x}_\epsilon}\hspace{-0.3em}\rb{\uchi_1}}\leq g_r\rb{\uchi_1}<\vb{g_{\bbs{\uchi}_\epsilon}\hspace{-0.3em}\rb{\uchi_1}}+\epsilon.
\end{split}
\label{eq:supp_conds}
\end{gather}
We derive that $ g_r\rb{x_1}-g_r\rb{\uchi_1}\leq \vb{g_{\bbf{x}_\epsilon}\rb{x_1}}+\epsilon - \vb{g_{\bbf{x}_\epsilon}\rb{\uchi_1}}$. Similarly, we get $ g_r\rb{\uchi_1}-g_r\rb{x_1}\leq \vb{g_{\bbs{\uchi}_\epsilon}\rb{\uchi_1}}+\epsilon - \vb{g_{\bbs{\uchi}_\epsilon}\rb{x_1}}$. Finally, we restrict $x_1,\uchi_1$ in \eqref{eq:supp_conds} to satisfy $x_1,\uchi_1 > \tau_{\bbf{b},r-1}$ and select $\bbf{x},\bbs{\uchi}$ in \eqref{eq:Lipsch_absg} as $\bbf{x}=\bbf{x}_\epsilon,\bbs{\uchi}=\bbs{\uchi}_\epsilon$, which yields
\begin{align*}
   \vb{g_r\rb{x_1}-g_r\rb{\uchi_1}}
   &\leq \max\cb{\vb{g_{\bbf{x}_\epsilon}\rb{x_1}} - \vb{g_{\bbf{x}_\epsilon}\rb{\uchi_1}},\vb{g_{\bbs{\uchi}_\epsilon}\rb{\uchi_1}} - \vb{g_{\bbs{\uchi}_\epsilon}\rb{x_1}}}+\epsilon\\
   &\leq\max\cb{\Big\lvert{\vb{g_{\bbf{x}_\epsilon}\rb{x_1}} - \vb{g_{\bbf{x}_\epsilon}\rb{\uchi_1}}}\Big\rvert,\vb{\vb{g_{\bbs{\uchi}_\epsilon}\rb{\uchi_1}} - \vb{g_{\bbs{\uchi}_\epsilon}\rb{x_1}}}}+\epsilon\\
   &\leq \vb{{\Omega}_1}\cdot \norm{f}_\infty\cdot\vb{x_1-\uchi_1}+\epsilon,\quad\forall \epsilon>0.
\end{align*}
Taking $\epsilon\rightarrow0$ above proves the required result.
\end{proof}

We begin by evaluating $g_{\bbf{x},1}\rb{\tau_{\bbf{b},0}}$ 
\begin{equation}
g_{\bbf{x},1}\rb{\tau_{\bbf{b},0}}=f_{\bbf{x}}\rb{\tau_{\bbf{b},0}}-hM_{\bbf{b}}=f_{\bbf{x}}\rb{\tau_{\bbf{b},0}}-h\floor{\frac{\inf_{\bbs{\uchi}\in\cR_{\bbf{b}}}f_{\bbs{\uchi}}\rb{\tau_{\bbf{b},0}}+\lambda}{h}}+h.    
\end{equation}
Using that $x-1<\floor{x}\leq x$ we derive
\begin{align}
\begin{split}
    \inf_{\bbf{x}\in\cR_{\bbf{b}}} g_{\bbf{x},1}\rb{\tau_{\bbf{b},0}}&\geq \inf_{\bbf{x}\in\cR_{\bbf{b}}} \sqb{f_{\bbf{x}}\rb{\tau_{\bbf{b},0}}-h\tfrac{\inf_{\bbs{\uchi}\in\cR_{\bbf{b}}}f_{\bbs{\uchi}}\rb{\tau_{\bbf{b},0}}+\lambda}{h}+h}\\
    &= -\lambda+h\\
    \sup_{\bbf{x}\in\cR_{\bbf{b}}} g_{\bbf{x},1}\rb{\tau_{\bbf{b},0}}&\leq \sup_{\bbf{x}\in\cR_{\bbf{b}}} \sqb{f_{\bbf{x}}\rb{\tau_{\bbf{b},0}}-h\tfrac{\inf_{\bbs{\uchi}\in\cR_{\bbf{b}}}f_{\bbs{\uchi}}\rb{\tau_{\bbf{b},0}}+\lambda}{h}+2h}\\
    &\leq-\lambda+2h+\df \leq \lambda-h.
\end{split}
\label{eq:init}
\end{align}
Then, given that
\[g_1\rb{\tau_{\bbf{b},0}}=\sup_{\bbf{x}\in\cR_{\bbf{b}}}\vb{g_{\bbf{x},1}\rb{\tau_{\bbf{b},0}}}=\max\cb{\sup_{\bbf{x}\in\cR_{\bbf{b}}}{g_{\bbf{x},1}\rb{\tau_{\bbf{b},0}}},-\inf_{\bbf{x}\in\cR_{\bbf{b}}}{g_{\bbf{x},1}\rb{\tau_{\bbf{b},0}}}},
\]
we get $g_1\rb{\tau_{\bbf{b},0}}\leq \lambda-h$. Furthermore, $g_1$ is Lipschitz-continuous due to Lemma \ref{lem:gr_Lipschitz_cont}, and therefore continuous.  Thus, given that $\tau_{\bbf{b},1}=\inf \cb{x_1>\tau_{\bbf{b},0}\setsep g_1\rb{x_1}=\lambda}$ is true by definition, we get $g_1\rb{\tau_{\bbf{b},1}}=\lambda$. 

We then investigate the variation of function $g_1$ between folding times $\tau_{\bbf{b},0}$ and $\tau_{\bbf{b},1}$ as follows. Using Lemma \ref{lem:gr_Lipschitz_cont}, we get \[g_1\rb{\tau_{\bbf{b},0}}-g_1\rb{\tau_{\bbf{b},1}}\leq \Omega_1 \norm{f}_{\infty} \cdot \vb{\tau_{\bbf{b},1}-\tau_{\bbf{b},0}},\] 
and then we use that $g_1(\tau_{\bbf{b},1})=\lambda$ and $g_1(\tau_{\bbf{b},0})<\lambda-h$ to derive
\[h\leq\vb{g_1\rb{\tau_{\bbf{b},0}}-g_1\rb{\tau_{\bbf{b},1}}}.\]
Therefore the folding times satisfy the following bound
\begin{equation}
\vb{\tau_{\bbf{b},1}-\tau_{\bbf{b},0}}\geq \frac{h }{\Omega_1 \norm{f}_\infty}.
\label{eq:separation_MD_aux}
\end{equation}
We note the resemblance between \eqref{eq:separation_MD_aux} the \D case \eqref{eq:separation}. Next, we will show by induction that $g_r\rb{\tau_{\bbf{b},r-1}}\leq \lambda-h,\  g_r\rb{\tau_{\bbf{b},r}}= \lambda$ and then $\vb{\tau_{\bbf{b},r}-\tau_{\bbf{b},r-1}}\geq \frac{h}{\Omega_1 \norm{f}_\infty}$ holds where sequence $\cb{\tau_{\bbf{b},1},\dots,\tau_{\bbf{b},r}}$ is computed according to \eqref{eq:tau_recurrent}. The \emph{base case} $r=1$ is shown in the derivation to \eqref{eq:separation_MD_aux}. For the \emph{induction step} we proceed with computing $g_{r+1}\rb{\tau_{\bbf{b},r}}$.
\begin{gather}
    g_{r+1}\rb{\tau_{\bbf{b},r}}=\sup_{\xinR} \vb{g_{\bbf{x},r+1}\rb{\tau_{\bbf{b},r}}}=\sup_{\xinR} \vb{f_{\bbf{x}}\rb{\tau_{\bbf{b},r}}-\varepsilon_{\bbf{b},r}\rb{\tau_{\bbf{b},r}}}\\
    =\sup_{\xinR} \vb{f_{\bbf{x}}\rb{\tau_{\bbf{b},r}}-\varepsilon_{\bbf{b},r-1}\rb{\tau_{\bbf{b},r}}+h s_{\bbf{b},r} \ind_{\left[\tau_{\bbf{b},r}\infty\right)}\rb{\tau_{\bbf{b},r}}}\\
    =\sup_{\xinR}\vb{g_{\bbf{x},r}\rb{\tau_{\bbf{b},r}}-h s_{\bbf{b},r}},\ \mathrm{where}\ s_{\bbf{b},r}=\mathrm{sign} \sqb{g_{\bbf{b}B,r}\rb{\tau_{\bbf{b},r}}}.
    \label{eq:1}
\end{gather}
We have that $\sup_{\xinR}\vb{g_{\bbf{x},r}\rb{\tau_{\bbf{b},r}}}=\lambda$, which leads to two possible cases
\begin{enumerate}[leftmargin=1\parindent]
    \item $\sup_{\xinR}\vb{g_{\bbf{x},r}\rb{\tau_{\bbf{b},r}}}=\sup_{\xinR}{g_{\bbf{x},r}\rb{\tau_{\bbf{b},r}}}=\lambda$
    
    Here we use that \[\sup_{\xinR}{g_{\bbf{x},r}\rb{\tau_{\bbf{b},r}}}-\inf_{\xinR}{g_{\bbf{x},r}\rb{\tau_{\bbf{b},r}}}=\sup_{\xinR}{f_{\bbf{x}}\rb{\tau_{\bbf{b},r}}}-\inf_{\xinR}{f_{\bbf{x}}\rb{\tau_{\bbf{b},r}}}<\df ,\]
    which then implies $\inf_{\xinR}{g_{\bbf{x},r}\rb{\tau_{\bbf{b},r}}}>\lambda-\df >\lambda-h/2>0$. Then 
    \[\mathrm{sign} \sqb{g_{\bbf{x},r}\rb{\tau_{\bbf{b},r}}}=s_{\bbf{b},r}=\mathrm{sign} \sqb{g_{\bbf{b}B,r}\rb{\tau_{\bbf{b},r}}}=1.\]
    Then, continuing the derivation in \eqref{eq:1}, we get
    \begin{equation}
    g_{r+1}\rb{\tau_{\bbf{b},r}}=\sup_{\xinR}\vb{g_{\bbf{x},r}\rb{\tau_{\bbf{b},r}}-h s_{\bbf{b},r}}=\sup_{\xinR}\vb{g_{\bbf{x},r}\rb{\tau_{\bbf{b},r}}-h}.
    \label{eq:2}
    \end{equation}
    
    Next, we evaluate the sign of $g_{\bbf{x},r}\rb{\tau_{\bbf{b},r}}-h$ as follows  
    \begin{align}
        g_{\bbf{x},r}\rb{\tau_{\bbf{b},r}}-h&\leq\sup_{\xinR}g_{\bbf{x},r}\rb{\tau_{\bbf{b},r}}-h=\lambda-h,\\
        g_{\bbf{x},r}\rb{\tau_{\bbf{b},r}}-h&\geq\inf_{\xinR}g_{\bbf{x},r}\rb{\tau_{\bbf{b},r}}-h=\lambda-\df -h\\
        &\geq \lambda-\frac{3h}{2}>0.
    \end{align}    
    
    The final inequality follows from the assumption $h<2\lambda/3$. It follows that \eqref{eq:2}
    \begin{equation}
        g_{r+1}\rb{\tau_{\bbf{b},r}}=\sup_{\xinR}{g_{\bbf{x},r}\rb{\tau_{\bbf{b},r}}-h}\leq\lambda-h.
        \label{eq:3}
    \end{equation}
    
    \item $\sup_{\xinR}\vb{g_{\bbf{x},r}\rb{\tau_{\bbf{b},r}}}=-\inf_{\xinR}{g_{\bbf{x},r}\rb{\tau_{\bbf{b},r}}}=\lambda$. 
As before, here we prove that $\mathrm{sign} \sqb{g_{\bbf{x},r}\rb{\tau_{\bbf{b},r}}}=-1$, $ g_{r+1}\rb{\tau_{\bbf{b},r}}=\sup_{\xinR}\vb{g_{\bbf{x},r}\rb{\tau_{\bbf{b},r}}+h}$, and finally $g_{\bbf{x},r}\rb{\tau_{\bbf{b},r}}+h\in[-\lambda+h,0)$ which leads to \eqref{eq:3}.
\end{enumerate}
As in the \emph{base case} $r=1$, given that $\tau_{\bbf{b},r+1}=\inf \cb{x_1>\tau_{\bbf{b},r}\setsep g_{r+1}\rb{x_1}=\lambda}$ is true by definition and using the continuity of $g_{r+1}$, we get $g_{r+1}\rb{\tau_{\bbf{b},r+1}}=\lambda$. Using \eqref{eq:3} and the same reasoning as in the \emph{base case} leading to \eqref{eq:separation_MD_aux} the proposition follows. 

\else
    \begin{proof}
    The proof is in Section \ref{sect:proofs_prop}.
    \end{proof}
\fi

\begin{prop}[Well-defined operator]
\label{prop:well_defined_op}
For input $f\in\pwd$ operator $\MOh^D$ in Definition \ref{def:cont_modulo} is \emph{well-defined} if $\df <\min\cb{h/2,2\lambda-3h}$, where $\df$ satisfies \eqref{eq:DelfB}.
\end{prop}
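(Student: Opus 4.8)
The plan is to show that the recursive prescription of Definition \ref{def:cont_moduloMD} produces, for every band index $\bbf{b}\in\ZDred$, a uniquely determined sequence of folding times $\tau_{\bbf{b},r}$, signs $s_{\bbf{b},r}$, and residual functions $\varepsilon_{\bbf{b},r}$, so that the output $z$ is unambiguously defined at every point. I would argue by induction on $r$, reusing the quantitative estimates already assembled in the proof of Proposition \ref{prop:fold_sep_multid}; indeed, the two hypotheses $\df<h/2$ and $\df<2\lambda-3h$ enter precisely at the two places where an ambiguity could arise.

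First I would settle the base of the recursion. The integer $M_{\bbf{b}}$ is well-defined because $\inf_{\bbf{x}\in\cR_{\bbf{b}}}f_{\bbf{x}}\rb{0}$ is finite (since $\norm{f}_\infty<\infty$), so $\varepsilon_{\bbf{b},0}=hM_{\bbf{b}}$ is a genuine constant. The choice of $M_{\bbf{b}}$ together with $\df<2\lambda-3h$ then forces the initial spread into range: exactly as in \eqref{eq:init}, one has $-\lambda+h\leq\inf_{\bbf{x}\in\cR_{\bbf{b}}}g_{\bbf{x},1}\rb{0}$ and $\sup_{\bbf{x}\in\cR_{\bbf{b}}}g_{\bbf{x},1}\rb{0}\leq-\lambda+2h+\df\leq\lambda-h$, whence $g_1\rb{0}\leq\lambda-h<\lambda$. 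This shows the construction starts in a consistent state, with no fold forced at $\tau_{\bbf{b},0}=0$.

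Next I would verify that each recursion step is unambiguous. By Lemma \ref{lem:gr_Lipschitz_cont} the map $g_{r+1}\rb{x_1}=\sup_{\bbf{x}\in\cR_{\bbf{b}}}\vb{f_{\bbf{x}}\rb{x_1}-\varepsilon_{\bbf{b},r}\rb{x_1}}$ is Lipschitz, hence continuous, so the level set $\cb{x_1>\tau_{\bbf{b},r}\setsep g_{r+1}\rb{x_1}=\lambda}$ is closed; if it is nonempty its infimum $\tau_{\bbf{b},r+1}$ is attained and $g_{r+1}\rb{\tau_{\bbf{b},r+1}}=\lambda$, while if it is empty \eqref{eq:Rmax} sets $R_{\bbf{b}}^+=r$ and terminates the recursion in that band. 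At an attained folding time the within-band spread obeys $\sup_{\bbf{x}\in\cR_{\bbf{b}}}g_{\bbf{x},r+1}\rb{\tau_{\bbf{b},r+1}}-\inf_{\bbf{x}\in\cR_{\bbf{b}}}g_{\bbf{x},r+1}\rb{\tau_{\bbf{b},r+1}}\leq\df<h/2<\lambda$; since the supremum of the absolute value equals $\lambda$, every slice value $g_{\bbf{x},r+1}\rb{\tau_{\bbf{b},r+1}}$ shares one common sign and satisfies $\vb{g_{\bbf{x},r+1}\rb{\tau_{\bbf{b},r+1}}}>\lambda-\df>0$. In particular the corner value $g_{B\bbf{b},r+1}\rb{\tau_{\bbf{b},r+1}}$ is nonzero and carries that common sign, so $s_{\bbf{b},r+1}$ in \eqref{eq:sign_recurrent} is well-defined and independent of which slice is used; \eqref{eq:eps_recurrent} then defines $\varepsilon_{\bbf{b},r+1}$ unambiguously as a single function of $x_1$ across the whole band. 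This is the heart of the argument and the \emph{main obstacle}: that the sign read off at the single corner $B\bbf{b}$ coincides with the folding direction of every slice in the band, which is exactly what $\df<h/2$ guarantees.

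Finally I would confirm that the residual in \eqref{eq:eps} is well-defined pointwise. By Proposition \ref{prop:fold_sep_multid} consecutive folds satisfy $\tau_{\bbf{b},r+1}-\tau_{\bbf{b},r}\geq h/\rb{\Omega_1\norm{f}_\infty}>0$, so the folding times have no finite accumulation point and, for any fixed $x_1$, only finitely many indicators $\ind_{\left[\tau_{\bbf{b},r},\infty\right)}\rb{x_1}$ are nonzero; hence the sum in \eqref{eq:eps} is a finite sum at each point, even when $R_{\bbf{b}}^+=\infty$. The extension to $x_1<0$ reduces to the same analysis applied to $f^-$, which lies in $\pwd$ as well. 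Collecting these observations yields a single, well-defined output $z=\MOh^D f$.
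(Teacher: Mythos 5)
There is a genuine gap, and it is precisely the point the paper's entire proof is devoted to. Definition \ref{def:cont_moduloMD} requires the existence of a \emph{finite} integer $R_{\bbf{b}}^+\in\Z_+$ satisfying \eqref{eq:Rmax}: the residual $\varepsilon_f$ in \eqref{eq:eps} is the finite sum $\sum_{r=0}^{R_{\bbf{b}}^+}$, and the terminal function $\varepsilon_{\bbf{b},R_{\bbf{b}}^+}$ must exist. Your final paragraph explicitly contemplates ``$R_{\bbf{b}}^+=\infty$'' and argues that this is harmless because the folding times do not accumulate, so the sum is pointwise finite. That replaces the definition by a weaker one rather than verifying it: non-accumulation of folds is perfectly consistent with infinitely many folds marching off to infinity, which is exactly the scenario \eqref{eq:Rmax} forbids and which must therefore be ruled out. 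Nothing in your proposal rules it out; the local unambiguity of each recursion step (attained infimum, common sign across the band, consistency of the corner sign) is a useful sanity check, and parts of it are indeed implicit in the paper's proof of Proposition \ref{prop:fold_sep_multid}, but it does not imply termination.

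The paper proves termination by contradiction, and the argument needs an ingredient you never invoke: $f\in\pwd$ implies each slice $f_{\bbf{x}}$ lies in $L^2\rb{\R}$ and is bandlimited, hence $f_{\bbf{x}}\rb{x_1}\rightarrow0$ as $x_1\rightarrow\infty$; a compactness/subsequence argument (Lemma \ref{lem:unif_convergence}) upgrades this to decay that is \emph{uniform} over the compact band $\cR_{\bbf{b}}$. If infinitely many folds existed, Proposition \ref{prop:fold_sep_multid} would force $\tau_{\bbf{b},r}\rightarrow\infty$, so some fold $\tau_{\bbf{b},r^*}$ occurs beyond the point $x_1^*$ past which $\vb{f_{\bbf{x}}\rb{x_1}}<h/4$ uniformly. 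After that fold one has $g_{r^*+1}\rb{\tau_{\bbf{b},r^*}}\leq\lambda-h$, and for every $x_1>\tau_{\bbf{b},r^*}$ the additional variation of each slice is bounded by $h/2$, giving
\begin{equation*}
    g_{r^*+1}\rb{x_1}\leq \rb{\lambda-h}+\frac{h}{2}=\lambda-\frac{h}{2}<\lambda ,
\end{equation*}
so the threshold $\lambda$ can never again be reached---contradicting the assumed existence of $\tau_{\bbf{b},r^*+1}$. This establishes that $R_{\bbf{b}}^+$ exists and is finite, which is what ``well-defined'' means here. To repair your proof you would need to add exactly this termination argument (uniform decay plus the post-fold margin $\lambda-h$); the separation bound alone cannot do the job.
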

\if\WellDefInAppendix\FL
    \begin{proof}
    
For this to be true we need to show the existence of $R_{\bbf{b}}^+$ in \eqref{eq:Rmax}. First, we derive some preliminary properties of $f_{\bbf{x}}$. Due to Proposition \ref{prop:pw_for_slices} we have that $f_{\bbf{x}}\in\mathsf{PW}_{\Omega_1}\rb{\R}$ and implicitly $f_{\bbf{x}}\in L^2\rb{\R}$. 
Using the properties of the $\pwd$ space and $\norm{f}_\infty<\infty$, it follows that 
\begin{equation}
\lim_{x_1\rightarrow\infty}f_{\bbf{x}}\rb{x_1}=0,\quad \forall \bbf{x}\in\R^{D-1}.
\label{eq:lim_zero}
\end{equation}
We require to show that the limit in \eqref{eq:lim_zero} is uniform for all $\bbf{x}$, which is done in the following lemma.
\begin{lem}[Uniform convergence]
\label{lem:unif_convergence}
The following holds true
\begin{equation*}
    \forall \epsilon>0,\exists x_1^*>0,\forall x_1>x_1^*,\forall \bbf{x}\in\cR_{\bbf{b}},\quad\mathrm{s.t.}\quad \vb{f_{\bbf{x}}\rb{x_1}}<\epsilon.
\end{equation*}
\end{lem}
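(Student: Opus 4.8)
The plan is to upgrade the pointwise limit \eqref{eq:lim_zero} to a uniform one over the bounded set $\cR_{\bbf{b}}$ by a standard compactness argument, in which the role of equicontinuity is played---crucially, \emph{uniformly in $x_1$}---by the Lipschitz estimate \eqref{eq:Lipschitz_bound} already established in the proof of Lemma \ref{lem:gr_Lipschitz_cont}. The whole argument reduces the uniform claim to a finite number of pointwise limits, which are supplied directly by \eqref{eq:lim_zero}.

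First I would assemble the two ingredients. On one hand, \eqref{eq:lim_zero} gives, for each fixed $\bbf{x}$, that $f_{\bbf{x}}\rb{x_1}\to 0$ as $x_1\to\infty$. On the other hand, writing $L\triangleq\norm{\bs{\Omega}}_2\norm{f}_\infty$ and specializing \eqref{eq:Lipschitz_bound} to two points that share the same first coordinate, $\mathbf{x}=\sqb{x_1,\bbf{x}}$ and $\bbs{\uchi}=\sqb{x_1,\bbs{y}}$, and using $f_{\bbf{x}}\rb{x_1}=f_{\mathbf{V}}\rb{\sqb{x_1,\bbf{x}}}$, I obtain
\begin{equation*}
\vb{f_{\bbf{x}}\rb{x_1}-f_{\bbs{y}}\rb{x_1}}\leq L\norm{\bbf{x}-\bbs{y}}_2,\quad\forall x_1\in\R,\ \forall\bbf{x},\bbs{y}\in\RDred,
\end{equation*}
the essential feature being that the constant $L$ is independent of $x_1$. (If $L=0$ then $f_{\mathbf{V}}$ is constant and hence, being in $L^2$, identically zero, so the statement is trivial; assume $L>0$.)

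Next, given $\epsilon>0$, set $\delta\triangleq\epsilon/(2L)$. Since $\cR_{\bbf{b}}=\prod_{d=2}^D\sqb{b_dB,(b_d+1)B}$ is bounded by \eqref{eq:multiD_rectangular_intervals}, its closure is compact and can be covered by finitely many balls of radius $\delta$ with centers $\bbs{y}_1,\dots,\bbs{y}_N\in\RDred$. For each center, \eqref{eq:lim_zero} provides a threshold $x_{1,i}^*$ with $\vb{f_{\bbs{y}_i}\rb{x_1}}<\epsilon/2$ whenever $x_1>x_{1,i}^*$; put $x_1^*\triangleq\max_{i}x_{1,i}^*$. Then for any $\bbf{x}\in\cR_{\bbf{b}}$ I would choose a center $\bbs{y}_i$ with $\norm{\bbf{x}-\bbs{y}_i}_2\leq\delta$, so that for every $x_1>x_1^*$,
\begin{equation*}
\vb{f_{\bbf{x}}\rb{x_1}}\leq\vb{f_{\bbf{x}}\rb{x_1}-f_{\bbs{y}_i}\rb{x_1}}+\vb{f_{\bbs{y}_i}\rb{x_1}}\leq L\delta+\vb{f_{\bbs{y}_i}\rb{x_1}}<\frac{\epsilon}{2}+\frac{\epsilon}{2}=\epsilon,
\end{equation*}
which is exactly the asserted uniform bound. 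The only genuinely delicate point is the reduction to finitely many base points: it succeeds only because the boundedness (hence precompactness) of $\cR_{\bbf{b}}$ and the $x_1$-uniformity of the Lipschitz constant $L$ are available \emph{simultaneously}. Were $L$ allowed to grow with $x_1$, no single finite cover could be made to control all large $x_1$ at once, so emphasizing that uniformity is where I would place the weight of the argument.
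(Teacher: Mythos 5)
Your proof is correct, but it takes a genuinely different route from the paper's. The paper argues by contradiction: negating the uniform statement produces sequences $x_{1,n}>n$ and $\bbf{x}_n\in\cR_{\bbf{b}}$ with $\vb{f_{\bbf{x}_n}\rb{x_{1,n}}}\geq\epsilon$; boundedness of $\cR_{\bbf{b}}$ then yields (Bolzano--Weierstrass) a subsequence $\bbf{x}_{\kappa_m}\rightarrow\bbf{x}_\infty\in\mathrm{cl}\rb{\cR_{\bbf{b}}}$, and the Lipschitz bound \eqref{eq:Lipschitz_bound} transfers the lower bound $\epsilon$ to $\vb{f_{\bbf{x}_\infty}\rb{x_{1,\kappa_m}}}$ up to a vanishing error, contradicting the pointwise decay \eqref{eq:lim_zero} at the single point $\bbf{x}_\infty$. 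You instead give a direct finite $\delta$-net argument: cover $\mathrm{cl}\rb{\cR_{\bbf{b}}}$ by finitely many balls of radius $\epsilon/(2L)$, invoke \eqref{eq:lim_zero} only at the finitely many centers, take the maximum of the resulting thresholds, and close with the triangle inequality and the $x_1$-uniform Lipschitz estimate. Both arguments rest on exactly the same ingredients (pointwise decay, an $x_1$-independent Lipschitz constant, and compactness of $\mathrm{cl}\rb{\cR_{\bbf{b}}}$), so neither is more general; the difference is that yours is constructive --- it exhibits $x_1^*$ explicitly as a finite maximum and makes the role of equicontinuity transparent --- whereas the paper's contradiction argument is quicker to set up but never produces the threshold, only the impossibility of its absence. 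Two small remarks on your write-up: the explicit handling of the degenerate case $L=0$ is a nicety the paper can skip because its argument never divides by $L$, and your description of $\cR_{\bbf{b}}$ as a closed product is a harmless slip (by \eqref{eq:multiD_rectangular_intervals} it is half-open), since only boundedness of the set is actually used.
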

\begin{proof}
We prove by contradiction. Thus we assume
\begin{equation*}
    \exists \epsilon>0,\forall x_1^*>0,\exists x_1>x_1^*,\exists \bbf{x}\in\cR_{\bbf{b}},\quad\mathrm{s.t.}\quad \vb{f_{\bbf{x}}\rb{x_1}}\geq\epsilon.
\end{equation*}
We select $x_1^*=n, x_1=x_{1,n}, \bbf{x}=\bbf{x}_n\in\cR_{\bbf{b}}$ which leads to
\begin{equation}
    \exists \epsilon>0,\forall n\in\mathbb{N}^*,\exists x_{1,n}>n,\exists \bbf{x}_n\in\cR_{\bbf{b}},\quad\mathrm{s.t.}\quad \vb{f_{\bbf{x}_n}\rb{x_{1,n}}}\geq\epsilon.
    \label{eq:sequence}
\end{equation}
We use that, because $\cR_{\bbf{b}}$ is bounded, then any sequence $\bbf{x}_n\in\cR_{\bbf{b}}$ has a subsequence $\bbf{x}_{\kappa_m}\in\cR_{\bbf{b}}$ that converges to a point in the closure of $\cR_{\bbf{b}}$, i.e., 
\begin{equation*}
    \lim_{m\rightarrow\infty} \bbf{x}_{\kappa_m} = \bbf{x}_{\infty}\in\mathrm{cl}\rb{\cR_{\bbf{b}}}.
\end{equation*}
We then select $n=\kappa_m$ in \eqref{eq:sequence} and get
\begin{equation*}
    \exists \epsilon>0,\forall m\in\mathbb{N}^*,\exists x_{1,\kappa_m}>\kappa_m,\exists \bbf{x}_{\kappa_m}\in\cR_{\bbf{b}},\quad\mathrm{s.t.}\quad \vb{f_{\bbf{x}_{\kappa_m}}\rb{x_{1,{\kappa_m}}}}\geq\epsilon.
\end{equation*}
Furthermore, $f$ is a Lipschitz-continuous function with constant $\norm{\bs{\Omega}}_2\cdot\norm{f}_\infty$ as shown in \eqref{eq:Lipschitz_bound}, which implies that
\begin{align*}
    \vb{f_{\bbf{x}_{\kappa_m}}\rb{x_{1,\kappa_m}}}-\vb{f_{\bbf{x}_{\infty}}\rb{x_{1,\kappa_m}}}&\leq
    \vb{f_{\bbf{x}_{\kappa_m}}\rb{x_{1,\kappa_m}}-f_{\bbf{x}_{\infty}}\rb{x_{1,\kappa_m}}}\\
    &\leq\norm{\bs{\Omega}}_2\cdot\norm{f}_\infty\cdot\norm{\bbf{x}_{\kappa_m}-\bbf{x}_\infty}_2.
\end{align*}
This allows defining the following lower bound on $\vb{f_{\bbf{x}_{\infty}}\rb{x_{1,\kappa_m}}}$
\begin{equation*}
    \vb{f_{\bbf{x}_{\infty}}\rb{x_{1,\kappa_m}}}\geq\vb{f_{\bbf{x}_{\kappa_m}}\rb{x_{1,\kappa_m}}}-\norm{\bs{\Omega}}_2\cdot\norm{f}_\infty\cdot\norm{\bbf{x}_{\kappa_m}-\bbf{x}_\infty}_2.
\end{equation*}
We know that $\vb{f_{\bbf{x}_{\kappa_m}}\rb{x_{1,\kappa_m}}}\geq\epsilon$ and $\lim_{m\rightarrow\infty}\norm{\bbf{x}_{\kappa_m}-\bbf{x}_\infty}_2 = 0$. Then it follows that $\vb{f_{\bbf{x}_{\infty}}\rb{x_{1,\kappa_m}}}$ cannot converge to $0$ for $m\rightarrow\infty$, which directly contradicts \eqref{eq:lim_zero}. Then the starting assumption is wrong, and the Lemma follows.
\end{proof}

We approach this proof by contradiction. If we assume that $R_{\bbf{b}}^+\geq0$ does not exist it follows that $\tau_{\bbf{b},r}$ in \eqref{eq:tau_recurrent} is well-defined for $r\in\Z_+$. Given that, by definition, $\tau_{\bbf{b},r}$ is increasing as a function of $r$, then due to Proposition \ref{prop:fold_sep_multid} it follows that $\lim_{r\rightarrow\infty}\tau_{\bbf{b},r}=\infty$.
We use Lemma \ref{lem:unif_convergence} for $\epsilon=\frac{h}{4}$, which yields
$\exists x_1^*>0$ such that $\vb{f_{\bbf{x}}\rb{x_1}}<\frac{h}{4}, \forall \bbf{x}\in\cR_{\bbf{b}}, \forall x_1>x_1^*$. Given our assumption on $\tau_{\bbf{b},r}$ then $\exists r^*\in\Z_+$ such that $\tau_{\bbf{b},r^*}>x_1^*$. Thus, by definition, $\sup_{\bbf{x}\in\cR_{\bbf{b}}} \vb{f_{\bbf{x}}\rb{\tau_{\bbf{b},r^*}}-\varepsilon_{\bbf{b},r^*}\rb{\tau_{\bbf{b},r^*}}}=\lambda$. 
Using \eqref{eq:1}--\eqref{eq:3} where $r$ is replaced by $r^*$ and functions $g_r, g_{\bbf{x},r}$ are defined in \eqref{eq:gxr},\eqref{eq:gr}, one can show that $g_{r^*+1}\rb{\tau_{\bbf{b},r}}\leq\lambda-h$. The next folding time $\tau_{\bbf{b},r^*+1}$ satisfies $g_{r^*+1}\rb{\tau_{\bbf{b},r^*+1}}=\lambda$. In the following we will show this is not possible. Specifically, for $x_1>\tau_{\bbf{b},r^*}$,
\begin{align*}
    g_{r^*+1}\rb{x_1}&= \sup \vb{f_{\bbf{x}}\rb{x_1}-\varepsilon_{\bbf{b},r^*}\rb{x_1}}=\sup \vb{f_{\bbf{x}}\rb{x_1}-\varepsilon_{\bbf{b},r^*}\rb{\tau_{\bbf{b},r^*}}}\\
    &=\sup \vb{f_{\bbf{x}}\rb{\tau_{\bbf{b},r^*}}-\varepsilon_{\bbf{b},r^*}\rb{\tau_{\bbf{b},r^*}}+f_{\bbf{x}}\rb{x_1}-f_{\bbf{x}}\rb{\tau_{\bbf{b},r^*}}}\\
    &\leq \sup \vb{f_{\bbf{x}}\rb{\tau_{\bbf{b},r^*}}-\varepsilon_{\bbf{b},r^*}\rb{\tau_{\bbf{b},r^*}}}+\sup\vb{f_{\bbf{x}}\rb{x_1}-f_{\bbf{x}}\rb{\tau_{\bbf{b},r^*}}}\\
    &\leq \lambda-h + h/2 = \lambda-h/2 <\lambda.
\end{align*}
We conclude that the definition of $\tau_{\bbf{b},r^*+1}$ via \eqref{eq:tau_recurrent} is therefore not possible, and thus our assumption that $\tau_{\bbf{b},r}$ is well-defined for $r\in \Z$ is false, and the proposition follows.
    \end{proof}
\else
    \begin{proof}
    The proof is in Section \ref{sect:proofs_prop}.
    \end{proof}
\fi

\begin{prop}[Modulo output dynamic range]
Let $\MOh^D$ be the \MD modulo-hysteresis operator in Definition \ref{def:cont_moduloMD} and $f\in\pwd$ such that $\df$ \eqref{eq:DelfB} satisfies $\df <\min\cb{h/2,2\lambda-3h}$. Then $\MOh^D f\rb{\mathbf{x}} \in \sqb{-\lambda,\lambda}, \forall \mathbf{x} \in \RD$.
\label{prop:mod_dyn_range}
\end{prop}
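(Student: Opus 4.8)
The plan is to show that the output $z\rb{\mathbf{Vx}}=f_{\bbf{x}}\rb{x_1}-\varepsilon_f\rb{\mathbf{Vx}}$ never leaves $\sqb{-\lambda,\lambda}$ by localizing $x_1$ to a single folding interval and identifying the output there with the functions $g_{\bbf{x},r}$ and $g_r$ of Lemma \ref{lem:gr_Lipschitz_cont}. Fix a band index $\bbf{b}\in\ZDred$ and a point $\bbf{x}\in\cR_{\bbf{b}}$; by the reflection clause in Definition \ref{def:cont_moduloMD} it suffices to treat $x_1\geq0$, since the case $x_1<0$ follows verbatim for the reflected input $f^-$. As $\tau_{\bbf{b},0}=0$, the ray $\left[0,\infty\right)$ is partitioned by the folding times, which are strictly increasing by Proposition \ref{prop:fold_sep_multid}, into the intervals $\left[\tau_{\bbf{b},r},\tau_{\bbf{b},r+1}\right)$ for $r\in\cb{0,\dots,R_{\bbf{b}}^+-1}$ together with the terminal interval $\left[\tau_{\bbf{b},R_{\bbf{b}}^+},\infty\right)$, which is well-defined and exhausts the ray by Proposition \ref{prop:well_defined_op}.

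First I would pin down the residual on each interval. For $x_1\in\left[\tau_{\bbf{b},r},\tau_{\bbf{b},r+1}\right)$ every indicator $\ind_{\left[\tau_{\bbf{b},j},\infty\right)}\rb{x_1}$ appearing in \eqref{eq:eps} with $j>r$ vanishes, while those with $j\leq r$ equal $1$, so the telescoping recursion \eqref{eq:eps_recurrent} collapses to $\varepsilon_f\rb{\mathbf{Vx}}=\varepsilon_{\bbf{b},r}\rb{x_1}$. Consequently $z\rb{\mathbf{Vx}}=f_{\bbf{x}}\rb{x_1}-\varepsilon_{\bbf{b},r}\rb{x_1}=g_{\bbf{x},r+1}\rb{x_1}$ on this interval, whence $\vb{z\rb{\mathbf{Vx}}}\leq\sup_{\bbf{x}\in\cR_{\bbf{b}}}\vb{g_{\bbf{x},r+1}\rb{x_1}}=g_{r+1}\rb{x_1}$. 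The task therefore reduces to bounding $g_{r+1}$ by $\lambda$ on each interval.

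The core estimate is that $g_{r+1}\rb{x_1}\leq\lambda$ for $x_1\in\sqb{\tau_{\bbf{b},r},\tau_{\bbf{b},r+1}}$. From the induction already carried out in the proof of Proposition \ref{prop:fold_sep_multid} I may use $g_{r+1}\rb{\tau_{\bbf{b},r}}\leq\lambda-h$ and $g_{r+1}\rb{\tau_{\bbf{b},r+1}}=\lambda$. Since $g_{r+1}$ is continuous by Lemma \ref{lem:gr_Lipschitz_cont} and $\tau_{\bbf{b},r+1}$ is by definition the infimum of $\cb{x_1>\tau_{\bbf{b},r}\setsep g_{r+1}\rb{x_1}=\lambda}$, an intermediate-value argument forbids $g_{r+1}$ from exceeding $\lambda$ anywhere in $\rb{\tau_{\bbf{b},r},\tau_{\bbf{b},r+1}}$: starting strictly below $\lambda$ at $\tau_{\bbf{b},r}$, any excursion above $\lambda$ would force an earlier crossing of the level $\lambda$, contradicting the minimality of $\tau_{\bbf{b},r+1}$. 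Hence $g_{r+1}\leq\lambda$ on the whole closed interval, giving $\vb{z\rb{\mathbf{Vx}}}\leq\lambda$ there.

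It remains to treat the terminal interval $\left[\tau_{\bbf{b},R_{\bbf{b}}^+},\infty\right)$. Here I again have $g_{R_{\bbf{b}}^++1}\rb{\tau_{\bbf{b},R_{\bbf{b}}^+}}\leq\lambda-h<\lambda$, while the defining property \eqref{eq:Rmax} of $R_{\bbf{b}}^+$ states precisely that $g_{R_{\bbf{b}}^++1}\rb{x_1}\neq\lambda$ for every $x_1>\tau_{\bbf{b},R_{\bbf{b}}^+}$; continuity then pins $g_{R_{\bbf{b}}^++1}$ strictly below $\lambda$ on the entire ray, so $\vb{z\rb{\mathbf{Vx}}}<\lambda$. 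Collecting the three cases over all $r$, all $\bbf{x}\in\cR_{\bbf{b}}$ and all $\bbf{b}\in\ZDred$, and appending the reflected half-line, yields $\MOh^D f\rb{\mathbf{x}}\in\sqb{-\lambda,\lambda}$ for every $\mathbf{x}\in\RD$. I expect the only delicate point to be this intermediate-value step, namely arguing that the first time $g_{r+1}$ attains $\lambda$ is also the first time it reaches or exceeds $\lambda$; this is exactly where the continuity of $g_{r+1}$ from Lemma \ref{lem:gr_Lipschitz_cont} is indispensable, the remainder being bookkeeping on the indicator sums.
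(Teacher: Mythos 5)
Your proof is correct and follows essentially the same route as the paper's: localize $x_1$ to a folding interval $\left[\tau_{\bbf{b},r},\tau_{\bbf{b},r+1}\right)$, identify $z\rb{\mathbf{Vx}}$ with $g_{\bbf{x},r+1}\rb{x_1}$, and combine $g_{r+1}\rb{\tau_{\bbf{b},r}}\leq\lambda-h$ with the infimum definition of $\tau_{\bbf{b},r+1}$ to keep the output below $\lambda$, with the negative axis handled by reflection. In fact you are somewhat more careful than the paper, which leaves the intermediate-value step implicit and does not separately treat the terminal interval $\left[\tau_{\bbf{b},R_{\bbf{b}}^+},\infty\right)$ via \eqref{eq:Rmax} as you do.
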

\if\ModDynRangeInAppendix\FL
    
We first assume that $x_1$ satisfies $x_1\in\left[\tau_{\bbf{b},r},\tau_{\bbf{b},r+1}\right)$ and then extrapolate to the whole real axis. Given the definition of function $\varepsilon_{\bbf{b},r} $ \eqref{eq:eps_recurrent} it follows that $\varepsilon\rb{\mathbf{Vx}}=\varepsilon_{\bbf{b},r}\rb{x_1}$ and thus
\begin{equation*}
    z\rb{\mathbf{Vx}}=f_{\bbf{x}}\rb{x_1}-\varepsilon_{\bbf{b},r}\rb{x_1}=g_{\bbf{x},r+1}\rb{x_1}.
\end{equation*}
It was shown before that $g_{\bbf{x},r+1}\rb{\tau_{\bbf{b},r}}\leq \lambda-h$ \eqref{eq:3}. Furthermore, due to the definition of $\tau_{\bbf{b},r+1}$ \eqref{eq:tau_recurrent} we get that $g_{r+1}\rb{\tau_{\bbf{b},r+1}}=\lambda$ and $\vb{g_{\bbf{x},r+1}\rb{x_1}}<\lambda$ when our assumption $x_1\in\left[\tau_{\bbf{b},r},\tau_{\bbf{b},r+1}\right)$ holds. By repeating the process above for $r\in\Z_+$ we get $\vb{z\rb{\mathbf{Vx}}}<\lambda, \forall x_1\geq0$. For $x_1<0$, the process above is reproduced for $x_1\in\left(\tau_{\bbf{b},r-1},\tau_{\bbf{b},r}\right], \forall r\in\Z_-$.

\else
    \begin{proof}
    The proof is in Section \ref{sect:proofs_prop}.
    \end{proof}
\fi

Proposition \ref{prop:mod_dyn_range} shows that operator $\MOh^D$ has a similar effect as the \D modulo nonlinearity, in that it keeps a signal within a fixed dynamic range $\sqb{-\lambda,\lambda}$.

\begin{cor}[Bound for intra-band variation]
\label{cor:bound_fB}
The quantity $\df$ defined in Proposition \ref{prop:fold_sep_multid} can be bounded as 
\begin{equation}
    \df\leq \norm{f}_\infty \cdot  B\sqrt{D} \cdot \norm{\bs{\Omega}}_2.
\end{equation}
\end{cor}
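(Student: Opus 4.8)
The plan is to unfold the definition of $\df$ in \eqref{eq:DelfB} into a supremum of pointwise differences of $f_{\bbf{x}}$ and then control each such difference using the Lipschitz estimate \eqref{eq:Lipschitz_bound} that was already established for $f_{\mathbf{V}}$ in the proof of Lemma \ref{lem:gr_Lipschitz_cont}. First I would fix a band index $\bbf{b}\in\ZDred$ and a coordinate $x_1\in\R$, and rewrite the inner bracket of \eqref{eq:DelfB} as a single supremum over pairs,
\[
\sup_{\bbf{x}\in\cR_{\bbf{b}}} f_{\bbf{x}}\rb{x_1}-\inf_{\bbf{x}\in\cR_{\bbf{b}}} f_{\bbf{x}}\rb{x_1}=\sup_{\bbf{x},\bbs{\uchi}\in\cR_{\bbf{b}}}\sqb{f_{\bbf{x}}\rb{x_1}-f_{\bbs{\uchi}}\rb{x_1}},
\]
which is the elementary identity that the oscillation of a real-valued function over a set equals the supremum of its pairwise differences.

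Next I would invoke the Lipschitz bound. Since $f_{\bbf{x}}\rb{x_1}=f_{\mathbf{V}}\rb{\sqb{x_1,\bbf{x}}}$ and the two evaluation points $\sqb{x_1,\bbf{x}}$ and $\sqb{x_1,\bbs{\uchi}}$ share their first coordinate, applying \eqref{eq:Lipschitz_bound} gives
\[
f_{\bbf{x}}\rb{x_1}-f_{\bbs{\uchi}}\rb{x_1}\leq \norm{\bs{\Omega}}_2\cdot\norm{f}_\infty\cdot\norm{\bbf{x}-\bbs{\uchi}}_2,
\]
where the contribution of the first coordinate to the Euclidean distance vanishes, so the $D$-dimensional distance collapses to the $\rb{D-1}$-dimensional distance $\norm{\bbf{x}-\bbs{\uchi}}_2$ in $\RDred$.

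Finally I would bound the diameter of $\cR_{\bbf{b}}$. By \eqref{eq:multiD_rectangular_intervals} the set $\cR_{\bbf{b}}$ is a product of $D-1$ intervals each of length $B$, so any two of its points differ by at most $B$ in each of the $D-1$ coordinates, whence $\norm{\bbf{x}-\bbs{\uchi}}_2\leq B\sqrt{D-1}\leq B\sqrt{D}$. Substituting this and taking the supremum over $\bbf{x},\bbs{\uchi},x_1$ and $\bbf{b}$ yields $\df\leq\norm{f}_\infty\cdot B\sqrt{D}\cdot\norm{\bs{\Omega}}_2$, as claimed.

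The argument is essentially routine, so I do not expect a genuine obstacle; the only points needing care are the two reductions. One is the passage from the $\sup$–$\inf$ to a supremum of pairwise differences, which, if one wants to avoid assuming the extrema are attained, is cleanly handled by the same $\epsilon$-argument used for the supremum in \eqref{eq:supp_conds}. The other is the observation that the shared first coordinate makes the $D$-dimensional Lipschitz distance reduce exactly to the $\rb{D-1}$-dimensional one. I would also remark that the stated bound is deliberately loose, replacing the sharp hypercube diameter $B\sqrt{D-1}$ by the cleaner $B\sqrt{D}$.
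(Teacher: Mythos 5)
Your proof is correct and takes essentially the same route as the paper: both arguments rest on the Lipschitz estimate \eqref{eq:Lipschitz_bound} applied to two points sharing the first coordinate, combined with the bound $B\sqrt{D}$ on the diameter of $\cR_{\bbf{b}}$. The only difference is in the routine handling of the (possibly unattained) supremum and infimum---the paper extracts sequences converging to $\dfb$ and passes to the limit, while you invoke the oscillation identity $\sup - \inf = \sup$ of pairwise differences---which is a cosmetic rather than substantive distinction.
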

\if\BoundDelfBInAppendix\FL
    \begin{proof}
    
Let $\mathbf{x},\mathbf{\uchi}\in\cR_{\bbf{b}}$.
As shown before \eqref{eq:Lipschitz_bound}, $f_{\mathbf{V}}=f\rb{\mathbf{V}\cdot}$ is a Lipschitz-continuous function with constant $\norm{\mathbf{\Omega}}_2\cdot\norm{f}_\infty$. Then it follows that
\begin{align}
    \vb{f\rb{\mathbf{Vx}}-f\rb{\mathbf{V\uchi}}}&\leq \norm{\mathbf{\Omega}}_2\cdot\norm{f}_\infty\cdot\norm{\mathbf{x}-\mathbf{ \uchi}}_2\\
    &\leq \norm{\mathbf{\Omega}}_2\cdot\norm{f}_\infty\cdot B\sqrt{D}.
\end{align}

We recall that $\df$ satisfies \eqref{eq:DelfB}
\begin{equation}
\df\triangleq\sup_{\bbf{b}\in\ZDred,x_1\in\R} \sqb{\sup_{\bbf{x}\in \cR_{\bbf{b}}} f_{\bbf{x}}\rb{x_1}-\inf_{\bbf{x}\in \cR_{\bbf{b}}} f_{\bbf{x}}\rb{x_1}}.
\end{equation}

We fix a $\bbf{b}\in\ZDred$, and define $\dfb$ as
\begin{equation}
    \dfb\triangleq\sup_{x_1\in\R} \sqb{\sup_{\bbf{x}\in \cR_{\bbf{b}}} f_{\bbf{x}}\rb{x_1}-\inf_{\bbf{x}\in \cR_{\bbf{b}}} f_{\bbf{x}}\rb{x_1}}.
\end{equation}
We then use the property that there always exists a sequence in a set converging to the infimum or supremum. Therefore, $\exists \bbf{x}_n,\bbs{\uchi}_n, x_{1,n}$ such that 
\begin{equation}
    \lim_{n\rightarrow \infty}f_{\bbf{x}_n}\rb{x_{1,n}}-f_{\bbs{\uchi}_n}\rb{x_{1,n}} =\dfb.
\end{equation}
Given that $f_{\bbf{x}_n}\rb{x_{1,n}}$ converges to the supremum and $f_{\bbs{\uchi}_n}\rb{x_{1,n}}$ to the infimum, it follows that $\exists N\in\Z_+$ such that $f_{\bbf{x}_n}\rb{x_{1,n}}-f_{\bbs{\uchi}_n}\rb{x_{1,n}}\geq 0, \forall n \in \Z, n\geq N$. Then
\begin{align}
    f_{\bbf{x}_n}\rb{x_{1,n}}-f_{\bbs{\uchi}_n}\rb{x_{1,n}}&=\vb{f_{\bbf{x}_n}\rb{x_{1,n}}-f_{\bbs{\uchi}_n}\rb{x_{1,n}}},\quad \forall n\geq N\\
    &\leq \norm{f}_\infty \cdot  B\sqrt{D} \cdot \sqrt{\sum_{d=1}^D \Omega_d^2}.
\end{align}
By taking $n\rightarrow\infty$ above we have that $\dfb\leq \norm{f}_\infty \cdot  B\sqrt{D} \cdot \sqrt{\sum_{d=1}^D \Omega_d^2},\quad\forall\bbf{b}\in\ZDred$, and computing $\df=\sup_{\bbf{b}\in\ZDred} \dfb$ leads to the desired bound.

    \end{proof}
\else
    \begin{proof}
    The proof is in Section \ref{sect:proofs_prop}.
    \end{proof}
\fi

In \fig{fig:input} the variation of $M_{\bbf{b}}$, which can be seen as folds along coordinates $x_2,\dots,x_D$ for $x_1=0$, is \emph{gradual}, meaning that $M_{\bbf{b}}$ changes by $1$ between neighboring bands. Formally, we define by $\nbr$ the set comprising all neighboring bands of band $\bbf{b}$ below.
\begin{definition}[Neighboring bands]
\label{def:neighboring_bands}
    The set $\nbr$ of vectors neighboring $\bbf{b}\in\ZDred$ is defined as the set of all $\bbf{b}^*\in\ZDred$ for which $\exists d^*\in\cb{2,\dots,D}$ such that $\vb{\sqb{\bbf{b}^*}_{d^*}-\sqb{\bbf{b}}_{d^*}}=1$ and $\sqb{\bbf{b}}_d=\sqb{\bbf{b}^*}_d, \forall d\in\cb{2,\dots,D}\setminus d^*$.
\end{definition}
In the following, we provide conditions for which $M_{\bbf{b}^*}-M_{\bbf{b}} \in \cb{-1,0,1}$ where $\bbf{b}^*\in\nbr$.
\begin{prop}[Variation of $M_{\bbf{b}}$]
\label{prop:variation_Mb}
For $\forall\bbf{b}\in\ZDred$, let $\bbf{b}^*\in\nbr$. Let $M_{\bbf{b}^*}$ and $M_{\bbf{b}}$ be the modulo-hysteresis constants for an input $f\in\pwd$ satisfying the following condition as per Definition \ref{def:cont_moduloMD}
\begin{equation}
    \norm{f}_\infty B\sqrt{D}\cdot \sqrt{\sum_{d=1}^D \Omega_d^2}<\min \cb{\frac{h}{2},2\lambda-3h}.
    \label{eq:suff_cond_Delf}
\end{equation}
Then $M_{\bbf{b}^*}-M_{\bbf{b}} \in \cb{-1,0,1}$.
\end{prop}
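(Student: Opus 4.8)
The plan is to isolate the only datum on which $M_{\bbf{b}}$ depends, namely the band infimum $m_{\bbf{b}}\triangleq\inf_{\bbf{x}\in\cR_{\bbf{b}}}f_{\bbf{x}}\rb{0}$, and to show that it varies by less than $h$ between neighbouring bands. By the definition of $M_{\bbf{b}}$ in Definition \ref{def:cont_moduloMD} we have $M_{\bbf{b}}=\floor{\rb{m_{\bbf{b}}+\lambda}/h}-1$, hence
\[
M_{\bbf{b}^*}-M_{\bbf{b}}=\floor{\frac{m_{\bbf{b}^*}+\lambda}{h}}-\floor{\frac{m_{\bbf{b}}+\lambda}{h}}.
\]
I would first record the elementary fact that $\vb{a-b}<1$ forces $\vb{\floor{a}-\floor{b}}\le1$: indeed $\floor{a}-\floor{b}\ge2$ would give $a-b>\rb{\floor{b}+2}-\rb{\floor{b}+1}=1$, and symmetrically. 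Applying this with $a=\rb{m_{\bbf{b}^*}+\lambda}/h$ and $b=\rb{m_{\bbf{b}}+\lambda}/h$ reduces the whole proposition to the single estimate $\vb{m_{\bbf{b}^*}-m_{\bbf{b}}}<h$.

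\textbf{Core step.} Since $\bbf{b}^*\in\nbr$, the rectangles $\cR_{\bbf{b}}$ and $\cR_{\bbf{b}^*}$ agree in every coordinate except one index $d^*$, where they differ by one unit; taking the $d^*$-th coordinate of $\bbf{b}^*$ to be $b_{d^*}+1$ (the other case being symmetric), the two rectangles abut along the hyperplane $\sqb{\bbf{x}}_{d^*}=\rb{b_{d^*}+1}B$. The map $g\rb{\bbf{x}}\triangleq f_{\bbf{x}}\rb{0}=f_{\mathbf{V}}\rb{\sqb{0,\bbf{x}}}$ is continuous, so the two infima equal those over the compact closures and are attained; let $\bbf{x}_0^*\in\mathrm{cl}\rb{\cR_{\bbf{b}^*}}$ attain $m_{\bbf{b}^*}$. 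I would then clamp $\bbf{x}_0^*$ by overwriting its $d^*$-th coordinate with $\rb{b_{d^*}+1}B$, obtaining $\bbf{y}\in\mathrm{cl}\rb{\cR_{\bbf{b}}}$ that differs from $\bbf{x}_0^*$ in a single coordinate by at most $B$, so $\norm{\bbf{y}-\bbf{x}_0^*}_2\le B$. Because the Bern\v{s}te\u{\i}n bounds \eqref{eq:Bernstein_multiD} make $f_{\mathbf{V}}$ Lipschitz with constant $\norm{\bs{\Omega}}_2\norm{f}_\infty$, and $\sqb{0,\bbf{y}}$, $\sqb{0,\bbf{x}_0^*}$ share their first coordinate,
\[
m_{\bbf{b}}\le g\rb{\bbf{y}}\le m_{\bbf{b}^*}+\norm{\bs{\Omega}}_2\norm{f}_\infty B.
\]
Clamping a minimiser of $m_{\bbf{b}}$ into $\mathrm{cl}\rb{\cR_{\bbf{b}^*}}$ gives the reverse inequality, so that $\vb{m_{\bbf{b}^*}-m_{\bbf{b}}}\le\norm{\bs{\Omega}}_2\norm{f}_\infty B$.

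\textbf{Conclusion.} It remains to invoke the hypothesis \eqref{eq:suff_cond_Delf}. Since $\sqrt{D}\ge1$,
\[
\norm{\bs{\Omega}}_2\norm{f}_\infty B\le\norm{f}_\infty B\sqrt{D}\cdot\sqrt{\sum_{d=1}^D\Omega_d^2}<\frac{h}{2}<h,
\]
the middle expression being exactly the left-hand side of \eqref{eq:suff_cond_Delf} (and the bound on $\df$ supplied by Corollary \ref{cor:bound_fB}). Hence $\vb{m_{\bbf{b}^*}-m_{\bbf{b}}}<h$, the floor difference is at most $1$ in absolute value, and $M_{\bbf{b}^*}-M_{\bbf{b}}\in\cb{-1,0,1}$. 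I expect the one genuinely delicate point to be the geometric step: the intra-band bound $\df$ controls the oscillation of $f$ only inside a \emph{single} $\cR_{\bbf{b}}$, so it cannot be applied to the union of the two bands. The correct move is to argue directly through the Lipschitz property across the shared face, taking care that clamping a single coordinate displaces the minimiser by at most $B$ rather than by the full band diameter; passing to closures to handle the half-open rectangles is the only other bit of routine care.
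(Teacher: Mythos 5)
Your proof is correct, but it takes a different route from the paper's. The paper never clamps a minimiser or invokes the Lipschitz bound directly: it observes that $\mathrm{cl}\rb{\cR_{\bbf{b}}}\cap\mathrm{cl}\rb{\cR_{\bbf{b}^*}}\neq\emptyset$, so any point on the shared face gives $\sup_{\bbf{x}\in\cR_{\bbf{b}}}f_{\bbf{x}}\rb{0}\geq\inf_{\bbf{x}\in\cR_{\bbf{b}^*}}f_{\bbf{x}}\rb{0}$, and then applies the intra-band oscillation bound $\df<h/2$ (via Corollary \ref{cor:bound_fB}) \emph{within the single band} $\cR_{\bbf{b}}$ to get the one-sided estimate $\inf_{\cR_{\bbf{b}^*}}f_{\bbf{x}}\rb{0}-\inf_{\cR_{\bbf{b}}}f_{\bbf{x}}\rb{0}<h/2$, from which a one-sided floor estimate gives $M_{\bbf{b}^*}\leq M_{\bbf{b}}+1$; swapping $\bbf{b}$ and $\bbf{b}^*$ finishes. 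So your closing worry—that $\df$ cannot control oscillation across the union of two bands—is real, but the paper sidesteps it with the closure-overlap trick rather than with your Lipschitz argument; both are legitimate. What your version buys: it is two-sided from the start (no role-swapping), it isolates a clean reusable floor lemma ($\vb{a-b}<1\Rightarrow\vb{\floor{a}-\floor{b}}\leq1$), and the clamping step, which displaces only the single coordinate $d^*$ by at most $B$, yields the sharper constant $\norm{\bs{\Omega}}_2\norm{f}_\infty B$ instead of the full-diameter bound $\norm{f}_\infty B\sqrt{D}\,\norm{\bs{\Omega}}_2$ hidden in $\df$. What the paper's version buys: it reuses Corollary \ref{cor:bound_fB}, which is already established for the other propositions, and needs no attainment-of-minimiser or explicit geometric construction—given the corollary it is essentially a three-line argument. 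One small note: your chain only needs $\vb{m_{\bbf{b}^*}-m_{\bbf{b}}}<h$ for the floor lemma, and you in fact prove the stronger bound $<h/2$; both proofs carry this same slack.
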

\begin{proof}
We first note that sets $\cR_{\bbf{b}}$ and $\cR_{\bbf{b}^*}$ are neighboring polytopes which, via their definition, satisfy $\mathrm{cl}\rb{\cR_{\bbf{b}}}\cap \mathrm{cl}\rb{\cR_{\bbf{b}^*}}\neq \emptyset$. Using this in conjunction with the properties of supremum and infimum, we get
\begin{equation}
\label{eq:sup_def_via_max}
    \sup_{\bbf{x}\in\cR_{\bbf{b}}} f_{\bbf{x}}\rb{0}
    =\max_{\bbf{x}\in{\mathrm{cl}\rb{\cR_{\bbf{b}}}}} f_{\bbf{x}}\rb{0}
    \geq\min_{\bbf{x}\in{\mathrm{cl}\rb{\cR_{\bbf{b}^*}}}} f_{\bbf{x}}\rb{0}
    =\inf_{\bbf{x}\in\cR_{\bbf{b}^*}} f_{\bbf{x}}\rb{0}.
\end{equation}
Condition \eqref{eq:suff_cond_Delf} implies $\df<h/2$ due to Corollary \ref{cor:bound_fB}. Therefore,
\begin{equation}
\label{eq:particular_Del_fB_bound}
    \sup_{\bbf{x}\in\cR_{\bbf{b}}} f_{\bbf{x}}\rb{0}-\inf_{\bbf{x}\in\cR_{\bbf{b}}} f_{\bbf{x}}\rb{0}<h/2.
\end{equation}
Using \eqref{eq:sup_def_via_max} and \eqref{eq:particular_Del_fB_bound}
\begin{equation}
    \inf_{\bbf{x}\in\cR_{\bbf{b}^*}} f_{\bbf{x}}\rb{0}-\inf_{\bbf{x}\in\cR_{\bbf{b}}} f_{\bbf{x}}\rb{0}<h/2.    
\end{equation}
Given that, by definition, $    M_{\bbf{b}^*}=\floor{\frac{\inf_{\bbf{x}\in \cR_{\bbf{b}^*}} f_{\bbf{x}}\rb{0}+\lambda}{h}}-1$, it can be shown by direct derivation that $M_{\bbf{b}^*}\leq M_{\bbf{b}}+1$. Furthermore, by swapping $\bbf{b}$ and $\bbf{b}^*$ in the derivation above we get
\begin{equation}
    M_{\bbf{b}^*}\in\cb{M_{\bbf{b}}-1,M_{\bbf{b}},M_{\bbf{b}}+1}.
\end{equation}
\end{proof}
Therefore, using Definition \ref{def:cont_moduloMD}, for $x_1=0$, in neighboring bands $\bbf{b}$ and $\bbf{b}^*$ the residual $\varepsilon_f$ is either the same, or differs by $h$. This is similar to the behavior of the residual around a folding time $\tau_{\bbf{b},r}$ along dimension $x_1$. 

\subsection{Problem formulation}

The measurements $y\sqb{\mathbf{k}}$ are assumed to be samples on a \MD lattice $\bs{\Lambda}=\cb{\mathbf{VTk}\setsep \mathbf{k}\in\ZD}$, such that
\begin{equation}
y\sqb{\mathbf{k}}=\MOh^D f(\mathbf{VTk})+\eta\sqb{\mathbf{k}}=\gamma\sqb{\mathbf{k}}-\varepsilon_{\gamma}\sqb{\mathbf{k}}+\eta\sqb{\mathbf{k}},    
\end{equation}
where $\gamma\sqb{\mathbf{k}}=f\rb{\mathbf{VTk}}$, $\varepsilon_{\gamma}\sqb{\mathbf{k}}=\varepsilon_f\rb{\mathbf{VTk}}$ and $\eta\sqb{\mathbf{k}}\sim \mathcal{N}\rb{0,\sigma^2}$.
The known variables are the input bandwidth $\bs{\Omega}$, number of dimensions $D$, the lattice $\bs{\Lambda}$, modulo-hysteresis parameters $B,\lambda,h$ and output samples $y\sqb{\mathbf{k}}$. The proposed reconstruction problem is to compute the input lattice samples $\widetilde{\gamma}\sqb{\mathbf{k}}$ defined as
\begin{equation}
    \widetilde{\gamma}\sqb{\mathbf{k}}= \gamma\sqb{\mathbf{k}}+\eta\sqb{\mathbf{k}}+ Mh,
    \label{eq:up_to_constant}
\end{equation}
where $M\in\Z$ is an unknown integer. The input $\gamma\sqb{\mathbf{k}}$ can only be reconstructed up to an integer multiple of $h$ given that $\MOh^D f\rb{\mathbf{x}}=\MOh^D \sqb{f+Mh}\rb{\mathbf{x}},$ $ \forall M\in\Z$ (\ref{eq:zt_MD},\ref{eq:eps}). 

\section{Detecting modulo-hysteresis discontinuities}
\label{sect:det_disc}
We define $N_d^B=\frac{B}{T_d}$. For simplicity, we assume that $N_d^B\in\Z, N_d^B\geq1,\forall d\in\cb{2,\dots D}$, i.e., for a polytope  $\cP_{\bbf{b}}$ we have an integer number of sampling periods $T_d$ along each of its edges of length $B$ and direction $\mathbf{v}_d$. The recovery is achieved in several steps
\begin{enumerate}
    \item Compute $M_{\bbf{b}}$, the folding times $\tau_{\bbf{b},r}$ and signs $s_{\bbf{b},r}$ in each band $\cB_{\bbf{b}}$.
    \item Compute residual $\varepsilon_f\rb{\mathbf{Vx}}$ \eqref{eq:eps}.
    \item Compute the samples $\widetilde{\gamma}\sqb{\mathbf{k}}$. 
\end{enumerate}

For step 1, we define a filter $\psi$ and compute $\inner{y,\psi}$. For detecting the folding times and signs, we choose $\psi=\psi_{\bbf{b},m}$, which is centered in sample $m$ along dimension $x_1$ in band $\bbf{b}$ as 
\begin{equation}
    \psi_{\bbf{b},m}\sqb{\mathbf{k}}=
    \left\lbrace \begin{array}{cc}
        \Delta^N\sqb{k_1-m}\cdot \frac{1}{N^B}, & \bbf{T}\bbf{k}\in\cR_{\bbf{b}}, \\
        0, & \bbf{T}\bbf{k}\notin\cR_{\bbf{b}},
    \end{array}\right.
    \label{eq:psibm}
\end{equation}
where $\bbf{k}\in\ZDred$ and $N^B=\prod_{d=2}^D N_d^B$ is the number of samples in each band $\bbf{b}$ for $k_1$ fixed. For fixed $\bbf{k}\in\ZDred$, $\psi_{\bbf{b},m}\sqb{\mathbf{k}}$ is a finite difference filter along dimension $k_1$.

For detecting the $M_{\bbf{b}}$, we need a filter detecting the change in $M_{\bbf{b}}$  for two bands $\bbf{b}$, $\bbf{b}^*$, such that $\bbf{b}^*\in\nbr$ and $\sqb{\bbf{b}^*}_{d^*}=\sqb{\bbf{b}}_{d^*}+1$. We define $\psi=\psi_{\bbf{b},\bbf{b}^*}\sqb{\mathbf{k}}$ as
\begin{equation}
    \psi_{\bbf{b},\bbf{b}^*}\sqb{\mathbf{k}}=
    \left\lbrace \begin{array}{cc}
        \Delta^N\sqb{k_{d^*}-k_{\bbf{b}}}\cdot \frac{1}{N^{B*}}, & \bbf{T}\bbf{k}\in\cR_{\bbf{b}}\cup\cR_{\bbf{b}^*}, k_1=0, \\
        0, & \quad \mathrm{otherwise},
    \end{array}\right.
    \label{eq:psibb}
\end{equation}
where $k_{\bbf{b}}=N_{d^*}^B\cdot\sqb{\bbf{b}}_{d^*}-1$ and $N^{B*}=\prod_{d=2,d\neq d^*}^D N_d^B$ is the number of samples in each band $\bbf{b}$ for $k_1$, $k_{d^*}$ fixed.  Therefore, similar to $\psi_{\bbf{b},m}$, filter $\psi_{\bbf{b},\bbf{b}^*}$ is a finite difference filter along dimension $x_{d^*}$ which is perpendicular to the hyperplane separating bands $\bbf{b}$ and $\bbf{b}^*$. Furthermore, $\psi_{\bbf{b},\bbf{b}^*}$ is constant within each band $\bbf{b}$ and $\bbf{b}^*$. This means that the finite difference filter is repeated $N^{B*}$ times across dimensions $d\in\cb{2,\dots,D}\setminus d^*$, which has a noise averaging effect.

\subsection{Detecting folding times and signs}

To detect $\tau_{\bbf{b},r}$ and $s_{\bbf{b},r}$, we use filter $\psi_{\bbf{b},m}$ to compute sequence $y_{\bbf{b},m}$
\begin{equation}
    y_{\bbf{b},m}=\inner{y,\psi_{\bbf{b},m}} =\underbrace{\inner{\gamma,\psi_{\bbf{b},m}}}_{\textsf{Input}}-\underbrace{\inner{\varepsilon_{\gamma},\psi_{\bbf{b},m}}}_{\textsf{Residual}}+
    \underbrace{\inner{\eta,\psi_{\bbf{b},m}}}_{\textsf{Noise}}.
    \label{eq:filtered_data}
\end{equation}    
In \eqref{eq:filtered_data} the filtered samples $y_{\bbf{b},m}$ are composed of three terms: the \emph{input term} $\inner{\gamma,\psi_{\bbf{b},m}}$, the \emph{residual term} $\inner{\varepsilon_{\gamma},\psi_{\bbf{b},m}}$ and the \emph{noise term} $\inner{\eta,\psi_{\bbf{b},m}}$. 
Given that all three are unknown, the general recovery strategy is to separate them via thresholding; as will be shown later, thresholding samples $y_{\bbf{b},m}$ allows to compute the folding times and signs. While this will be derived rigorously later in propositions \ref{prop:fold_det} and \ref{prop:Mb_det}, here we give a brief intuitive explanation of the recovery method, by explaining the effect that filter $\psi_{\bbf{b},m}$ has on all $3$ terms in the right-hand-side of \eqref{eq:filtered_data}. A similar analysis applies to filter $\psi_{\bbf{b},\bbf{b}^*}$ which will be described in Section \ref{sect:Mb}.
    
As noted before, $\psi_{\bbf{b},m}$ is a finite difference filter along dimension $x_1$. It was shown for the \D case that this causes $\gamma\sqb{\mathbf{k}}$ to vanish for large $N$. Furthermore, it generates peaks at the folding times in residual $\varepsilon_\gamma$ and also amplifies the noise $\eta$ \cite{Bhandari:2020:Ja,Florescu:2022:J}. This latter effect is undesirable for recovery. To decrease the effect of the noise the finite difference filter was convolved with a spline in the \D case, but this also makes the detection of $\varepsilon_\gamma$ more difficult \cite{Florescu:2022:Cb}. Here we can address noise filtering without affecting the folding time detection by exploiting the \MD structure of $\MOh^D$.
    
For fixed $k_1\in\Z$, the filter $\psi_{\bbf{b},m}\sqb{\mathbf{k}}$ is constant along dimensions $k_2,\dots,k_D$ as long as $\bbf{V}\bbf{T}\bbf{k}\in\cR_{\bbf{b}}$ and thus the inner product has an averaging effect. However, we know that the modulo residual corresponding to $\MOh^D f\rb{\mathbf{Vx}},\bbf{x}\in\cR_{\bbf{b}}$ is also constant within $\cR_{\bbf{b}}$ by definition, and therefore the averaging effect does not affect the residual edges, which are along dimension $k_1$. Moreover, given that $f$ is smooth and changes slowly within a band $\cB_{\bbf{b}}$, the filter averaging along dimensions $\bbf{k}$ has very little effect on $\gamma$. Therefore, along dimensions $\bbf{k}$, the filter acts mainly on the noise sequence $\eta\sqb{\bbf{k}}$ by narrowing its \pdf around the origin such that its effect gradually vanishes.

As in the \D case, the modulo output $z$ is smooth in-between the folds, and has discontinuities at the folding times. The filter $\psi_{\bbf{b},m}$ responds with pulses of non-zero support to the input discontinuities. To account for this, we define by $\mathbb{S}_N$ the support of the filtered residual such that (see \cite{Florescu:2022:J} for details)
\[
\mathbb{S}_N=\mathrm{supp}\sqb{\inner{\varepsilon_{\gamma},\psi_{\bbf{b},m}}}=\underset{r\in\Z^*}{\bigcup}\cb{\floor{\frac{\tau_{\bbf{b},r}}{T_1}}-N,\dots,\floor{\frac{\tau_{\bbf{b},r}}{T_1}}}.
\]

The following theorem shows how $\mathbb{S}_N$ can be recovered by thresholding sequence $\inner{y,\psi_{\bbf{b},m}}$, which is an important step in computing folding times $\tau_{\bbf{b},r}$.
\begin{prop}[Detection of folding times]
\label{prop:fold_det}
Let $f\in\pwd$ and let $z\rb{\mathbf{x}}=\MOh^D f\rb{\mathbf{x}}$ be the output of a \MD modulo-hysteresis model with parameters $\lambda,h,B$. Furthermore, let $y\sqb{\mathbf{k}}=z\rb{\mathbf{VTk}}+\eta\sqb{\mathbf{k}}$ be the samples of the modulo output computed on lattice $\bs{\Lambda}$ corrupted by a noise sequence $\eta\sqb{\mathbf{k}}\sim\mathcal{N}\rb{0,\sigma^2}$. Furthermore, assume that $T_d<B,\forall d \in\cb{2,\dots,D}$ and that
\begin{align}
\begin{split}
    \norm{f}_\infty B\sqrt{D}\cdot \sqrt{\sum_{d=1}^D \Omega_d^2}&<\min \cb{\frac{h}{2},2\lambda-3h},\\
    \rb{T_1\Omega_1 e}^N\norm{f}_\infty&<h/2.
\end{split}
\label{eq:rec_cond}
\end{align}
If $\vb{\inner{y,\psi_{\bbf{b},m}}}\geq h/2$ then $m\in \mathbb{S}_N$ with probability $p>1-p_{\mathsf{err}}$ where
\begin{equation}
    p_{\mathsf{err}}\leq e^{-C^2}, \quad \mathrm{where}\ C=\frac{h/2-\rb{T_1\Omega_1 e}^N\norm{f}_\infty}{\sigma \sqrt{2^{N+1}}}\prod_{d=2}^D\sqrt{\frac{B}{T_d}}.
    \label{eq:perr0}
\end{equation}
\end{prop}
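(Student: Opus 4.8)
The plan is to prove the contrapositive in probability: for a fixed index $m\notin\mathbb{S}_N$ I would bound the false-alarm probability $\mathrm{P}\rb{\vb{\inner{y,\psi_{\bbf{b},m}}}\geq h/2}\leq p_{\mathsf{err}}$, which is exactly the claimed implication read the other way. The natural starting point is the decomposition \eqref{eq:filtered_data}, which splits $\inner{y,\psi_{\bbf{b},m}}$ into an input term, a residual term, and a noise term; I would control each separately, showing that the residual term is exactly zero, the input term is deterministically small, and only the noise term requires a tail estimate.

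First I would eliminate the residual term. By definition, $\mathbb{S}_N$ is the support of the sequence $m\mapsto\inner{\varepsilon_{\gamma},\psi_{\bbf{b},m}}$, and within the band $\cB_{\bbf{b}}$ the residual $\varepsilon_\gamma$ is piecewise constant along $k_1$ with jumps only at the folding indices $\floor{\tau_{\bbf{b},r}/T_1}$. Since $\psi_{\bbf{b},m}$ acts as an order-$N$ finite difference $\Delta^N\sqb{\cdot-m}$ along $k_1$, it annihilates any locally constant sequence; hence $m\notin\mathbb{S}_N$ forces $\inner{\varepsilon_{\gamma},\psi_{\bbf{b},m}}=0$ identically, leaving $\inner{y,\psi_{\bbf{b},m}}=\inner{\gamma,\psi_{\bbf{b},m}}+\inner{\eta,\psi_{\bbf{b},m}}$.

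Next I would bound the input term. Because $f\in\pwd$, Proposition~\ref{prop:pw_for_slices} shows that each $k_1$-slice of $\gamma$ samples a function in $\mathsf{PW}_{\Omega_1}\rb{\R}$, so the standard USF finite-difference estimate gives $\norm{\Delta^N\gamma}_\infty\leq\rb{T_1\Omega_1 e}^N\norm{f}_\infty$ along $k_1$; since $\psi_{\bbf{b},m}$ merely averages this difference over the $N^B$ samples of the band with weights $1/N^B$, the average inherits the same bound, so $\vb{\inner{\gamma,\psi_{\bbf{b},m}}}\leq\rb{T_1\Omega_1 e}^N\norm{f}_\infty=:A$, and the second line of \eqref{eq:rec_cond} guarantees the strictly positive margin $h/2-A>0$. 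For the noise term, $\inner{\eta,\psi_{\bbf{b},m}}$ is a weighted sum of i.i.d.\ $\mathcal{N}\rb{0,\sigma^2}$ variables, hence zero-mean Gaussian with variance $\sigma^2\norm{\psi_{\bbf{b},m}}_2^2$, and a direct computation gives $\norm{\psi_{\bbf{b},m}}_2^2=\tfrac{1}{N^B}\sum_{k}\rb{\Delta^N\sqb{k}}^2$; crucially, the band averaging divides the variance by $N^B=\prod_{d=2}^D B/T_d$, which is the mechanism by which the extra dimensions suppress the noise.

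Finally I would combine the pieces. On the event $\vb{\inner{y,\psi_{\bbf{b},m}}}\geq h/2$, the triangle inequality together with the input bound forces $\vb{\inner{\eta,\psi_{\bbf{b},m}}}\geq h/2-A>0$, so the Gaussian tail bound $\mathrm{P}\rb{\vb{X}\geq t}\leq e^{-t^2/\rb{2s^2}}$ with $t=h/2-A$ and $s^2=\sigma^2\norm{\psi_{\bbf{b},m}}_2^2$ yields $p_{\mathsf{err}}\leq e^{-C^2}$ with $C$ as stated. The step I expect to be most delicate is this last variance accounting: the finite-difference amplification along $k_1$ and the averaging gain $1/N^B$ must be tracked simultaneously so that the factor $2^{N+1}$ and the product $\prod_{d=2}^D B/T_d$ appear in $C$ with the correct powers, after which feeding $s^2$ and the deterministic margin $h/2-A$ into the Gaussian tail is routine. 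The residual-vanishing and input-bound steps, by contrast, follow directly from the support structure of $\mathbb{S}_N$ and from Proposition~\ref{prop:pw_for_slices} together with the one-dimensional USF difference estimate.
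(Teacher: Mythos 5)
Your proposal follows the same skeleton as the paper's proof: the three-term decomposition \eqref{eq:filtered_data}, the observation that the residual term vanishes for $m\notin\mathbb{S}_N$ (using that $\varepsilon_\gamma$ is constant in $\bbf{k}$ within a band and that $\Delta^N$ annihilates sequences that are constant on the filter window), the deterministic bound $\vb{\inner{\gamma,\psi_{\bbf{b},m}}}\leq\rb{T_1\Omega_1 e}^N\norm{f}_\infty$ via Proposition~\ref{prop:pw_for_slices} and the one-dimensional USF estimate, and a Gaussian tail bound on the noise term; your contrapositive (false-alarm) framing is an equivalent reading of the claim. All of these steps are sound and coincide with the paper's argument.

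The genuine gap sits exactly at the step you flag as delicate, and it cannot be closed as you propose. Your variance computation is the correct one, $\inner{\eta,\psi_{\bbf{b},m}}\sim\mathcal{N}\rb{0,\sigma^2\norm{\psi_{\bbf{b},m}}_2^2}$, but it does not produce the stated constant: since the coefficients of $\Delta^N$ are (signed) binomial coefficients,
\begin{equation*}
\norm{\psi_{\bbf{b},m}}_2^2=\frac{1}{N^B}\sum_{k=0}^{N}\binom{N}{k}^2=\frac{1}{N^B}\binom{2N}{N},
\end{equation*}
and $\binom{2N}{N}>2^N$ for every $N\geq 2$ (six versus four at $N=2$, twenty versus eight at $N=3$). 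Feeding this into $\mathrm{P}\rb{\vb{X}\geq t}\leq e^{-t^2/\rb{2s^2}}$ with $t=h/2-\rb{T_1\Omega_1e}^N\norm{f}_\infty$ gives
\begin{equation*}
p_{\mathsf{err}}\leq \exp\rb{-\frac{t^2}{2\binom{2N}{N}\sigma^2}\prod_{d=2}^D \frac{B}{T_d}},
\end{equation*}
which is strictly weaker than \eqref{eq:perr0} whenever $N\geq2$; your route recovers the $2^{N+1}$ factor only for $N\leq1$. For context, the paper reaches $2^N$ differently: it writes $\inner{\eta,\psi_{\bbf{b},m}}=\Delta^N_-\ast\bar\eta\sqb{m}$ with $\bar\eta$ i.i.d.\ $\mathcal{N}\rb{0,\sigma^2/N^B}$ and doubles the variance at each of the $N$ convolutions with $\Delta^1_-$. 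That recursion silently assumes the sequence produced by each differencing step is still independent, which is false (consecutive first differences share a term; for $N=2$ the true variance is $6\sigma^2/N^B$, not $4\sigma^2/N^B$). So your direct $\norm{\psi}_2^2$ computation is the rigorous one and in fact exposes that the constant in \eqref{eq:perr0} is not justified by the paper's own argument for $N\geq2$. To make your write-up correct you must either state the weaker bound with $2\binom{2N}{N}$ in place of $2^{N+1}$, or restrict to $N=1$; as written, the assertion that ``the factor $2^{N+1}$ appears with the correct powers'' is where the proof would fail.
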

\if\FoldDetInAppendix\FL
    \begin{proof}
    As in the continuous-time scenario, we denote the \D slices of the samples in each band $\bbf{b}$ by
\begin{align}
    \gamma_{\bbf{k}}\sqb{k_1}&=\gamma\sqb{k_1T_1\mathbf{v}_1+\sum_{d=2}^D k_d T_d \mathbf{v}_d},\\
    \varepsilon_{\gamma,\bbf{k}}\sqb{k_1}&=\varepsilon_\gamma\sqb{k_1T_1\mathbf{v}_1+\sum_{d=2}^D k_d T_d \mathbf{v}_d},\\
    \eta_{\bbf{k}}\sqb{k_1}&=\eta\sqb{k_1T_1\mathbf{v}_1+\sum_{d=2}^D k_d T_d \mathbf{v}_d}.
\end{align}
where $\bbf{T}\bbf{k}\in\cR_{\bbf{b}}$ and $\bbf{k}=\sqb{k_2,\dots,k_D}\in\ZDred$. We note that, due to Definition \ref{def:cont_moduloMD}, $ \varepsilon_{\gamma,\bbf{k}}\sqb{k_1}$ does not change with $\bbf{k}$ as long as $\bbf{T}\bbf{k}\in\cR_{\bbf{b}}$. The filtered samples $y_{\bbf{b},m}$ satisfy
\begin{equation*}
    y_{\bbf{b},m}=\inner{y,\psi_{\bbf{b},m}} =\inner{\gamma,\psi_{\bbf{b},m}}-\inner{\varepsilon_{\gamma},\psi_{\bbf{b},m}}+\inner{\eta,\psi_{\bbf{b},m}}
\end{equation*}    
We first exploit that $\gamma_{\bbf{k}}\sqb{k_1}=f_{\bbf{T}\bbf{k}}\rb{k_1T_1}$ where $f_{\bbf{x}}$ is bandlimited to $\Omega_1\ \mathrm{rad/s}$, which yields \cite{Bhandari:2020:Ja,Florescu:2022:J}
\begin{align}
\begin{split}
    \vb{\inner{\gamma,\psi_{\bbf{b},m}}}&=
    \bigg\lvert\tfrac{1}{N^B}\sum_{\bbf{T}\bbf{k}\in\cR_{\bbf{b}}}\inner{\gamma_{\bbf{k}},\Delta^N\sqb{\cdot-m}}\bigg\rvert\\
    &\leq\tfrac{1}{N^B}\sum_{\bbf{T}\bbf{k}\in\cR_{\bbf{b}}}\vb{\inner{\gamma_{\bbf{k}},\Delta^N\sqb{\cdot-m}}}\\
    &\leq\tfrac{1}{N^B}\sum_{\bbf{T}\bbf{k}\in\cR_{\bbf{b}}} \rb{T_1\Omega_1 e}^N \norm{f}_\infty=\rb{T_1\Omega_1 e}^N \norm{f}_\infty.
\end{split}
\label{eq:gamma_bound}
\end{align}
Similarly, for the residual samples $\varepsilon_\gamma$ the following holds
\begin{align}
\begin{split}
    \vb{\inner{\varepsilon_\gamma,\psi_{\bbf{b},m}}}&=
    \bigg\lvert\tfrac{1}{N^B}\sum_{\bbf{T}\bbf{k}\in\cR_{\bbf{b}}}\inner{\varepsilon_{\gamma,\bbf{k}},\Delta^N\sqb{\cdot-m}}\bigg\rvert\\
    &\leq\tfrac{1}{N^B}\sum_{\bbf{T}\bbf{k}\in\cR_{\bbf{b}}}\vb{\inner{\varepsilon_{\gamma,\bbf{k}},\Delta^N\sqb{\cdot-m}}}=\vb{\inner{\varepsilon_{\gamma,\bbf{k}^*},\Delta^N\sqb{\cdot-m}}},
\end{split}
\label{eq:eps_bound1}
\end{align}    
for all $ \bbf{k}^*$ satisfying $\bbf{T}\bbf{k}^*\in\cR_{\bbf{b}}$, given that $\varepsilon_{\gamma,\bbf{k}}$ does not change within the band $\bbf{b}$ as a function of $\bbf{k}$ \eqref{eq:eps_recurrent}. We note that $\mathrm{supp} \rb{\Delta^N\sqb{\cdot-m}}=\cb{m,\dots,m+N}$ and derive that 
\begin{equation}
\left\lbrace
    \begin{array}{cc}
        \vb{\inner{\varepsilon_\gamma,\psi_{\bbf{b},m}}}\geq h & \mathrm{if}\ m\in\mathbb{S}_N, \\
        \vb{\inner{\varepsilon_\gamma,\psi_{\bbf{b},m}}}=0 & \mathrm{otherwise}, 
    \end{array}\right.
    \label{eq:eps_bound2}
\end{equation}
Therefore, if $\rb{T_1\Omega_1 e}^N \norm{f}_\infty<h/2$ and $\vb{\inner{\eta,\psi_{\bbf{b},m}}}<h/2-\rb{T_1\Omega_1 e}^N \norm{f}_\infty$, then, $\mathrm{if}\ m\in\mathbb{S}_N$,
\begin{align}
\begin{split}
    \vb{\inner{y,\psi_{\bbf{b},m}}} &=\vb{\inner{\varepsilon_{\gamma},\psi_{\bbf{b},m}}-\rb{\inner{\gamma,\psi_{\bbf{b},m}}+\inner{\eta,\psi_{\bbf{b},m}}}}\\
    &\geq \vb{\inner{\varepsilon_{\gamma},\psi_{\bbf{b},m}}}-\vb{\inner{\gamma,\psi_{\bbf{b},m}}+\inner{\eta,\psi_{\bbf{b},m}}}\\
    &\geq h-h/2=h/2.
\end{split}
\label{eq:y_bound1}
\end{align}
Furthermore, for $m\not\in\mathbb{S_N}$, we get $\vb{\inner{y,\psi_{\bbf{b},m}}} \leq\vb{\inner{\gamma,\psi_{\bbf{b},m}}}+\vb{\inner{\eta,\psi_{\bbf{b},m}}}< h/2$.
Then, we identify if  $m\in\mathbb{S}_N$ by thresholding sequence $\inner{y,\psi_{\bbf{b},m}}$ via the following inequalities.
\begin{equation}
\left\lbrace
    \begin{array}{cc}
        \vb{\inner{y,\psi_{\bbf{b},m}}} \geq h/2 & \mathrm{if}\ m\not\in\mathbb{S}_N, \\
        \vb{\inner{y,\psi_{\bbf{b},m}}}< h/2   & \mathrm{otherwise}, 
    \end{array}\right.
    \label{eq:y_bound2}
\end{equation}    

Sequence $\eta\sqb{\bbf{k}}$ is drawn from the normal distribution and is not bounded, therefore we can only guarantee that $\vb{\inner{\eta,\psi_{\bbf{b},m}}}<h/2-\rb{T_1\Omega_1 e}^N \norm{f}_\infty$ holds with a given probability. However, we will show that modulo-hysteresis allows increasing this probability exponentially.

We use two properties of the \pdf of Gaussian distributions. First, given $M$ random variables $\eta_i\sim\mathcal{N}\rb{0,\sigma_i^2},$ $ i=1,\dots,M$ their summation satisfies $\sum_{i=1}^M \eta_i \sim\mathcal{N}\rb{0,\sigma^2}$, where $\sigma=\sqrt{\sigma_1^2+\dots+\sigma_{M}^2}$. Second, for a random variable $\eta_1\sim\mathcal{N}\rb{0,\sigma^2}$, the multiplication with a constant $\alpha$ yields  $\alpha\eta_1\sim\mathcal{N}\rb{0,\rb{\alpha\sigma}^2}$.

Then the noise term $\inner{\eta,\psi_{\bbf{b},m}}$ satisfies
\begin{align}
\begin{split}
    \inner{\eta,\psi_{\bbf{b},m}}&=\sum_{\bbf{k}\in\ZDred} \sum_{k_1\in\Z} \frac{\Delta^N\sqb{k_1-m}\cdot \eta\sqb{\mathbf{k}}}{N^B}\\
    &=\sum_{k_1\in\Z} \Delta^N\sqb{k_1-m}\cdot \bar{\eta}\sqb{k_1}, \quad \bar{\eta}\sqb{k_1}=\frac{1}{N^B}\sum_{\bbf{k}\in\ZDred}\eta\sqb{\mathbf{k}}.
\end{split}
\label{eq:noise1}
\end{align}
Using the two \pdf properties above, $\bar{\eta}\sim\mathcal{N}\rb{0,\rb{\sigma\sqrt{N^B}/N^B}^2}=\mathcal{N}\rb{0,\frac{\sigma^2}{N^B}}$. As expected, averaging gradually narrows down the \pdf of the distribution around the origin. Furthermore, we can write 
\begin{equation}
    \inner{\eta,\psi_{\bbf{b},m}}=\Delta^N_- \ast \bar{\eta}\sqb{m},\quad \Delta^N_-\sqb{k_1}=\Delta^N\sqb{-k_1},\forall k_1\in\Z.
    \label{eq:noise2}
\end{equation}
Given that $\Delta_-^n=\Delta_-^1\ast\dots\ast\Delta_-^{n-1}, \forall n\in\Z$, we will compute recursively the \pdf of  $\inner{\eta,\psi_{\bbf{b},m}}$ as follows. First, $\Delta_-^1\ast \bar{\eta}\sqb{m}=\bar{\eta}\sqb{m+1}-\bar{\eta}\sqb{m} $. Given that $-\bar{\eta}\sqb{m}\sim\mathcal{N}\rb{0,\frac{\sigma^2}{N^B}}$, we get $\Delta_-^1\ast \bar{\eta}\sqb{m}\sim\mathcal{N}\rb{0,\sigma^2{\frac{2}{N^B}}}$. Recursively, 
\begin{equation}
    \Delta_-^N\ast \bar{\eta}\sqb{m}\sim\mathcal{N}\rb{0,\sigma^2{\frac{2^N}{N^B}}}.
    \label{eq:noise3}
\end{equation}

In the equation above one can notice that the finite difference degree $N$ leads to an exponential increase in the standard deviation of the noise term. A very similar result was reported for bounded noise in the \D case \cite{Bhandari:2020:Ja,Florescu:2022:J,Florescu:2022:Cb}. However, in this \MD case we have the option to decrease the noise by increasing the number of samples $N^B=\prod_{d=2}^D \frac{B}{T_d}$. This can be done either by increasing the band size $B$ within the allowable range ensuring $\df <\min\cb{h/2,2\lambda-3h}$, but also by decreasing the sampling periods $T_d, d=2,\dots,D$. Both of these act only on dimensions $x_2,\dots,x_D$ and are fully independent of dimension $x_1$.

Therefore, the noise term $\inner{\eta,\psi_{\bbf{b},m}}$ in \eqref{eq:filtered_data} represents a random variable that allows to correctly evaluate if $m\in\mathbb{S}_N$ via \eqref{eq:y_bound2} when $\inner{\eta,\psi_{\bbf{b},m}}<h/2-\rb{T_1\Omega_1 e}^N \norm{f}_\infty$. The probability that this doesn't hold is denoted by $p_{\mathsf{err}}$ which is calculated using the \pdf of the normal distribution as
\begin{equation}
    p_{\mathsf{err}}=2\int_{\eta_{\mathsf{max}}}^\infty \frac{1}{\sigma_0\sqrt{2\pi}}\cdot e^{-\frac{1}{2}\rb{\frac{x}{\sigma_0}}^2}dx,
    \label{eq:perr}
\end{equation}
where $\sigma_0=\sigma\cdot\sqrt{\frac{2^N}{N^B}}$ and $\eta_{\mathsf{max}}=h/2-\rb{T_1\Omega_1 e}^N \norm{f}_\infty$. The integral above can be bounded in terms of the \emph{complementary error function} as follows, which finalizes the proof \cite{Chiani:2003:J,Jacobs:1965:B}
\begin{equation}
    p_{\mathsf{err}}\leq e^{-\rb{\frac{\eta_{\mathsf{max}}}{\sigma_0\sqrt{2}}}^2}.
    \label{eq:perr2}
\end{equation}

    \end{proof}
\else
    \begin{proof}
    The proof is in Section \ref{sect:proofs_rec}.
    \end{proof}
\fi

Due to Proposition \ref{prop:fold_det}, for $\forall \sigma,\lambda,h,\bs{\Omega}\in\R^{D}_+$ satisfying \eqref{eq:rec_cond} and a fixed $m\in\Z$, one can choose $T_1,\dots,T_D>0$ such that the truth value of $m\in \mathbb{S}_N$ is evaluated correctly with an arbitrarily large probability. We note that $p_{\mathsf{err}}$ measures the probability when a recovery error is possible, but not guaranteed, therefore the error probability is smaller in a real scenario.
A small error in Proposition \ref{prop:fold_det} means a large $C$, which can be achieved by decreasing the sampling periods $T_1,\dots,T_d$ or increasing $B,h$ or number of dimensions $D$. 

The residual $\varepsilon_\gamma$, used for reconstructing $\gamma$, requires detecting constants $M_{\bbf{b}}$ in addition to the folding times ${\tau}_{\bbf{b},r}$ and signs ${s}_{\bbf{b},r}$, as will be explained in the next subsection. 

\subsection{Detecting constants $M_{\bbf{b}}$}
\label{sect:Mb}
Given that we can only recover $\gamma$ up to an integer multiple of $h$ \eqref{eq:up_to_constant}, we define $\widetilde{M}_{\bbf{b}}\triangleq M_{\bbf{b}}-M_{\bbf{0}}$, where $\bbf{0}$ is the null vector of $\ZDred$, and recover $\widetilde{M}_{\bbf{b}}$. 
Just as the folding times, different values of $M_{\bbf{b}}$ for adjacent bands cause discontinuities. However, unlike the detection of the folding times, here we have additional information. Specifically, we know that the discontinuities may only be located at the neighboring sides of polytopes $\cP_{\bbf{b}}$. We use  the filter $\psi_{\bbf{b},\bbf{b}^*}$ \eqref{eq:psibb} to detect the discontinuities in a similar fashion to detecting the folding times via $\psi_{\bbf{b},m}$. This time, however, the finite differences $\Delta^N$ computed via $\psi_{\bbf{b},\bbf{b}^*}$ evaluate variations across dimensions $x_2,\dots,x_D$. 

\begin{prop}[Detection of constants $M_{\bbf{b}}$]
\label{prop:Mb_det}
Let $y\sqb{\mathbf{k}}=z\rb{\mathbf{VTk}}+\eta\sqb{\mathbf{k}}$, where $\eta\sqb{\mathbf{k}}\sim\mathcal{N}\rb{0,\sigma^2}$ and $z\rb{\mathbf{x}}$ is the output of a modulo-hysteresis operator
$z\rb{\mathbf{x}}=\MOh^D f\rb{\mathbf{x}}$ with parameters $\lambda,h$ and $f\in\pwd$ satisfying
\begin{equation}
\norm{f}_\infty B\sqrt{D}\cdot \sqrt{\sum_{d=1}^D \Omega_d^2}<\min \cb{h/2,2\lambda-3h},
\label{eq:rec_cond2}
\end{equation}
Furthermore, let $\bbf{b}\in\ZDred$, $\bbf{b}^*\in\nbr$ and $d^*\in\cb{2,\dots,D}$ such that $\sqb{\bbf{b}^*}_{d^*}=\sqb{\bbf{b}}_{d^*}+1$. Assume that 
\begin{gather}
\rb{T_{d^*}\Omega_{d^*} e}^N\norm{f}_\infty<h/2,\\
\rb{N+1}T_{d^*}<B.
\end{gather}
Then the following is true with probability $p>1-p_{\mathsf{err}}$
\begin{equation}
\begin{cases}
M_{\bbf{b}^*} =M_{\bbf{b}}+\mathrm{sign} \rb{\inner{y,\psi_{\bbf{b},\bbf{b}^*}}} & \mathrm{if} \vb{\inner{y,\psi_{\bbf{b},\bbf{b}^*}}} \geq h/2, \\
M_{\bbf{b}^*}=M_{\bbf{b}}   & \mathrm{otherwise}.
        \end{cases}
\end{equation}   
where
\begin{equation}
    p_{\mathsf{err}}\leq e^{-\kappa^2}, \quad \mathrm{where}\ \kappa=\frac{h/2-\rb{T_{d^*}\Omega_{d^*} e}^N\norm{f}_\infty}{\sigma \sqrt{2^{N+1}}}\prod_{d=2,d\neq d^*}^D\sqrt{\frac{B}{T_d}}.
\end{equation}
\end{prop}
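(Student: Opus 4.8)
The plan is to follow the template of Proposition~\ref{prop:fold_det} almost line by line, with the finite-difference direction rotated from $x_1$ to $x_{d^*}$ and the transverse averaging carried out over the remaining coordinates. First I would expand the filtered measurement into its three constituent terms,
\[
\inner{y,\psi_{\bbf{b},\bbf{b}^*}} = \inner{\gamma,\psi_{\bbf{b},\bbf{b}^*}} - \inner{\varepsilon_\gamma,\psi_{\bbf{b},\bbf{b}^*}} + \inner{\eta,\psi_{\bbf{b},\bbf{b}^*}},
\]
and bound each term separately, so that thresholding at $h/2$ isolates the residual contribution.

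For the \emph{input term}, I would invoke Proposition~\ref{prop:pw_for_slices}: the slice $g_{d^*}$ of $f$ along $\mathbf{v}_{d^*}$ lies in $\mathsf{PW}_{\Omega_{d^*}}\rb{\R}$, so the order-$N$ finite difference along $k_{d^*}$ obeys the Bern\v{s}te\u{\i}n-type estimate of \eqref{eq:gamma_bound}; since $\psi_{\bbf{b},\bbf{b}^*}$ merely averages identical such bounds over the $N^{B*}$ transverse samples, $\vb{\inner{\gamma,\psi_{\bbf{b},\bbf{b}^*}}}\leq\rb{T_{d^*}\Omega_{d^*}e}^N\norm{f}_\infty<h/2$. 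For the \emph{residual term}, the key observation is that at $k_1=0$ no $x_1$-fold has yet occurred ($\tau_{\bbf{b},r}>0$ for $r\geq1$ and $s_{\bbf{b},0}=0$), so $\varepsilon_\gamma$ collapses to the constant $hM_{\bbf{b}}$ on $\cR_{\bbf{b}}$ and $hM_{\bbf{b}^*}$ on $\cR_{\bbf{b}^*}$, i.e.\ a single step along $x_{d^*}$. The constraint $\rb{N+1}T_{d^*}<B$ confines the stencil of $\Delta^N\sqb{\cdot-k_{\bbf{b}}}$ to that one interface, and Proposition~\ref{prop:variation_Mb} (available because \eqref{eq:rec_cond2} forces $\df<h/2$ via Corollary~\ref{cor:bound_fB}) bounds the jump by $M_{\bbf{b}^*}-M_{\bbf{b}}\in\cb{-1,0,1}$. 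Hence the residual term is exactly $0$ when $M$ is unchanged and $\pm h$ otherwise, with sign matching the jump direction, which is the analogue of \eqref{eq:eps_bound2}.

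The \emph{noise term} is treated exactly as in \eqref{eq:noise1}--\eqref{eq:noise3}, the only change being that the transverse average now runs over $N^{B*}=\prod_{d=2,d\neq d^*}^D N_d^B$ samples: averaging produces $\bar\eta\sim\mathcal{N}\rb{0,\sigma^2/N^{B*}}$, and the subsequent $\Delta^N$ inflates the variance by $2^N$, giving $\inner{\eta,\psi_{\bbf{b},\bbf{b}^*}}\sim\mathcal{N}\rb{0,\sigma^2 2^N/N^{B*}}$. Combining the three bounds as in \eqref{eq:y_bound1}--\eqref{eq:y_bound2}, whenever the noise stays below $\eta_{\mathsf{max}}=h/2-\rb{T_{d^*}\Omega_{d^*}e}^N\norm{f}_\infty$ the input-plus-noise part is strictly below $h/2$, so the test $\vb{\inner{y,\psi_{\bbf{b},\bbf{b}^*}}}\geq h/2$ decides correctly whether $M$ changed, and $\mathrm{sign}\rb{\inner{y,\psi_{\bbf{b},\bbf{b}^*}}}$ reports the direction. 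The failure event is precisely $\vb{\inner{\eta,\psi_{\bbf{b},\bbf{b}^*}}}\geq\eta_{\mathsf{max}}$; bounding its probability by the complementary-error-function tail estimate \eqref{eq:perr}--\eqref{eq:perr2} with $\sigma_0=\sigma\sqrt{2^N/N^{B*}}$ and substituting $N_d^B=B/T_d$ yields $p_{\mathsf{err}}\leq e^{-\kappa^2}$ with the claimed $\kappa$.

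I expect the main obstacle to lie in the residual-term bookkeeping rather than in the probabilistic estimate. I must check that at $k_1=0$ the filter genuinely sees a single, unit-bounded step: that no $x_1$-fold contributes, that $\rb{N+1}T_{d^*}<B$ truly excludes the neighbouring interfaces that carry their own $M$-jumps, and that the sign convention of $\Delta^N$ composes with the centering offset $k_{\bbf{b}}=N_{d^*}^B\sqb{\bbf{b}}_{d^*}-1$ so that $\mathrm{sign}\rb{\inner{y,\psi_{\bbf{b},\bbf{b}^*}}}=\mathrm{sign}\rb{M_{\bbf{b}^*}-M_{\bbf{b}}}$. Once these geometric facts are settled, the remaining steps are verbatim analogues of Proposition~\ref{prop:fold_det}.
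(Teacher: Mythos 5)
Your proposal is correct and follows essentially the same route as the paper's proof: the same three-term decomposition of $\inner{y,\psi_{\bbf{b},\bbf{b}^*}}$, the same Bern\v{s}te\u{\i}n bound along $x_{d^*}$, the same reduction of the residual at $k_1=0$ to a single step of height $h\rb{M_{\bbf{b}^*}-M_{\bbf{b}}}$ with $M_{\bbf{b}^*}-M_{\bbf{b}}\in\cb{-1,0,1}$ via Proposition~\ref{prop:variation_Mb}, and the same Gaussian tail bound with $\sigma_0=\sigma\sqrt{2^N/N^{B*}}$. The sign bookkeeping you flag as the main obstacle is indeed the only delicate point, and your caution is warranted: the paper's own computation gives $\inner{\varepsilon_{\gamma},\psi_{\bbf{b},\bbf{b}^*}}=h\rb{M_{\bbf{b}^*}-M_{\bbf{b}}}\rb{-1}^{N+1}$ and hence $M_{\bbf{b}^*}=M_{\bbf{b}}+\mathrm{sign}\rb{\inner{y,\psi_{\bbf{b},\bbf{b}^*}}}\rb{-1}^{N}$, so the identity $\mathrm{sign}\rb{\inner{y,\psi_{\bbf{b},\bbf{b}^*}}}=\mathrm{sign}\rb{M_{\bbf{b}^*}-M_{\bbf{b}}}$ you aim for holds exactly when $N$ is even, while for odd $N$ a factor $\rb{-1}^{N}$ must be carried (a parity that the proposition statement itself suppresses).
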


\if\MbDetInAppendix\FL
    \begin{proof}
    We define by $y_{\bbf{b},\bbf{b}^*}$ the samples $y$ filtered with $\psi_{\bbf{b},\bbf{b}^*}$ where $\bbf{b}^* \in\nbr$, such that $\sqb{\bbf{b}^*}_{d^*}=\sqb{\bbf{b}}_{d^*}+1$ and
\begin{equation*}
    y_{\bbf{b},\bbf{b}^*}=\inner{y,\psi_{\bbf{b},\bbf{b}^*}} =\inner{\gamma,\psi_{\bbf{b},\bbf{b}^*}}-\inner{\varepsilon_{\gamma},\psi_{\bbf{b},\bbf{b}^*}}+\inner{\eta,\psi_{\bbf{b},\bbf{b}^*}}
\end{equation*}    
Along the same lines as \eqref{eq:gamma_bound}, we derive
\begin{equation}
     \vb{\inner{\gamma,\psi_{\bbf{b},\bbf{b}^*}}}\leq \rb{T_{d^*}\Omega_{d^*}e}^N\norm{f}_\infty.
\end{equation}
Along dimension $x_{d^*}$ and for $\bbf{T}\bbf{k}\in\cR_{\bbf{b}}\cup\cR_{\bbf{b}^*}, k_1=0$, the support of $\psi_{\bbf{b},\bbf{b}^*}$ is 
\[\cb{N_{d^*}^B\sqb{\bbf{b}^*}_{d^*}-1,\dots,N_{d^*}^B\sqb{\bbf{b}^*}_{d^*}-1+N}.\]
Furthermore, the discontinuity between the bands would be located in between the samples $\rb{N_{d^*}^B\sqb{\bbf{b}^*}_{d^*}-1}T_{d^*},N_{d^*}^B\sqb{\bbf{b}^*}_{d^*}T_{d^*}$. Using the same reasoning as before (\ref{eq:eps_bound1}-\ref{eq:eps_bound2})
\begin{equation}
\left\lbrace
    \begin{array}{cc}
        \vb{\inner{\varepsilon_\gamma,\psi_{\bbf{b},\bbf{b}^*}}}\geq h & \mathrm{if}\ M_{\bbf{b}}\neq M_{\bbf{b}^*}, \\
        \vb{\inner{\varepsilon_\gamma,\psi_{\bbf{b},\bbf{b}^*}}}=0 & \mathrm{otherwise}, 
    \end{array}\right.
    \label{eq:eps111}
\end{equation}

Therefore, if $\rb{T_{d^*}\Omega_{d^*} e}^N \norm{f}_\infty<h/2$ and $\vb{\inner{\eta,\psi_{\bbf{b},\bbf{b}^*}}}<h/2-\rb{T_{d^*}\Omega_{d^*} e}^N \norm{f}_\infty$, then, $\mathrm{if}\ M_{\bbf{b}}\neq M_{\bbf{b}^*}$, as before, we get \eqref{eq:y_bound1}
\begin{equation*}
    \vb{\inner{y,\psi_{\bbf{b},\bbf{b}^*}}}\geq h/2.
\end{equation*}
Therefore, as before \eqref{eq:y_bound2}, we identify if  $M_{\bbf{b}}\neq M_{\bbf{b}^*}$ by thresholding sequence $\inner{y,\psi_{\bbf{b},\bbf{b}^*}}$ via the following inequalities:
\begin{equation}
\left\lbrace
    \begin{array}{cc}
        \vb{\inner{y,\psi_{\bbf{b},\bbf{b}^*}}} \geq h/2 & \mathrm{if}\ M_{\bbf{b}}\neq M_{\bbf{b}^*}, \\
        \vb{\inner{y,\psi_{\bbf{b},\bbf{b}^*}}}< h/2   & \mathrm{otherwise}.
    \end{array}\right.
    \label{eq:threshold1}
\end{equation}   

Assuming that $M_{\bbf{b}}\neq M_{\bbf{b}^*}$, we compute $\mathrm{sign} \rb{\inner{y,\psi_{\bbf{b},\bbf{b}^*}}}$ as follows. We first show that 
\begin{equation}
\mathrm{sign} \rb{\inner{y,\psi_{\bbf{b},\bbf{b}^*}}}=-\mathrm{sign} \rb{\inner{\varepsilon_{\gamma},\psi_{\bbf{b},\bbf{b}^*}}}
\label{eq:sign1}
\end{equation}
If we assume by contradiction that $\mathrm{sign} \rb{\inner{y,\psi_{\bbf{b},\bbf{b}^*}}}=\mathrm{sign} \rb{\inner{\varepsilon_{\gamma},\psi_{\bbf{b},\bbf{b}^*}}}$ we get
$\vb{\inner{y,\psi_{\bbf{b},\bbf{b}^*}}+\inner{\varepsilon_\gamma,\psi_{\bbf{b},\bbf{b}^*}}}=\vb{\inner{\gamma,\psi_{\bbf{b},\bbf{b}^*}}}<h/2$.
However, from \eqref{eq:eps111}-\eqref{eq:threshold1} we have that $\vb{\inner{y,\psi_{\bbf{b},\bbf{b}^*}}}\geq h/2$ and $\vb{\inner{\varepsilon_\gamma,\psi_{\bbf{b},\bbf{b}^*}}}>0$. Given our assumption that the two quantities have the same sign, we get a contradiction and thus \eqref{eq:sign1} is true.

Using the expression of $\psi_{\bbf{b},\bbf{b}^*}$ in \eqref{eq:psibb}, the definition of the residual in \eqref{eq:eps} and Proposition \ref{prop:variation_Mb}, it can be shown directly that
\[\inner{\varepsilon_{\gamma},\psi_{\bbf{b},\bbf{b}^*}}=h\rb{M_{\bbf{b}^*}- M_{\bbf{b}}}\rb{-1}^{N+1}\Rightarrow\mathrm{sign} \rb{\inner{\varepsilon_{\gamma},\psi_{\bbf{b},\bbf{b}^*}}}=\rb{M_{\bbf{b}^*}-M_{\bbf{b}}} \rb{-1}^{N+1},\] 
and thus $\mathrm{sign} \rb{\inner{y,\psi_{\bbf{b},\bbf{b}^*}}}=\rb{M_{\bbf{b}^*}-M_{\bbf{b}}} \rb{-1}^{N}$. Finally, we get that
\begin{equation}
\begin{cases}
        M_{\bbf{b}^*}=M_{\bbf{b}}+\mathrm{sign} \rb{\inner{y,\psi_{\bbf{b},\bbf{b}^*}}}\rb{-1}^N & \mathrm{if}\ \vb{\inner{y,\psi_{\bbf{b},\bbf{b}^*}}} \geq h/2, \\
        M_{\bbf{b}^*}=M_{\bbf{b}}   & \mathrm{otherwise}.
\end{cases}
\end{equation}   

Furthermore, the filtered noise satisfies (\ref{eq:noise1}-\ref{eq:noise3})
\begin{equation}
    \inner{\eta,\psi_{\bbf{b},\bbf{b}^*}}\sim \mathcal{N}\rb{0,\sigma^2\frac{2^N}{N^{B*}}}.
\end{equation}
Then, the probability that $\vb{\inner{\eta,\psi_{\bbf{b},\bbf{b}^*}}}<h/2-\rb{T_{d^*}\Omega_{d^*} e}^N \norm{f}_\infty$ doesn't hold, denoted by $p_{\mathsf{err}}$, satisfies
\begin{equation}
    p_{\mathsf{err}}\leq e^{-\rb{\frac{\eta_{\mathsf{max}}}{\sigma_0\sqrt{2}}}^2}.
\end{equation}
where $\sigma_0=\sigma\cdot\sqrt{\frac{2^N}{N^{B*}}}$ and $\eta_{\mathsf{max}}=h/2-\rb{T_{d^*}\Omega_{d^*} e}^N \norm{f}_\infty$.
    \end{proof}
\else
    \begin{proof}
    The proof is in Section \ref{sect:proofs_rec}.
    \end{proof}
\fi

\section{Input reconstruction}
\label{sect:inp_rec}
\subsection{Recovery with the proposed operator}

We begin with the noiseless input recovery scenario $\sigma=0$ where, via Proposition \ref{prop:fold_det}, the set $\mathbb{S}_N$ is perfectly identified with probability $1$. Furthermore, a constant $\widetilde{M}_{\bbf{b}}=M_{\bbf{b}}-M_{\bbf{0}}$ can be perfectly recovered from $\widetilde{M}_{\bbf{b}^*}$ where $\bbf{b}^*\in\nbr$ according to Proposition \ref{prop:Mb_det}. The following theorem proves the input recovery conditions in the case $\sigma=0$.
\begin{theo}[Noiseless input reconstruction]
\label{th:input_rec}
Let $f\in\pwd$ and let $z\rb{\mathbf{x}}=\MOh^D f\rb{\mathbf{x}}$ be the output of a \MD modulo-hysteresis model with parameters $\lambda,h,B$. Furthermore, let $y\sqb{\mathbf{k}}=z\rb{\mathbf{VTk}}$ be the samples of the modulo output computed on lattice $\bs{\Lambda}$. Furthermore, for $N\in\Z, N\geq1$, assume that
\begin{gather}
    \norm{f}_\infty B\sqrt{D}\cdot \norm{\bs{\Omega}}_2<\min \cb{h/2,2\lambda-3h},\label{eq:rec_cond1_input}\\
    \rb{T_d\Omega_d e}^N\norm{f}_\infty<h/2,\quad \forall d\in\cb{1,\dots,D},\label{eq:rec_cond2_input}\\    
    \rb{N+1}T_1<\frac{h}{\Omega_1 \norm{f}_\infty},\label{eq:rec_cond3_input}\\
    \rb{N+1}T_d<B,\quad\forall d\in\cb{2,\dots,D}.
    \label{eq:rec_cond4_input}
\end{gather}
Then samples $\widetilde{\gamma}\sqb{\mathbf{k}}=\gamma\sqb{\mathbf{k}}-hM_{\bbf{0}}$ can be perfectly reconstructed from $y\sqb{\mathbf{k}}$.
\end{theo}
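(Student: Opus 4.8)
The plan is to follow, step by step, the three-stage recovery pipeline set out at the beginning of Section~\ref{sect:det_disc}, specialized to the noiseless setting $\sigma=0$, and then to verify that the residual we reconstruct agrees with the true residual up to the single global constant $hM_{\bbf{0}}$. First I would observe that the error exponents in Propositions~\ref{prop:fold_det} and~\ref{prop:Mb_det} blow up as $\sigma\to0$: the positivity of the numerators (guaranteed by \eqref{eq:rec_cond2_input}) sends $C,\kappa\to\infty$ in \eqref{eq:perr0}, so $p_{\mathsf{err}}=0$ and every thresholding decision below is deterministic and exact. I would also check that hypotheses \eqref{eq:rec_cond1_input}--\eqref{eq:rec_cond2_input} imply the standing assumptions of both propositions (using $\norm{\bs{\Omega}}_2=\sqrt{\sum_d\Omega_d^2}$), while \eqref{eq:rec_cond4_input} supplies the required $T_d<B$ and $(N+1)T_d<B$.

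Step one recovers, band by band, the folding data. Each sample $\mathbf{k}$ is assigned to the band $\bbf{b}=\floor{\bbf{T}\bbf{k}/B}$ (componentwise, well-defined since $N_d^B=B/T_d\in\Z$), so the partition of samples into bands is known a priori. Within a band, thresholding $\vb{\inner{y,\psi_{\bbf{b},m}}}\geq h/2$ identifies $\mathbb{S}_N$ exactly by Proposition~\ref{prop:fold_det}. The crux is disentangling the individual folds: $\mathbb{S}_N$ is a union of length-$(N+1)$ blocks $\cb{\floor{\tau_{\bbf{b},r}/T_1}-N,\dots,\floor{\tau_{\bbf{b},r}/T_1}}$, and these must not touch. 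Here I would invoke the separation bound $\tau_{\bbf{b},r+1}-\tau_{\bbf{b},r}\geq h/(\Omega_1\norm{f}_\infty)$ from Proposition~\ref{prop:fold_sep_multid}: combined with \eqref{eq:rec_cond3_input} it gives $(\tau_{\bbf{b},r+1}-\tau_{\bbf{b},r})/T_1>N+1$, whence $\floor{\tau_{\bbf{b},r+1}/T_1}-\floor{\tau_{\bbf{b},r}/T_1}>N$ and consecutive blocks are separated by at least one index. Thus each maximal run of consecutive integers in $\mathbb{S}_N$ pins down one value $\floor{\tau_{\bbf{b},r}/T_1}$, and reading off $\mathrm{sign}\rb{\inner{y,\psi_{\bbf{b},m}}}$ on that run recovers $s_{\bbf{b},r}$, exactly as in the sign argument of Proposition~\ref{prop:Mb_det} (where $\vb{\inner{\gamma,\psi}}<h/2$ forces the sign of $\inner{y,\psi}$ to equal that of the residual jump).

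Step two recovers the inter-band offsets. For every pair of neighboring bands $\bbf{b}$ and $\bbf{b}^*\in\nbr$, Proposition~\ref{prop:Mb_det} returns $M_{\bbf{b}^*}-M_{\bbf{b}}\in\cb{-1,0,1}$ exactly. Since each $M_{\bbf{b}}$ is an intrinsically well-defined integer, these increments telescope along any lattice path in $\ZDred$ from $\bbf{0}$ to $\bbf{b}$, so the accumulated sum is path-independent and equals $\widetilde{M}_{\bbf{b}}=M_{\bbf{b}}-M_{\bbf{0}}$; connectedness of $\ZDred$ under the neighbor relation then yields $\widetilde{M}_{\bbf{b}}$ for all $\bbf{b}$. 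Finally, step three assembles the shifted residual: for $\bbf{T}\bbf{k}\in\cR_{\bbf{b}}$ set
\begin{equation*}
\widetilde{\varepsilon}_\gamma\sqb{\mathbf{k}}=h\sqb{\widetilde{M}_{\bbf{b}}+\sum_{r} s_{\bbf{b},r}\ind_{\left[\tau_{\bbf{b},r},\infty\right)}\rb{k_1 T_1}},
\end{equation*}
which is fully determined by the recovered quantities because the indicator at sample $k_1$ depends only on $\floor{\tau_{\bbf{b},r}/T_1}$. By \eqref{eq:eps} this equals $\varepsilon_\gamma\sqb{\mathbf{k}}-hM_{\bbf{0}}$, so $y\sqb{\mathbf{k}}+\widetilde{\varepsilon}_\gamma\sqb{\mathbf{k}}=\gamma\sqb{\mathbf{k}}-hM_{\bbf{0}}=\widetilde{\gamma}\sqb{\mathbf{k}}$, which is exact reconstruction; the region $x_1<0$ follows by the reflection symmetry built into Definition~\ref{def:cont_moduloMD}. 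The main obstacle is the folding-time disentanglement in step one---showing that the detected set $\mathbb{S}_N$ decomposes cleanly into one block per fold---which is precisely what condition \eqref{eq:rec_cond3_input} is engineered to guarantee through the separation bound.
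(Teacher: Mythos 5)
Your pipeline is essentially the paper's own proof: the noiseless limit of Propositions~\ref{prop:fold_det} and~\ref{prop:Mb_det}, fold isolation via Proposition~\ref{prop:fold_sep_multid} combined with \eqref{eq:rec_cond3_input}, telescoping of the constants $\widetilde{M}_{\bbf{b}}$ outward from $\bbf{0}$, and the observation that quantized folding times suffice because the residual is only ever evaluated at sample locations all appear there in the same roles (the paper telescopes along coordinate axes rather than invoking path-independence, but that difference is cosmetic).

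The one step that does not survive scrutiny as written is the recovery of the folding signs $s_{\bbf{b},r}$. No cited proposition provides it --- Proposition~\ref{prop:fold_det} only decides membership in $\mathbb{S}_N$ --- and your justification, that reading off $\mathrm{sign}\rb{\inner{y,\psi_{\bbf{b},m}}}$ ``on that run'' recovers the jump sign, fails at most positions of the run once $N\geq2$. If the first folded sample has index $j_0$ and the jump has sign $s$, the filtered residual at offset $m$ equals $hs\sum_{j\geq j_0-m}\Delta^N\sqb{j}$, and these partial sums alternate in sign: for $N=2$, $\Delta^2=\sqb{1,-2,1}$ gives coefficient $+1$ at the left end of the run and $-1$ one step to the right. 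So the sign of $\inner{y,\psi_{\bbf{b},m}}$ genuinely flips with $m$ inside a single run. The analogy with Proposition~\ref{prop:Mb_det} does not transfer, because there the jump sits at a \emph{fixed, known} position relative to the filter support (between filter indices $0$ and $1$ by construction of $\psi_{\bbf{b},\bbf{b}^*}$), which is exactly what permits a single closed-form sign relation in that proof. The paper closes this hole by anchoring the read-out: it evaluates only at the leftmost run index $m_{\mathsf{min}}^r$, where the jump lands at the right end of the filter support, computes $\inner{\varepsilon_\gamma,\psi_{\bbf{b},m_{\mathsf{min}}^r}}=h s_{\bbf{b},r}$ explicitly (the surviving coefficient is $\Delta^N\sqb{N}=1$), and only then sets $\widetilde{s}_{\bbf{b},r}=-\mathrm{sign}\inner{y,\psi_{\bbf{b},m_{\mathsf{min}}^r}}$. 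Your maximal-run bookkeeping already identifies $m_{\mathsf{min}}^r$, so the repair is local: fix the sign read-out at that index and supply the one-line partial-sum computation.
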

\if\InputRecCleanInAppendix\FL
    \begin{proof}
    We note that, for $\sigma\rightarrow0$, the results in Proposition \ref{prop:fold_det} and \ref{prop:Mb_det} hold true with probability $1$. We then compute $\widetilde{M}_{\bbf{b}}$ using Proposition \ref{prop:Mb_det} as follows. Given that $\widetilde{M}_{\bbf{0}}=0$ by definition, one can compute successively $\widetilde{M}_{\sqb{b_2,0,\dots,0}}$ from $\widetilde{M}_{\sqb{b_2-1,0,\dots,0}}$ for $\forall b_2\in\Z_+$ and subsequently $\widetilde{M}_{\sqb{b_2,0,\dots,0}}$ from $\widetilde{M}_{\sqb{b_2+1,0,\dots,0}}$ for $\forall b_2\in\Z_-$. Repeating the process for $b_d,d\in\cb{3,\dots,D}$ yields $\widetilde{M}_{\bbf{b}}, \forall \bbf{b}\in\ZDred$.

To compute the folding times via Proposition \ref{prop:fold_det}, we require that the sets characterized by each folding time in $\mathbb{S}_N$ do not overlap. Specifically we require that
\[\cb{\ceil{\frac{\tau_{\bbf{b},r_1}}{T_1}}-N,\dots,\ceil{\frac{\tau_{\bbf{b},r_1}}{T_1}}}\cap \cb{\ceil{\frac{\tau_{\bbf{b},r_2}}{T_1}}-N,\dots,\ceil{\frac{\tau_{\bbf{b},r_2}}{T_1}}}=\emptyset,\]
for $\forall r_1,r_2\in\Z, r_1\neq r_2$.
A sufficient condition for this is
\begin{equation}
\label{eq:ceil_lower_bound}
\ceil{\frac{\tau_{\bbf{b},r_2}}{T_1}}-N-\ceil{\frac{\tau_{\bbf{b},r_1}}{T_1}}>{\frac{\tau_{\bbf{b},r_2}}{T_1}}-{\frac{\tau_{,\bbf{b},r_1}}{T_1}}-\rb{N+1}>0,
\end{equation}
which can be guaranteed via Proposition \ref{prop:fold_sep_multid} if 
\begin{equation}
    \rb{N+1}T_1<\frac{h}{\Omega_1 \norm{f}_\infty}.
    \label{eq:separation_suff_cond}
\end{equation}
Without reducing the generality we first assume that $x_1\geq0$ and thus $r,k_1\geq0$. As before, the case $x_1\leq0$ is treated as a mirrored version of $x_1\geq0$.
An immediate consequence of \eqref{eq:ceil_lower_bound} is that for $r_1=0,r_2=1\Rightarrow \ceil{\frac{\tau_{\bbf{b},1}}{T_1}}\geq N+1$. Because there is no actual jump taking place at $\tau_{\bbf{b},0}=0 \Rightarrow \vb{\inner{\varepsilon_\gamma,\psi_{\bbf{b},0}}}=0$ and $\vb{\inner{y,\psi_{\bbf{b},0}}}<h/2$ via Proposition \ref{prop:fold_det}. The smallest $m$ for which filtered output satisfies $\vb{\inner{y,\psi_{\bbf{b},m}}}\geq h/2$ is $m=m_{\mathsf{min}}^1\triangleq \ceil{\frac{\tau_{\bbf{b},1}}{T_1}}-N$. The last index $m$ corresponding to folding time $\tau_{\bbf{b},1}$ detected via $\vb{\inner{y,\psi_{\bbf{b},m}}}\geq h/2$ is $m=m_{\mathsf{max}}^1\triangleq\ceil{\frac{\tau_{\bbf{b},1}}{T_1}}$. We can compute $m_{\mathsf{min}}^1$ and $m_{\mathsf{max}}^1$ as
\begin{align}
    \label{eq:m_min1}
    m_{\mathsf{min}}^1&=\min\cb{m>0\setsep\vb{\inner{y,\psi_{\bbf{b},m}}}\geq h/2},\\
    m_{\mathsf{max}}^1&=m_{\mathsf{min}}^1+N.
\end{align}
Assuming \eqref{eq:separation_suff_cond} to be true, one can then compute recursively sequences $m_{\mathsf{min}}^r, m_{\mathsf{max}}^r$ corresponding to folding time $\tau_{\bbf{b},r}$ as follows
\begin{align}
    m_{\mathsf{min}}^r&=\min\cb{m>m_{\mathsf{min}}^{r-1}+N\setsep\vb{\inner{y,\psi_{\bbf{b},m}}}\geq h/2},\\
    m_{\mathsf{max}}^r&=m_{\mathsf{min}}^r+N.
\end{align}
The folding time is estimated as $\widetilde{\tau}_{\bbf{b},r}=\sqb{m_{\mathsf{min}}^r+N}\cdot T_1$. As in the case $r=1$ we can show that $m_{\mathsf{min}}^r=\ceil{\frac{\tau_{\bbf{b},r}}{T_1}}-N$. Therefore $\widetilde{\tau}_{\bbf{b},r}=\ceil{\frac{\tau_{\bbf{b},r}}{T_1}}\cdot T_1$. Even though the folding time is not perfectly computed, this has no effect on the input recovery because $\ceil{\frac{\widetilde{\tau}_{\bbf{b},r}}{T_1}}=\ceil{\frac{{\tau}_{\bbf{b},r}}{T_1}}$ and we only evaluate the residual at the sampling locations $\mathbf{kT}$. This means that replacing ${\tau}_{\bbf{b},r}$ by $\ceil{\frac{{\tau}_{\bbf{b},r}}{T_1}}\cdot T_1$ in the expression of $\varepsilon_f\rb{\mathbf{VTk}}$  yields the same values (see Definition \ref{def:cont_moduloMD})
\begin{align}
\begin{split}
\varepsilon_\gamma\sqb{\mathbf{k}}&=h\sqb{M_{\bbf{b}}+\sum_{i=0}^r {s}_{\bbf{b},r} \ind_{\left[{\tau}_{\bbf{b},r},\infty\right)}\rb{k_1 T_1}}\\
&=h\sqb{M_{\bbf{b}}+\sum_{i=0}^r {s}_{\bbf{b},r} \ind_{\left[\ceil{\frac{{\tau}_{\bbf{b},r}}{T_1}}\cdot T_1,\infty\right)}\rb{k_1 T_1}}.
\end{split}
\label{eq:discrete_residual}
\end{align}
We note that, as explained before, we do not recover $M_{\bbf{b}}$ but $\widetilde{M}_{\bbf{b}}=M_{\bbf{b}}-M_{\bbf{0}}$. This will be accounted for at the final input reconstruction stage.

Furthermore, we estimate the sign as $\widetilde{s}_{\bbf{b},r}=-\mathrm{sign}\inner{y,\psi_{\bbf{b},m_{\mathsf{min}}^r}}$. We will show that $\widetilde{s}_{\bbf{b},r}={s}_{\bbf{b},r}$ as follows. Given that $\vb{\inner{\gamma,\psi_{\bbf{b},m_{\mathsf{min}}^r}}}<h/2$ and $\vb{\inner{y,\psi_{\bbf{b},m_{\mathsf{min}}^r}}}\geq h/2$, then, via \eqref{eq:filtered_data}, it follows that 
\[\mathrm{sign}\inner{y,\psi_{\bbf{b},m_{\mathsf{min}}^r}}=-\mathrm{sign}\inner{\varepsilon_\gamma,\psi_{\bbf{b},m_{\mathsf{min}}^r}}.\]
We use the fact that $\varepsilon_\gamma\sqb{\mathbf{k}}$ does not change for $\bbf{k}\bbf{T}\in\cR_{\bbf{b}}$. For $N\geq1$, using the expression of $\Delta^N$ and \eqref{eq:discrete_residual},
\begin{align}
\begin{split}
    \inner{\varepsilon_\gamma,\psi_{\bbf{b},m_{\mathsf{min}}^r}}&=\inner{\varepsilon_\gamma,\Delta^N\sqb{\cdot-m_{\mathsf{min}}^r}}\\
    &=h {s}_{\bbf{b},r} \sum_{k_1\in\Z}  \ind_{\left[\ceil{\frac{{\tau}_{\bbf{b},r}}{T_1}}\cdot T_1,\infty\right)}\rb{k_1 T_1}\cdot\Delta^N\sqb{k_1-m_{\mathsf{min}}^r}\\
    &=h {s}_{\bbf{b},r} \sum_{k_1\in\Z}  \ind_{\left[\rb{m_{\mathsf{min}}^r+N}\cdot T_1,\infty\right)}\rb{k_1 T_1}\cdot\Delta^N\sqb{k_1-m_{\mathsf{min}}^r}
\end{split}
\end{align}
By applying the change of variable $k_1^*=k_1-m_{\mathsf{min}}^r-N$ 
\begin{align}
\inner{\varepsilon_\gamma,\psi_{\bbf{b},m_{\mathsf{min}}^r}}
&=h {s}_{\bbf{b},r} \sum_{k_1^*\in\Z}  \ind_{\left[0,\infty\right)}\rb{k_1^* T_1}\cdot\Delta^N\sqb{k_1^*+N}\\
&=h {s}_{\bbf{b},r} \sum_{k_1^*\in\Z_+}  \Delta^N\sqb{k_1^*+N}=h {s}_{\bbf{b},r}.
\end{align}
The last equality can be shown recursively via direct calculation for $N\geq1$, given that $k_1\geq0$, which proves that ${s}_{\bbf{b},r}=\widetilde{s}_{\bbf{b},r}$.

After the folding times and signs are computed as above for all $\bbf{b}\in\ZDred$, the input samples are reconstructed as
\begin{equation}
\label{eq:input_rec}
\widetilde{\gamma}\sqb{\mathbf{k}}=y\sqb{\mathbf{k}}+\widetilde{\varepsilon}_\gamma\sqb{\mathbf{k}},
\end{equation}
where $\widetilde{\varepsilon}_\gamma\sqb{\mathbf{k}}=h\sqb{\widetilde{M}_{\bbf{b}}+\sum_{i=0}^r {s}_{\bbf{b},r} \ind_{\left[{\tau}_{\bbf{b},r},\infty\right)}\rb{k_1 T_1}}=\varepsilon_{\gamma}\sqb{\mathbf{k}}-h{M}_{\bbf{0}}$, which leads to $\widetilde{\gamma}\sqb{\mathbf{k}}=\gamma\sqb{\mathbf{k}}-h{M}_{\bbf{0}}$.
    \end{proof}
\else
    \begin{proof}
    The proof is in Section \ref{sect:proofs_rec}.
    \end{proof}
\fi

The interpretation of the sufficient conditions in Theorem \ref{th:input_rec} is as follows. The modulo-hysteresis is well-defined due to a bounded intra-band variation guaranteed by \eqref{eq:rec_cond1_input}. Condition \eqref{eq:rec_cond2_input} bounds the $N$-th order difference of the input along all of the dimensions, ensuring that the filter has enough shrinking effect on the input. Finally, \eqref{eq:rec_cond3_input} and \eqref{eq:rec_cond4_input} guarantee enough samples in between the folds \eqref{eq:rec_cond3_input} and within each band \eqref{eq:rec_cond4_input} so that the supports of the filters detecting consecutive discontinuities don't overlap.

In the general case where $\sigma>0$ the following result holds true.
\begin{theo}[Noisy input reconstruction]
\label{th:noisy_input_rec}
Let $f\in\pwd$ and $z\rb{\mathbf{x}}=\MOh^D f\rb{\mathbf{x}}$. Assume that $y\sqb{\mathbf{k}}=z\rb{\mathbf{VTk}}+\eta\sqb{\mathbf{k}}$ are known for $k_d\in\cb{1,\dots,K^d_{\mathsf{max}}}$ where $\eta\sqb{\mathbf{k}}\sim\mathcal{N}\rb{0,\sigma^2}$, such that  $B_{\mathsf{max}}^d\triangleq \frac{K^d_{\mathsf{max}}}{N_d^B} \in \Z$ for $d\in\cb{2,\dots,D}$. Then, if (\ref{eq:rec_cond1_input}-\ref{eq:rec_cond4_input}) are true, then the input samples $\widetilde{\gamma}\sqb{\mathbf{k}}$ can be recovered from $y\sqb{\mathbf{k}}$ with a probability $p>p_{\mathsf{acc}}$ such that
\begin{equation*}
    p_{\mathsf{acc}}\geq \rb{1-e^{-C^2}}^{K_{\mathsf{max}}^1\cdot \prod_{d=2}^D B_{\mathsf{max}}^d}\hspace{-0.2em}\cdot \rb{1-e^{-\kappa_{\mathsf{min}}^2}}^{\prod_{d=2}^D B_{\mathsf{max}}^d}\hspace{-0.2em},
\end{equation*}
where
\begin{align*}
    C&=\frac{h/2-\rb{T_1\Omega_1 e}^N\norm{f}_\infty}{\sigma \sqrt{2^{N+1}}}\prod_{d=2}^D\sqrt{\frac{B}{T_d}},\\ \kappa_{\mathsf{min}}&=\frac{h/2-\sqb{\rb{\max_{d}T_d\Omega_d} \cdot e}^N\norm{f}_\infty}{\sigma \sqrt{2^{N+1}}}\sqrt{\frac{B^{D-2}}{T_{\mathsf{max}}^{D-2}}},
\end{align*}
where $T_{\mathsf{max}}=\max_{d\in\cb{2,\dots,D}} T_d$.
\end{theo}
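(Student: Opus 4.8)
The plan is to show that exact recovery is \emph{guaranteed} on a large-probability event assembled from the per-decision guarantees of Propositions~\ref{prop:fold_det} and~\ref{prop:Mb_det}, and then to bound the probability of that event from below by the Gaussian correlation inequality. I would first observe that, once the folding support $\mathbb{S}_N$ in every band and the relative constants $\widetilde{M}_{\bbf{b}}=M_{\bbf{b}}-M_{\bbf{0}}$ are known, the map producing $\widetilde{\gamma}\sqb{\mathbf{k}}=y\sqb{\mathbf{k}}+\widetilde{\varepsilon}_\gamma\sqb{\mathbf{k}}$ is entirely deterministic and identical to the noiseless reconstruction of Theorem~\ref{th:input_rec}: the signs $s_{\bbf{b},r}$ and the quantised folding times $\ceil{\tau_{\bbf{b},r}/T_1}T_1$ follow from $\mathbb{S}_N$, and conditions \eqref{eq:rec_cond1_input}--\eqref{eq:rec_cond4_input} are precisely those of Theorem~\ref{th:input_rec}. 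Hence recovery on the window $k_d\in\cb{1,\dots,K^d_{\mathsf{max}}}$ is exact whenever every threshold decision underlying these two detection steps returns the correct answer, so it suffices to lower-bound the probability of this joint event.

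Next I would enumerate the decisions and cast each as a symmetric slab in the Gaussian noise vector $\eta$. For the folding times, Proposition~\ref{prop:fold_det} shows the decision at $\rb{\bbf{b},m}$ is correct on the event $A_{\bbf{b},m}=\cb{\vb{\inner{\eta,\psi_{\bbf{b},m}}}<h/2-\rb{T_1\Omega_1 e}^N\norm{f}_\infty}$, with $\mathbb{P}\rb{A_{\bbf{b},m}}\geq 1-e^{-C^2}$; on the window there are $\prod_{d=2}^D B^d_{\mathsf{max}}$ bands (each spanning $N_d^B=B/T_d$ samples along dimension $d$) and $K^1_{\mathsf{max}}$ indices $m$, giving $K^1_{\mathsf{max}}\prod_{d=2}^D B^d_{\mathsf{max}}$ such slabs. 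For the constants, propagating $\widetilde{M}_{\bbf{b}}$ outward from $\bbf{0}$ through neighbouring bands (Proposition~\ref{prop:variation_Mb}) uses one slab $A'_{\bbf{b}}=\cb{\vb{\inner{\eta,\psi_{\bbf{b},\bbf{b}^*}}}<h/2-\rb{T_{d^*}\Omega_{d^*}e}^N\norm{f}_\infty}$ per band reached via Proposition~\ref{prop:Mb_det}, i.e.\ at most $\prod_{d=2}^D B^d_{\mathsf{max}}$ slabs. Since the constant guarantee is direction dependent, I would uniformise its exponent over $d^*\in\cb{2,\dots,D}$: the $\kappa$ of Proposition~\ref{prop:Mb_det} depends on $d^*$ through the numerator $h/2-\rb{T_{d^*}\Omega_{d^*}e}^N\norm{f}_\infty$ and the factor $\prod_{d=2,d\neq d^*}^D\sqrt{B/T_d}$, and because $T_d<B$ forces each factor above one, $\kappa$ is smallest when $T_{d^*}\Omega_{d^*}=\max_d T_d\Omega_d$ and each of the $D-2$ remaining factors is taken at $T_d=T_{\mathsf{max}}$, which is exactly $\kappa_{\mathsf{min}}$. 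Thus $\mathbb{P}\rb{A'_{\bbf{b}}}\geq 1-e^{-\kappa_{\mathsf{min}}^2}$ uniformly.

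Finally I would aggregate. Exact recovery contains $\bigcap A_{\bbf{b},m}\cap\bigcap A'_{\bbf{b}}$, so it remains to bound $\mathbb{P}\rb{\bigcap A_{\bbf{b},m}\cap\bigcap A'_{\bbf{b}}}$ from below. Each slab is a symmetric convex set for the centred Gaussian $\eta$, so the \emph{Gaussian correlation inequality}, $\mathbb{P}\rb{\bigcap_i K_i}\geq\prod_i\mathbb{P}\rb{K_i}$ for symmetric convex $K_i$, applied to the full collection yields $\rb{1-e^{-C^2}}^{K^1_{\mathsf{max}}\prod_{d=2}^D B^d_{\mathsf{max}}}\rb{1-e^{-\kappa_{\mathsf{min}}^2}}^{\prod_{d=2}^D B^d_{\mathsf{max}}}$, which is exactly $p_{\mathsf{acc}}$; using the band count $\prod_{d=2}^D B^d_{\mathsf{max}}$ in place of the spanning-tree size only weakens the estimate.

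The main obstacle is precisely this aggregation. The filtered-noise variables $\inner{\eta,\psi_{\bbf{b},m}}$ at indices $m$ within $N$ of one another in the same band share noise samples and are correlated, so the multiplicative form does \emph{not} follow from marginal tail bounds together with a naive independence claim (only slabs attached to distinct bands use disjoint blocks of $\eta$ and are genuinely independent). Invoking the Gaussian correlation inequality is what makes the product rigorous without \emph{any} independence assumption; should one wish to avoid that tool, a Bonferroni/union bound giving $1-\sum_i\mathbb{P}\rb{K_i^c}$ in place of the product is available at the cost of a weaker estimate.
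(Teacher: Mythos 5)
Your proposal is correct, and its skeleton coincides with the paper's own proof: condition on every threshold decision being correct, observe that the reconstruction map is then the deterministic noiseless pipeline of Theorem~\ref{th:input_rec}, count the decisions ($K^1_{\mathsf{max}}\cdot\prod_{d=2}^D B^d_{\mathsf{max}}$ folding-time tests via Proposition~\ref{prop:fold_det}, and one constant test per band via Proposition~\ref{prop:Mb_det}; the paper uses $\prod_{d=2}^D B^d_{\mathsf{max}}-1$ such tests, and your looser count of $\prod_{d=2}^D B^d_{\mathsf{max}}$ only weakens the bound, as you note), uniformize $\kappa$ over $d^*$ to $\kappa_{\mathsf{min}}$ exactly as the paper does, and take the product of the per-decision success probabilities. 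The genuine difference lies in how that final product is justified. The paper simply asserts that the probability that Proposition~\ref{prop:fold_det} holds for all filters is $\rb{1-e^{-C^2}}^{K^1_{\mathsf{max}}\prod_{d=2}^D B^d_{\mathsf{max}}}$, which tacitly treats the decision events as independent; as you correctly point out, they are not, because filters $\psi_{\bbf{b},m}$ and $\psi_{\bbf{b},m'}$ with $\vb{m-m'}\leq N$ share noise samples, and the filters $\psi_{\bbf{b},\bbf{b}^*}$ supported at $k_1=0$ overlap with folding-time filters as well. Your appeal to the Gaussian correlation inequality --- each event is a symmetric slab $\cb{\vb{\inner{\eta,\psi}}<c}$ for the centered Gaussian noise vector, intersections of symmetric convex sets remain symmetric convex, so the two-set inequality extends to the whole family by induction --- is precisely what makes the product lower bound rigorous without any independence claim, so on this point your argument is stronger than the paper's. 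One simplification is available: since all your events are slabs rather than general symmetric convex bodies, you do not need Royen's theorem in full strength; the classical \v{S}id\'ak inequality for absolute values of jointly Gaussian variables, $\mathbb{P}\rb{\vb{X_1}\leq c_1,\dots,\vb{X_n}\leq c_n}\geq\prod_{i}\mathbb{P}\rb{\vb{X_i}\leq c_i}$, already delivers the stated product bound, whereas your fallback union bound would be rigorous but would replace the product form by $1-\sum_i \mathbb{P}\rb{K_i^c}$.
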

\begin{proof}
Theorem \ref{th:input_rec} assumes that Proposition \ref{prop:fold_det} and \ref{prop:Mb_det} hold with $p_{\mathsf{err}}=0$ for all filters $\psi_{\bbf{b},m}$ and $\psi_{\bbf{b},\bbf{b}^*}$. To calculate the overall error probability when this assumption is not true, we count the filters above, when used in reconstruction, as follows. There are a total of $\prod_{d=2}^D B_{\mathsf{max}}^d$ bands, and $K_{\mathsf{max}}^1$ samples along dimension $x_1$. Then the probability that Proposition \ref{prop:fold_det} holds for all filters $\psi_{\bbf{b},m}$ is $\rb{1-e^{-C^2}}^{K_{\mathsf{max}}^1\cdot \prod_{d=2}^D B_{\mathsf{max}}^d}\hspace{-0.2em}$.

Next, in the case of Proposition \ref{prop:Mb_det}, we bound the error probability as follows
\begin{gather}
    p_{\mathsf{err}}\leq e^{-\kappa^2}\leq e^{-\kappa_{\mathsf{min}}^2},\quad 
\kappa_{\mathsf{min}}=\tfrac{h/2-\sqb{\rb{\max_{d\in\cb{2,\dots,D}}T_d\Omega_d}e}^N\norm{f}_\infty}{\sigma\sqrt{2^{N+1}}}\cdot \sqrt{\tfrac{B^{D-2}}{T_{\mathsf{max}}^{D-2}}}.
\end{gather}
We note that we do not use all filters $\psi_{\bbf{b},\bbf{b}^*}$. Given that we use a set of samples that is contiguous along all dimensions, any band $\bbf{b}$ containing samples has at least one neighboring band $\bbf{b}^*\in\nbr$ that contains samples. Then, each constant $\widetilde{M}_{\bbf{b}}=M_{\bbf{b}}-M_{\bbf{0}}$ can be computed using a single evaluation of $\psi_{\bbf{b},\bbf{b}^*}$, which, in total, is evaluated $\prod_{d=2}^D B_{\mathsf{max}}^d-1$ times, and the theorem follows.
\end{proof}

\subsection{Comparison to ideal modulo recovery with Gaussian noise measurements}

The USF was not analysed in the presence of Gaussian noise, but rather on bounded noise \cite{Bhandari:2020:Ja,Bouis:2020:C}. However, USF can still be applied for recovery in the context of Gaussian noise, and the reconstruction would still be accurate in the instances when the noise sample with maximum amplitude satisfies the USF conditions. The modulo operator with Gaussian noise was considered before, but the main objective was denoising, rather than input reconstruction \cite{Fanuel:2021,Tyagi:2022:J}. In order to assess the advantage of the new $\MOh^D$ operator, we provide some insight on reconstruction via USF for \MD inputs. Specifically, we note that, for an input $f\in\pwd$ and a lattice $\bs{\Lambda}$, the ideal modulo output is decomposed as (see also Section \ref{subsect:USF})
\[
y\sqb{\mathbf{k}}=\gamma\sqb{\mathbf{k}}-\varepsilon_{\gamma}\sqb{\mathbf{k}}+\eta\sqb{\mathbf{k}}.
\]
The recovery method from \cite{Bouis:2020:C} involves a \emph{line-by-line} approach, meaning that the recovery is performed along dimension $k_1$, $\forall \bbf{k}=\sqb{k_2,\dots,k_D}$. By defining $y_{\bbf{k}}\sqb{k_1}=y\sqb{\mathbf{k}}$, $\gamma_{\bbf{k}}\sqb{k_1}=\gamma\sqb{\mathbf{k}}$, $\gamma_{\bbf{k}}\sqb{k_1}=\gamma\sqb{\mathbf{k}}$, $\varepsilon_{\gamma,\bbf{k}}\sqb{{k_1}}=\varepsilon_{\gamma}\sqb{\mathbf{k}}$, and $\eta_{\bbf{k}}\sqb{{k_1}}=\eta\sqb{\mathbf{k}}$, the recovery is performed by computing
\begin{align}
\begin{split}
    \inner{y_{\bbf{k}},\Delta^N\sqb{\cdot-m}}&=\inner{\gamma_{\bbf{k}},\Delta^N\sqb{\cdot-m}}-\inner{\varepsilon_{\gamma,\bbf{k}},\Delta^N\sqb{\cdot-m}}\\
    &+\inner{\eta_{\bbf{k}},\Delta^N\sqb{\cdot-m}}.
\end{split}
\label{eq:USF_filtering}
\end{align}
We remark that the processing in \eqref{eq:USF_filtering} is equivalent to applying filter $\psi_{\bbf{b},m}$ \eqref{eq:psibm} in the case of the multidimensional modulo-hysteresis operator when there is only one sample per band in all dimensions, i.e., $N_2^B=N_3^B=\dots=N_d^B=N^B=1$. 
Even though the noise here is not bounded, we can derive the condition when the USF would work for a specific noise instance, which is  \cite{Bhandari:2020:Ja} \[\vb{\inner{\gamma_{\bbf{k}},\Delta^N\sqb{\cdot-m}}+\inner{\eta_{\bbf{k}},\Delta^N\sqb{\cdot-m}}}\leq \rb{T_1\Omega_1 e}^N\norm{f}_\infty+\vb{\inner{\eta_{\bbf{k}},\Delta^N\sqb{\cdot-m}}}<\lambda.\]
for all $m \in \Z$. Thus, recovery is only guaranteed if the noise instance is bounded by
\begin{equation}
\vb{\inner{\eta_{\bbf{k}},\Delta^N\sqb{\cdot-m}}}<\lambda-\rb{T_1\Omega_1 e}^N\norm{f}_\infty.
\label{eq:USF_error}
\end{equation}
In a similar fashion to the derivation of \eqref{eq:noise3} it can be shown that $\inner{\eta_{\bbf{k}},\Delta^N\sqb{\cdot-m}}\sim \mathcal{N}\rb{0,\sigma^2\cdot 2^N}$. We remark that the standard deviation of $\inner{\eta_{\bbf{k}},\Delta^N\sqb{\cdot-m}}$ is always at least $\sigma\sqrt{2}$. Therefore, depending on the values of $T_1, \Omega_1$ and $\lambda$, the probability that \eqref{eq:USF_error} holds may be very low. Conversely, in the recovery with the proposed operator $\MOh^D$, $\inner{\eta_{\bbf{k}},\psi_{\bbf{b},m}}$ satisfies $\inner{\eta_{\bbf{k}},\psi_{\bbf{b},m}}\sim \mathcal{N}\rb{0,\sigma^2\cdot \frac{2^N}{N^B}}$ \eqref{eq:noise3}, where the standard deviation can be made arbitrarily small by increasing the number of samples $N^B$ within each band $\bbf{b}$. In Section \ref{sect:simulations} this fact will be exploited to achieve significantly higher recovery performance for the proposed operator $\MOh^D$ compared to ideal modulo $\MO$.

\section{Numerical study}
\label{sect:simulations}

Let $\mathbf{V}=\sqb{\mathbf{v}_1,\mathbf{v}_2}$ be a randomly generated matrix such that $\norm{\mathbf{v}_1}_2=\norm{\mathbf{v}_1}_2=1$. The input $f:\R^2\rightarrow \R$ was restricted to two variables $x_1, x_2$ for visualisation purposes, and was generated as $f\rb{\mathbf{x}}=f_0\rb{\mathbf{Vx}}$, where 
\begin{equation}
    f_0\rb{x_1,x_2}=\sum_{\mathbf{k}\in\Z^2} c_{\mathbf{k}} \frac{\sin\rb{\Omega_1\rb{x_1-k_1\frac{\pi}{\Omega_1}}}}{{\Omega_1\rb{x_1-k_1\frac{\pi}{\Omega_1}}}}\cdot\frac{\sin\rb{\Omega_2\rb{x_2-k_2\frac{\pi}{\Omega_2}}}}{{\Omega_2\rb{x_2-k_2\frac{\pi}{\Omega_2}}}}.
\end{equation}
\begin{figure}
\begin{center}
\includegraphics[width=0.76\textwidth]{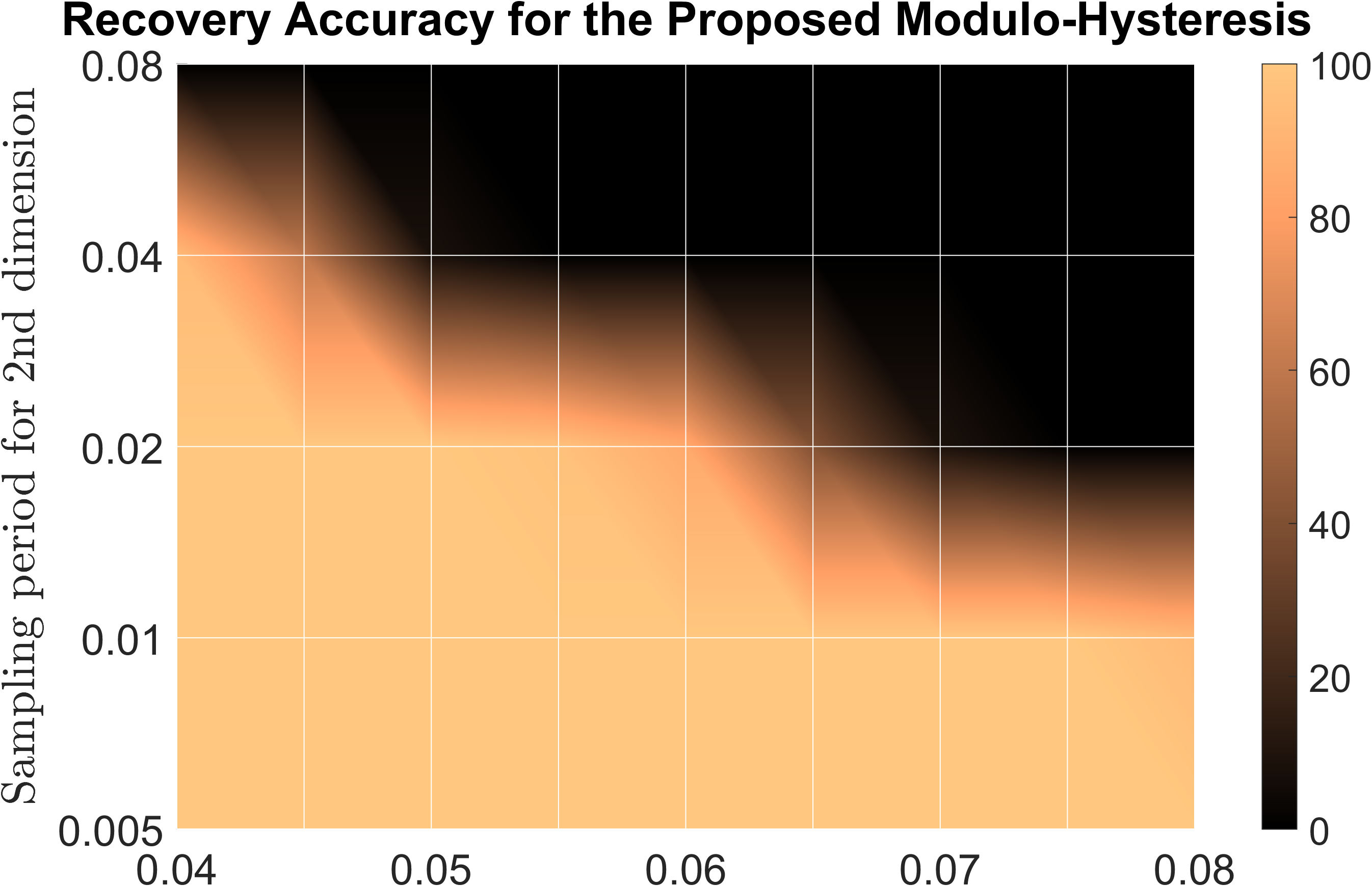}    
\vspace{0.7em}
\includegraphics[width=0.75\textwidth]{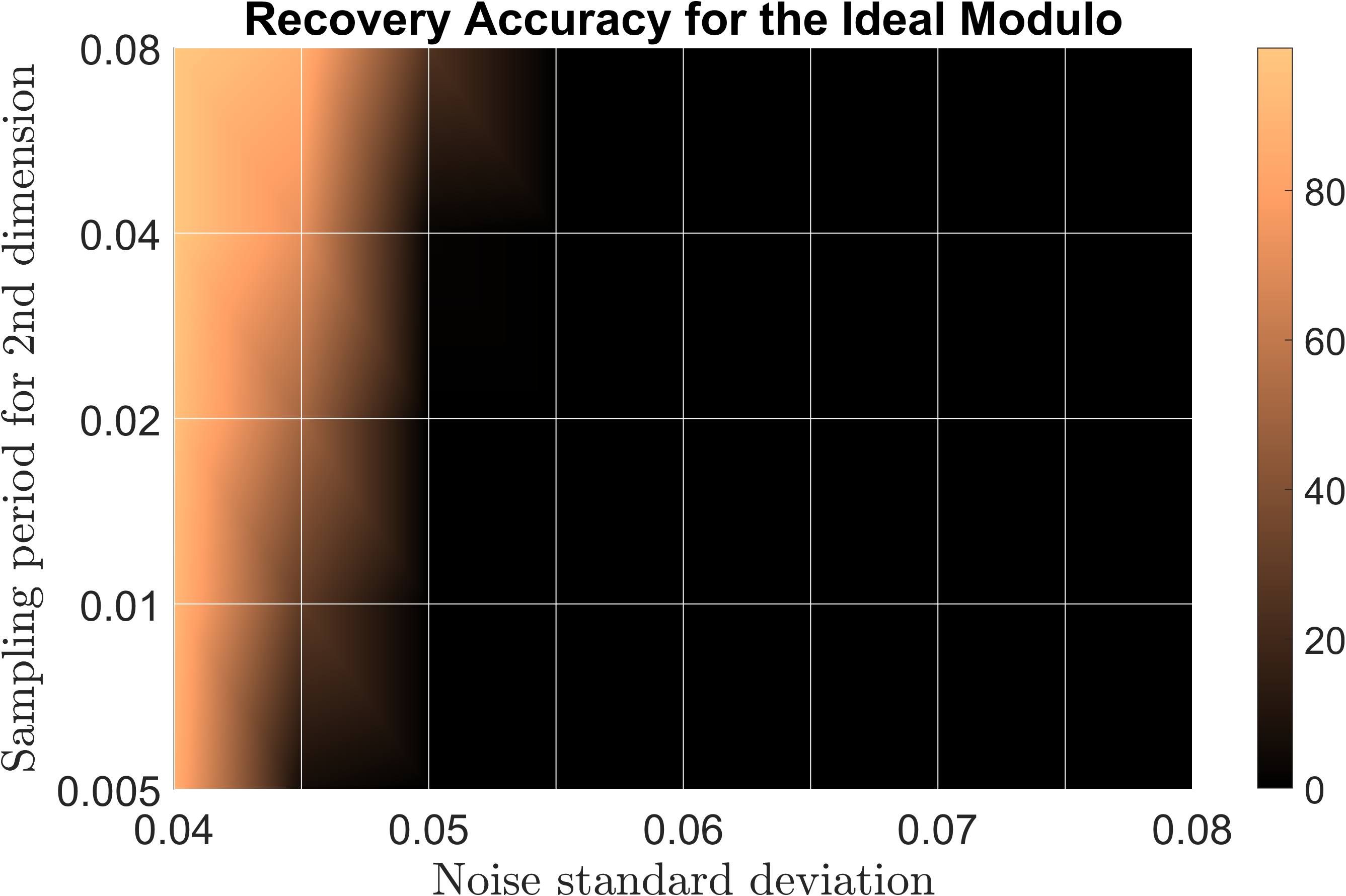}   
\end{center}
\caption{Reconstruction accuracy comparison for (a) the ideal modulo $\MO$ and (b) modulo-hysteresis $\MOh^D$ for a two-dimensional input.}
\label{fig:accuracy}
\end{figure}
\hspace{-0.35em}We selected $\Omega_1=\Omega_2=1$, and computed $f$ for $\mathbf{x}\in \sqb{-5,5}^2$. The coefficients $c_{\mathbf{k}}$ were randomly generated for $\vb{k_1}\leq1, \vb{k_2}\leq1$, drawn from the uniform distribution on $\sqb{-1,1}$. The dynamic range of $f$ is $\sqb{-1,1}$. We encoded $f$ using the ideal modulo $\MO$ with threshold $\lambda=0.3$ and the proposed modulo-hysteresis $\MOh^2$ with $\lambda=0.3, h=0.19, B=0.32$. The output samples, computed on lattice $\bs{\Lambda}$ with basis vectors $\mathbf{v}_1, \mathbf{v}_2$ and sampling periods $T_1,T_2$. We kept $T_1=0.02\ \mathrm{s}$ constant because its choice affects in a similar fashion recovery from $\MO$ and $\MOh^2$ (see \cite{Florescu:2022:J}). We varied $T_2$ in the range $\sqb{0.005\ s,0.08\ s}$. The output samples are
\begin{gather}
    y_1\sqb{\mathbf{k}}=\MO \rb{f\rb{\mathbf{VTk}}}+\eta\sqb{\mathbf{k}},\\
    y_2\sqb{\mathbf{k}}=\MOh^2 f\rb{\mathbf{VTk}}+\eta\sqb{\mathbf{k}},
\end{gather}
where $\eta\sqb{\mathbf{k}}\sim\mathcal{N}\rb{0,\sigma^2}$. 
We varied $\sigma$ in the range $\sqb{0.04,0.08}$, and recovered the input $\widetilde{\gamma}\sqb{\mathbf{k}}=f\rb{\mathbf{VTk}}+\eta\sqb{\mathbf{k}}$ up to a constant multiple of $2\lambda$ for $\MO$ and $2h$ for $\MOh^2$. We generated $100$ random inputs $f$ and $100$ noise sequences $\eta\sqb{\mathbf{k}}$ and counted the number of inputs correctly reconstructed using each method. In our context, correctly reconstructed means that the recovery conditions hold true. The results are depicted in \fig{fig:accuracy}. 
We note that, while the accuracy increases significantly for $\MOh^2$ for small $T_2$, as proven by Theorem \ref{th:noisy_input_rec}, for $\MO$ the reverse happens. This is because $\MO$ processes the input in a line-by-line fashion, and does not exploit in any way the higher resolution along dimension $x_2$. In fact, here a higher resolution simply adds more noise samples from sequence $\eta\sqb{\mathbf{k}}$, which are not filtered and thus increase the probability that \eqref{eq:USF_error} does not hold. We also note that, for larger sampling periods $T_2\sim0.08\ \mathrm{s}$,  $\MO$ performs slightly better for low noise, i.e.,  $\sigma<0.05$. This is explained by the fact that the modulo-hysteresis requirement $\vb{\inner{\eta_{\bbf{k}},\Delta^N\sqb{\cdot-m}}}<h/2-\rb{T_1\Omega_1 e}^N\norm{f}_\infty$ is more strict than \eqref{eq:USF_error} given that $h/2<\lambda$. This is a small trade-off that enables $\MOh^2$ to handle arbitrarily large values of $\sigma$ for small enough sampling periods $T_2$.

\section{Proofs}
\label{sect:proofs}

\subsection{Multi-Dimensional Modulo Properties}
\label{sect:proofs_prop}

\if\BdlSlicesInAppendix\FL
\else
    \begin{proof}[\textbf{Proof for Proposition \ref{prop:pw_for_slices} (Bandlimited slices)}]
    
    \end{proof}
\fi

\if\FoldSepInAppendix\FL
\else
    \begin{proof}[\textbf{Proof for Proposition \ref{prop:fold_sep_multid} (Folding time separation)}]
    
    \end{proof}
\fi

\if\WellDefInAppendix\FL
\else
    \begin{proof}[\textbf{Proof for Proposition \ref{prop:well_defined_op} (Well-defined operator)}]
    
    \end{proof}
\fi

\if\ModDynRangeInAppendix\FL
\else
    \begin{proof}[\textbf{Proof for Proposition \ref{prop:mod_dyn_range} (Modulo output dynamic range)}]
    
    \end{proof}
\fi

\if\BoundDelfBInAppendix\FL
\else
    \begin{proof}[\textbf{Proof for Corollary \ref{cor:bound_fB} (Bound for intra-band variation)}]
    
    \end{proof}
\fi

\subsection{Multi-Dimensional Modulo Recovery}
\label{sect:proofs_rec}
\if\FoldDetInAppendix\FL
\else
    \begin{proof}[\textbf{Proof for Proposition \ref{prop:fold_det} (Detection of folding times)}]
    
    \end{proof}
\fi

\if\MbDetInAppendix\FL
\else
    \begin{proof}[\textbf{Proof for Proposition \ref{prop:Mb_det} (Detection of constants $M_{\bbf{b}}$)}]
    
    \end{proof}
\fi

\if\InputRecCleanInAppendix\FL
\else
    \begin{proof}[\textbf{Proof for Theorem \ref{th:input_rec} (Noiseless input recovery)}]
    
    \end{proof}
\fi

\section{Conclusion}

The Unlimited Sampling Framework (USF) provides sampling rate guarantees that allow tackling high dynamic range signals in the \D case. For \MD signals, USF is typically applied sequentially, thus not exploiting the \MD structure of the input. 
In this paper, we 
\begin{itemize}
    \item introduced the first \MD modulo operator and associated input reconstruction method from lattice samples,
    \item derived sampling rate conditions under which the reconstruction is perfect in the noiseless scenario,
    \item provided probability error bounds under Gaussian noise assumption,
    \item showed numerically that, while USF does not allow noise amplitudes larger than the modulo threshold, the proposed approach allows arbitrarily high noise for sufficiently small sampling times.
\end{itemize}
This work can be extended in a number of ways
\begin{enumerate}
\item It can be coupled with modulo denoising approaches such as \cite{Tyagi:2022:J} to yield enhanced reconstruction algorithms.
\item While it is assumed that the input is bandlimited, this work can be extended for inputs generated with B-splines or sparse inputs.
\item The model in our work can be extended to a wider range of models such as modulo neuromorphic architectures, that could exploit \MD inputs in a similar way to the biological systems \cite{Gallego:2020} \cite{Florescu:2021:C}.
\item Alternative sampling mechanisms that would benefit from a \MD modulo operation include one-bit sampling \cite{Graf:2019:C} or average sampling \cite{Florescu:2022:C}.
\item The current line of work can lead to a the implementation of a new \MD hardware prototype.
\end{enumerate}

\end{document}